\lstdefinestyle{venturescript}{
  basicstyle=\ttfamily\scriptsize,
  columns=fullflexible,
  keepspaces=true,
  upquote=true,
  alsoletter={\.,\%,\#, \?, \/, \$},
  morekeywords=[1]{assume, if, else, define, observe, for_each, infer, cond,
    return, mem, mapM, run, mapv, sample, sample_all, \$,
    pair, simplex, categorical, minimal_subproblem, predict, arange, quote,
    plot, dict, mh, sample, sample_all, resimulate, resample, one,},
  keywordstyle=[1]\bfseries\textcolor{Brown},
  morekeywords=[2]{\#structure, \#hypers, /?structure, /?hypers},
  keywordstyle=[2]\bfseries\textcolor{OliveGreen},
  morekeywords=[3]{},
  keywordstyle=[3]\bfseries\textcolor{OliveGreen},
  showstringspaces=False,
  stringstyle=\ttfamily\color{NavyBlue},
  morestring=[b]{"},
  morestring=[b]{'},
  morecomment=[l]{//},
  commentstyle=\bfseries\color{black}\ttfamily,
}
\lstdefinestyle{dsledits}{
    language={[LaTeX]TeX},
    basicstyle=\ttfamily\footnotesize,
    escapeinside={(*@}{@*)},
    numbers=none,
    numberstyle=\normalsize,
    stepnumber=1,
    numbersep=3pt,
    breaklines=true,
    framesep=3pt,
    showstringspaces=false,
    keywordstyle=\itshape\color{blue},
    stringstyle=\color{maroon},
    commentstyle=\color{black},
    rulecolor=\color{black},
    xleftmargin=0pt,
    xrightmargin=0pt,
    aboveskip=\medskipamount,
    belowskip=\medskipamount,
    backgroundcolor=\color{white},
}
\algnewcommand{\LineComment}[1]{\State \(\triangleright\) #1}
\newcommand{\abs}[1]{\lvert#1\rvert}
\newcommand{\set}[1]{\left\lbrace#1\right\rbrace}
\newcommand{\ttpl}{\texttt{(}}
\newcommand{\ttpr}{\texttt{)}}
\newcommand{\Denotation}[2][]{#1\left\llbracket#2\right\rrbracket}
\newcommand{\Prior}{\mathsf{Prior}}
\newcommand{\Likelihood}{\mathsf{Lik}}
\newcommand{\Posterior}{\mathsf{Post}}
\newcommand{\ApproxPosterior}{\mathsf{ApproxPost}}
\newcommand{\Subexpr}{\mathsf{Subexpr}}
\newcommand{\Sever}{\mathsf{Sever}}
\newcommand{\GenDist}{\mathsf{Expand}}
\newcommand{\HasProperty}{\mathsf{HasProperty}}
\newcommand{\Venture}{\mathsf{Venture}}
\newcommand{\SeverTo}{\xrightarrow[\mathrm{sever}]{}}
\newcommand{\TransOper}{\mathcal{T}}
\newcommand{\TransOperX}[1]{\mathcal{T}(X, #1)}
\newcommand{\TransOperXa}[2]{\mathcal{T}(X, #1; #2)}
\newcommand{\aroot}{a_{\mathsf{root}}}
\newcommand{\Ehole}{E_{\mathsf{hole}}}
\newcommand{\Esev}{E_{\mathsf{sev}}}
\newcommand{\Esub}{E_{\mathsf{sub}}}
\newcommand{\paccept}{p_{\mathsf{accept}}}
\newcommand{\post}[1]{\Denotation[\Posterior]{#1}(X)}
\newcommand{\prior}[1]{\Denotation[\Prior]{#1}}
\newcommand{\lik}[1]{\Denotation[\Likelihood]{#1}(X)}
\newcommand{\subexpr}[2]{\Denotation[\Subexpr_{#2}]{#1}}
\newcommand{\sever}[2]{\Denotation[\Sever_{#2}]{#1}}
\newcommand{\gendist}[2]{\Denotation[\GenDist]{#1}(#2)}
\newcommand{\Covar}{\mathsf{Cov}}
\newcommand{\VKernel}{\mathsf{VentureCov}}
\newcommand{\VProg}{\mathsf{VentureProg}}
\newcommand{\DomReals}{\mathsf{Numeric}}
\newcommand{\DomKernel}{\mathsf{Kernel}}
\newcommand{\DomParameters}{\mathsf{Parameters}}
\newcommand{\DomPartition}{\mathsf{Partition}}
\newcommand{\DomBlock}{\mathsf{Block}}
\newcommand{\DomCluster}{\mathsf{Cluster}}
\newcommand{\DomVariable}{\mathsf{Variable}}
\newcommand{\DomDist}{\mathsf{Dist}}
\newcommand{\Lang}{\mathcal{L}}
\newcommand{\Reals}{\mathbb{R}}
\newcommand{\DataSpace}{\mathcal{X}}
\newcommand{\tik}{T_{ik}}
\newcommand{\rik}{R_{ik}}
\newcommand{\hik}{h_{ik}}
\theoremstyle{acmplain}
\newtheorem{theorem}{Theorem}[section]
\newtheorem{lemma}[theorem]{Lemma}
\theoremstyle{acmdefinition}
\newtheorem{condition}[theorem]{Condition}
\newtheorem{objective}[theorem]{Objective}
\newtheorem{definition}[theorem]{Definition}
\begin{document}

\title{Bayesian Synthesis of Probabilistic Programs for Automatic Data Modeling}
%





\author{Feras A.~Saad}
\orcid{nnnn-nnnn-nnnn-nnnn}
\affiliation{
  \department{Computer Science \& Artificial Intelligence Laboratory}
  \institution{Massachusetts Institute of Technology}
  \city{Cambridge}
  \state{MA}
  \postcode{02139}
  \country{USA}
}
\email{fsaad@mit.edu}

\author{Marco F.~Cusumano-Towner}
\orcid{nnnn-nnnn-nnnn-nnnn}
\affiliation{
  \department{Computer Science \& Artificial Intelligence Laboratory}
  \institution{Massachusetts Institute of Technology}
  \city{Cambridge}
  \state{MA}
  \postcode{02139}
  \country{USA}
}
\email{marcoct@mit.edu}

\author{Ulrich Schaechtle}
\orcid{nnnn-nnnn-nnnn-nnnn}
\affiliation{
  \department{Department of Brain \& Cognitive Sciences}
  \institution{Massachusetts Institute of Technology}
  \city{Cambridge}
  \state{MA}
  \postcode{02139}
  \country{USA}
}
\email{ulli@mit.edu}

\author{Martin C.~Rinard}
\orcid{nnnn-nnnn-nnnn-nnnn}
\affiliation{
  \department{Department of Electrical Engineering \& Computer Science}
  \institution{Massachusetts Institute of Technology}
  \city{Cambridge}
  \state{MA}
  \postcode{02139}
  \country{USA}
}
\email{rinard@csail.mit.edu}

\author{Vikash K.~Mansinghka}
\orcid{nnnn-nnnn-nnnn-nnnn}
\affiliation{
  \department{Department of Brain \& Cognitive Sciences}
  \institution{Massachusetts Institute of Technology}
  \city{Cambridge}
  \state{MA}
  \postcode{02139}
  \country{USA}
}
\email{vkm@mit.edu}

\renewcommand{\shortauthors}{Saad, Cusumano-Towner, Schaechtle, Rinard, and Mansinghka}

\authorsaddresses{}


\begin{abstract}
We present new techniques for automatically constructing probabilistic programs
for data analysis, interpretation, and prediction. These techniques work with
probabilistic domain-specific data modeling languages that capture key
properties of a broad class of data generating processes, using Bayesian
inference to synthesize probabilistic programs in these modeling languages given
observed data.
We provide a precise formulation of Bayesian synthesis for automatic data
modeling that identifies sufficient conditions for the resulting synthesis
procedure to be sound. We also derive a general class of synthesis algorithms
for domain-specific languages specified by probabilistic context-free grammars
and establish the soundness of our approach for these languages.
We apply the techniques to automatically synthesize probabilistic programs for
time series data and multivariate tabular data. We show how to analyze the
structure of the synthesized programs to compute, for key qualitative properties
of interest, the probability that the underlying data generating process
exhibits each of these properties. Second, we translate probabilistic programs
in the domain-specific language into probabilistic programs in Venture, a
general-purpose probabilistic programming system. The translated Venture
programs are then executed to obtain predictions of new time series data and new
multivariate data records.
Experimental results show that our techniques can accurately infer qualitative
structure in multiple real-world data sets and outperform standard data analysis
methods in forecasting and predicting new data.
\end{abstract}

\begin{CCSXML}
<ccs2012>
<concept>
<concept_id>10002950.10003648</concept_id>
<concept_desc>Mathematics of computing~Probability and statistics</concept_desc>
<concept_significance>300</concept_significance>
</concept>
<concept>
<concept_id>10002950.10003648.10003662.10003664</concept_id>
<concept_desc>Mathematics of computing~Bayesian computation</concept_desc>
<concept_significance>300</concept_significance>
</concept>
<concept>
<concept_id>10002950.10003648.10003670.10003677</concept_id>
<concept_desc>Mathematics of computing~Markov-chain Monte Carlo methods</concept_desc>
<concept_significance>300</concept_significance>
</concept>
<concept>
<concept_id>10002950.10003648.10003688.10003693</concept_id>
<concept_desc>Mathematics of computing~Time series analysis</concept_desc>
<concept_significance>300</concept_significance>
</concept>
<concept>
<concept_id>10002950.10003648.10003704</concept_id>
<concept_desc>Mathematics of computing~Multivariate statistics</concept_desc>
<concept_significance>300</concept_significance>
</concept>
<concept>
<concept_id>10003752.10003753.10003757</concept_id>
<concept_desc>Theory of computation~Probabilistic computation</concept_desc>
<concept_significance>300</concept_significance>
</concept>
<concept>
<concept_id>10003752.10010070.10010071.10010077</concept_id>
<concept_desc>Theory of computation~Bayesian analysis</concept_desc>
<concept_significance>300</concept_significance>
</concept>
<concept>
<concept_id>10003752.10010070.10010111.10010112</concept_id>
<concept_desc>Theory of computation~Data modeling</concept_desc>
<concept_significance>300</concept_significance>
</concept>
<concept>
<concept_id>10003752.10010124.10010131.10010133</concept_id>
<concept_desc>Theory of computation~Denotational semantics</concept_desc>
<concept_significance>300</concept_significance>
</concept>
<concept>
<concept_id>10011007.10011006.10011050.10011017</concept_id>
<concept_desc>Software and its engineering~Domain specific languages</concept_desc>
<concept_significance>300</concept_significance>
</concept>
<concept>
<concept_id>10010147.10010257</concept_id>
<concept_desc>Computing methodologies~Machine learning</concept_desc>
<concept_significance>300</concept_significance>
</concept>
</ccs2012>
\end{CCSXML}

\ccsdesc[300]{Mathematics of computing~Probability and statistics}
\ccsdesc[300]{Mathematics of computing~Bayesian computation}
\ccsdesc[300]{Mathematics of computing~Markov-chain Monte Carlo methods}
\ccsdesc[300]{Mathematics of computing~Time series analysis}
\ccsdesc[300]{Mathematics of computing~Multivariate statistics}
\ccsdesc[300]{Software and its engineering~Domain specific languages}
\ccsdesc[300]{Theory of computation~Probabilistic computation}
\ccsdesc[300]{Theory of computation~Bayesian analysis}
\ccsdesc[300]{Theory of computation~Data modeling}
\ccsdesc[300]{Theory of computation~Denotational semantics}
\ccsdesc[300]{Computing methodologies~Machine learning}

\keywords{probabilistic programming, Bayesian inference, synthesis, model discovery}

\maketitle


\section{Introduction}

Data analysis is an important and longstanding activity in many areas of natural
science, social science, and engineering~\citep{tukey1977,gelman2007}. Within
this field, probabilistic approaches that enable users to more accurately
analyze, interpret, and predict underlying phenomena behind their data are
rising in importance and prominence~\citep{murphy2012}.

A primary goal of modern data modeling techniques is to obtain an artifact
that explains the data. With current practice, this artifact takes the form of
either a set of parameters for a fixed model
\citep{nie1975,spiegelhalter1996,plummer2003,cody2005,hyndman2008,rasmussen2010,seabold2010,pedergosa2011,james2013}
or a set of parameters for a fixed probabilistic program structure
\citep{pfeffer2001,milch2007,goodman08,mccallum2009,wood14,goodman2014,carpenter2015,pfeffer2016,salvatier2016,tran2017,ge2018}.
With this approach, users manually iterate over multiple increasingly refined models
before obtaining satisfactory results. Drawbacks of this approach include the
need for users to manually select the model or program structure, the need for
significant modeling expertise, limited modeling capacity, and the potential for
missing important aspects of the data if users do not explore a wide enough
range of model or program structures.

In contrast to this current practice, we model the data with an ensemble of
probabilistic programs sampled from a joint space of program structures and
parameters. This approach eliminates the need for the user to select a specific
model structure, extends the range of data that can be easily and productively
analyzed, and (as our experimental results show) delivers artifacts that more
accurately model the observed data. A challenge is that this approach requires
substantially more sophisticated and effective probabilistic modeling and
inference techniques.

We meet this challenge by combining techniques from machine learning, program
synthesis, and programming language design. From machine learning we import the
Bayesian modeling framework and the Bayesian inference algorithms required to
express and sample from rich probabilistic structures. From program synthesis we
import the concept of representing knowledge as programs and searching program
spaces to derive solutions to technical problems. From programming language
design we import the idea of using domain-specific languages to precisely
capture families of computations. The entire framework rests on the foundation
of a precise formal semantics which enables a clear formulation of the synthesis
problem along with proofs that precisely characterize the soundness guarantees
that our inference algorithms deliver.

\subsection{Automated Data Modeling via Bayesian Synthesis of Probabilistic
Programs in Domain-Specific Languages}

We use Bayesian inference to synthesize ensembles of probabilistic programs
sampled from domain-specific languages given observed data. Each language is
designed to capture key properties of a broad class of data generating
processes. Probabilistic programs in the domain-specific language (DSL) provide
concise representations of probabilistic models that summarize the qualitative
and quantitative structure in the underlying data generating process. The
synthesized ensembles are then used for data analysis, interpretation, and
prediction.

We precisely formalize the problem of Bayesian synthesis of probabilistic
programs for automatic data modeling. To each expression in the domain-specific
language we assign two denotational semantics: one that corresponds to the prior
probability distribution over expressions in the DSL and another that
corresponds to the probability distribution that each DSL expression assigns to
new data. We provide sufficient conditions on these semantics needed for
Bayesian synthesis to be well-defined.
We outline a template for a broad class of synthesis algorithms and prove that
algorithms that conform to this template and satisfy certain preconditions are
sound, i.e., they converge asymptotically to the Bayesian posterior distribution
on programs given the data.

Our approach provides automation that is lacking in both statistical programming
environments and in probabilistic programming languages. Statistical programming
environments such as SAS~\citep{cody2005}, SPSS~\citep{nie1975} and
BUGS~\citep{spiegelhalter1996} require users to first choose a model family and
then to write code that estimates model parameters. Probabilistic programming
languages such as Stan~\citep{carpenter2015}, Figaro~\citep{pfeffer2016}, and
Edward~\citep{tran2017} provide constructs that make it easier to estimate
model parameters given data, but still require users to write probabilistic code
that explicitly specifies the underlying model structure. Our approach, in
contrast, automates model selection within the domain-specific languages.
Instead of delivering a single model or probabilistic program, it delivers
ensembles of synthesized probabilistic programs that, together, more accurately
model the observed data.

\subsection{Inferring Qualitative Structure by Processing Synthesized Programs}

In this paper we work with probabilistic programs for model families in two
domains: (i) analysis of univariate time series data via Gaussian
processes~\citep{rasmussen2006}; and (ii) analysis of multivariate tabular data
using nonparametric mixture models~\citep{mansinghka2016}.
The synthesized programs in the domain-specific language provide a compact model
of the data that make qualitative properties apparent in the surface syntax of
the program. We exploit this fact to develop simple program processing routines
that automatically extract these properties and present them to users, along
with a characterization of the uncertainty with which these properties are
actually present in the observed data.

For our analysis of time series data, we focus on the presence or absence of
basic temporal features such as linear trends, periodicity, and change points in
the time series. For our analysis of multivariate tabular data, we focus on
detecting the presence or absence of predictive relationships, which may be
characterized by nonlinear or multi-modal patterns.

\subsection{Predicting New Data by Translating Synthesized Programs to Venture}

In addition to capturing qualitative properties, we obtain executable versions
of the synthesized probabilistic programs specified in the DSL by translating
them into probabilistic programs specified in Venture \citep{mansinghka2014}.
This translation step produces executable Venture programs that define a
probability distribution over new, hypothetical data from a generative process
learned from the observed data and that can deliver predictions for new data.
Using Venture as the underlying prediction platform allows us to leverage the
full expressiveness of Venture's general-purpose probabilistic inference
machinery to obtain accurate predictions, as opposed to writing custom
prediction engines on a per-DSL basis.

For time series data analyzed with Gaussian processes, we use the Venture
programs to forecast the time series into the future. For multivariate tabular
data analyzed with mixture models, we use the programs to simulate
new data with similar characteristics to the observed data.

\subsection{Experimental Results}

We deploy the method to synthesize probabilistic programs and perform data
analysis tasks on several real-world datasets that contain rich probabilistic
structure.

For time series applications, we process the synthesized Gaussian process
programs to infer the probable existence or absence of temporal structures in
multiple econometric time series with widely-varying patterns. The synthesized
programs accurately report the existence of structures when they truly exist in
the data and the absence of structures when they do not. We also demonstrate
improved forecasting accuracy as compared to multiple widely-used statistical
and baselines, indicating that they capture patterns in the underlying data
generating processes.

For applications to multivariate tabular data, we show that synthesized programs report the
existence of predictive relationships with nonlinear and multi-modal characteristics
that are missed by standard pairwise correlation metrics. We also show that the
predictive distribution over new data from the probabilistic programs faithfully
represents the observed data as compared to simulations from generalized linear
models and produce probability density estimates that are orders of magnitude
more accurate than standard kernel density estimation techniques.

\subsection{Contributions}

This paper makes the following contributions:

\begin{itemize}[leftmargin=*]

\item {\bf Bayesian synthesis of probabilistic programs.} It introduces and
precisely formalizes the problem of Bayesian synthesis of probabilistic programs
for automatic data modeling in domain-specific data modeling languages.
It also provides sufficient conditions for Bayesian
synthesis to be well-defined. These conditions leverage the fact that
expressions in the DSL correspond to data models and are given two
denotational semantics: one that corresponds to the prior probability
distribution over expressions in the DSL and another that corresponds to the
probability distribution that each DSL expression assigns to new data.
It also defines a template for a broad class of synthesis algorithms based on
Markov chains and proves that algorithms which conform to this template and
satisfy certain preconditions are sound, i.e., they converge to
the Bayesian posterior distribution on programs given the data.

\item {\bf Languages defined by probabilistic context-free grammars.} It defines
a class of domain-specific data modeling languages defined by probabilistic
context-free grammars and identifies when these languages satisfy the sufficient
conditions for sound Bayesian synthesis. We also provide an algorithm for
Bayesian synthesis for this class of DSLs that conforms to the above template
and prove that it satisfies the soundness preconditions.

\item {\bf Example domain-specific languages for modeling time series and
multivariate data.} It introduces two domain-specific languages, each based on a
state-of-the-art model discovery technique from the literature on Bayesian
statistics and machine learning. These languages are suitable for modeling broad
classes of real-world datasets.

\item {\bf Applications to inferring qualitative structure and making
quantitative predictions.} It shows how to use the synthesized programs to (i)
infer the probability that qualitative structure of interest is present in the
data, by processing collections of programs produced by Bayesian synthesis, and
(ii) make quantitative predictions for new data.

\item {\bf Empirical results in multiple real-world domains.} It presents
empirical results for both inferring qualitative structure and predicting new
data. The results show that the qualitative structures match the real-world
data, for both domain specific languages, and can be more accurate than those
identified by standard techniques from statistics. The results also show that
the quantitative predictions are more accurate than multiple widely-used
baselines.

\end{itemize}


\section{Example}
\label{sec:example}


\begin{figure}[!htbp]
\centering

\begin{subfigure}[m]{.575\linewidth}
\begin{subfigure}{\linewidth}
\begin{lstlisting}[style=venturescript,frame=single]
// ** PRIOR OVER DSL SOURCE CODE **
assume get_hyper ~ mem((node) ~> {
  -log_logistic(log_odds_uniform() #hypers:node)
});
assume choose_primitive =  mem((node) ~> {
  base_kernel = uniform_discrete(0, 5) #structure:node;
  cond(
    (base_kernel == 0)(["WN",  get_hyper(pair("WN", node))]),
    (base_kernel == 1)(["C",   get_hyper(pair("C", node))]),
    (base_kernel == 2)(["LIN", get_hyper(pair("LIN", node))]),
    (base_kernel == 3)(["SE",  get_hyper(pair("SE", node))]),
    (base_kernel == 4)(["PER", get_hyper(pair("PER_l", node)),
                               get_hyper(pair("PER_t", node))]))
});
assume choose_operator = mem((node) ~> {
  operator_symbol ~ categorical(
    simplex(0.45, 0.45, 0.1), ["+", "*", "CP"])
    #structure:pair("operator", node);
  if (operator_symbol == "CP") {
    [operator_symbol, hyperprior(pair("CP", node)), .1]
  } else { operator_symbol }
});
assume generate_random_dsl_code = mem((node) ~> {
  cov = if (flip(.3) #structure:pair("branch", node)) {
    operator ~ choose_operator(node);
    [operator,
      generate_random_dsl_code(2 * node),
      generate_random_dsl_code((2 * node + 1))]
  } else { choose_primitive(node) };
  ["+", cov, ["WN", 0.01]]
});
assume dsl_source ~ generate_random_dsl_code(node:1);
\end{lstlisting}
\end{subfigure}

\begin{subfigure}[t]{\linewidth}
\begin{lstlisting}[style=venturescript,frame=single]
// ** TRANSLATING DSL CODE INTO VENTURE **
assume ppl_source ~ generate_venturescript_code(dsl_source);
assume gp_executable = venture_eval(ppl_source);
\end{lstlisting}
\end{subfigure}

\begin{subfigure}[t]{\linewidth}
\begin{lstlisting}[style=venturescript,frame=single]
// ** DATA OBSERVATION PROGRAM **
define xs = get_data_xs("./data.csv");
define ys = get_data_ys("./data.csv");
observe gp_executable(${xs}) = ys;
\end{lstlisting}
\end{subfigure}

\begin{subfigure}[t]{\linewidth}
\begin{lstlisting}[style=venturescript,frame=single]
// ** BAYESIAN SYNTHESIS PROGRAM **
resample(60);
for_each(arange(T), (_) -> {
  resimulate([|structure|], one, steps:100);
  resimulate([|hypers|], one, steps:100)})
\end{lstlisting}
\end{subfigure}
\begin{subfigure}[t]{\linewidth}
\begin{lstlisting}[style=venturescript,frame=single]
// ** PROCESSING SYNTHESIZED DSL CODE **
define count_kernels = (dsl, kernel) -> {
  if contains(["*", "+", "CP"], dsl[0]) {
    count_kernels(dsl[1], kernel) + count_kernels(dsl[2], kernel)
  } else { if (kernel == dsl[0]) {1} else {0} }
};
define count_operators = (dsl, operator) -> {
  if contains(["*", "+", "CP"], dsl[0]) {
    (if (operator == dsl[0]) {1} else {0})
      + count_operators(dsl[1], operator)
      + count_operators(dsl[2], operator)
  } else { 0 }
};
\end{lstlisting}
\end{subfigure}

\begin{subfigure}[t]{\linewidth}
\begin{lstlisting}[style=venturescript,frame=single]
// ** SAMPLING FROM VENTURE EXECUTABLE FOR PREDICTION **
define xs_test = get_data_xs("./data.test.csv");
define ys_test = get_data_ys("./data.test.csv");
define ys_pred = sample_all(gp_executable(${xs_test}));
\end{lstlisting}
\end{subfigure}
\end{subfigure}\hfill%
\begin{subfigure}[m]{.4\linewidth}

\subcaption{Observed Dataset}
\label{subfig:gp-tutorial-data}
\begin{lstlisting}[style=venturescript]
>> plot(xs, ys, "Year", "Passenger Volume")
\end{lstlisting}
\vspace{-.2cm}
\includegraphics[width=.95\linewidth]{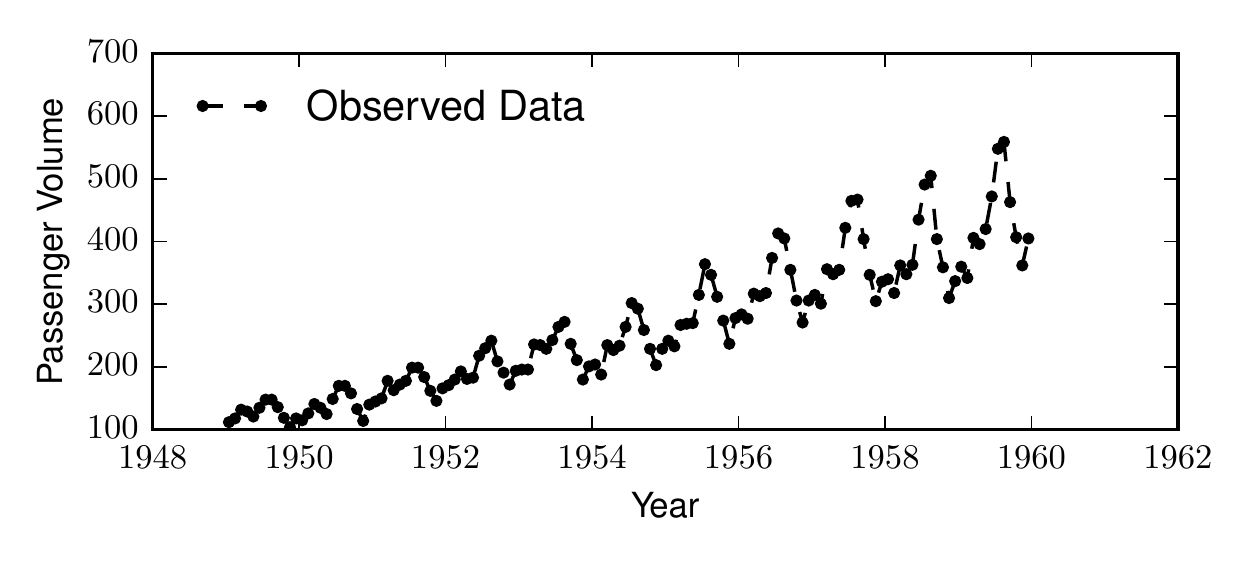}
\hrule

\subcaption{Synthesized DSL and PPL Programs}
\label{subfig:gp-tutorial-posterior}
\begin{lstlisting}[style=venturescript]
>> sample_all(dsl_source)[0]
['+',
  ['*',
    ['+', ['WN',49.5], ['C',250.9]],
    ['+', ['PER', 13.2, 8.6],
      ['+',
        ['LIN', 1.2],
        ['LIN', 4.9]]]],
  ['WN', 0.1]]
>> sample_all(ppl_source)[0]
make_gp(gp_mean_const(0.),
  ((x1,x2) -> {((x1,x2) -> {((x1,x2) ->
  {((x1,x2) ->
  if (x1==x2) {49.5} else {0})(x1,x2)
    + ((x1,x2) -> {250.9})(x1,x2)})(x1,x2)
    * ((x1,x2) -> {((x1,x2) -> {
    -2/174.2400*sin(2*pi/8.6*
      abs(x1-x2))**2})(x1,x2)
    + ((x1,x2) -> {((x1,x2) ->
    {(x1-1.2)*(x2-1.2)})(x1,x2)
    + ((x1,x2) ->
      {(x1-4.9)*(x2-4.9)})(x1,x2)})
    (x1,x2)})(x1,x2)})(x1,x2)
    + ((x1,x2) -> if (x1==x2) {0.1}
    else {0})(x1,x2)}))
\end{lstlisting}
\hrule

\subcaption{Processing Synthesized DSL Program}
\label{subfig:gp-tutorial-processing}
\begin{lstlisting}[style=venturescript]
// estimate probability that data has
// change point or periodic structure
>> mean(mapv((prog) -> {
    count_kernels(prog, "PER")
    + count_operators(prog, "CP") > 0
  }, sample_all(dsl_source)))
0.97
\end{lstlisting}
\hrule

\subcaption{Predictions from PPL Program}
\label{subfig:gp-tutorial-dynamic}
\begin{lstlisting}[style=venturescript]
>> plot(xs, ys, "Year", "Passenger Volume",
    xs_test, ys_test, ys_pred)
\end{lstlisting}
\vspace{-.2cm}
\includegraphics[width=.95\linewidth]{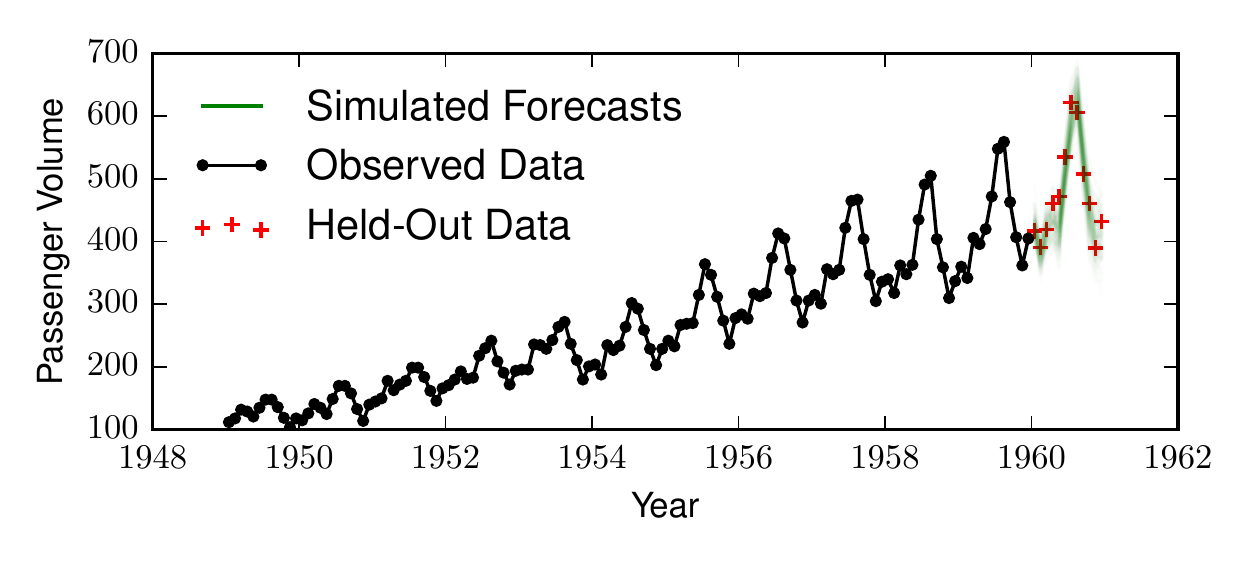}
\end{subfigure}
\captionsetup{skip=0pt}
\caption{Overview of Bayesian synthesis and execution of probabilistic programs
in the Gaussian process DSL.}
\label{fig:gp-tutorial}
\end{figure}

We next present an example that illustrates our synthesis technique and
applications to data interpretation and prediction.
Figure~\ref{subfig:gp-tutorial-data} presents a time series that plots
world-wide airline passenger volume from 1948 to 1960~\citep{box1976}. The
data contains the year and passenger volume data points. As is apparent from the
time series plot, the data contain several temporal patterns. Passenger volume
linearly increases over time, with periodic fluctuations correlated with the
time of year --- volume peaks in summer and falls in winter, with a small peak
around winter vacation.

\subsection{Gaussian Process Models}
\label{subsec:example-gp-introduction}

Gaussian processes (GPs) define a family of nonparametric regression models that
are widely used to model a variety of data~\citep{rasmussen2006}. It is possible
to use GPs to discover complex temporal patterns in univariate time
series~\citep{duvenaud2013}.
We first briefly review the Gaussian process, which places a prior distribution
over functions $f: \mathcal{X} \to \mathbb{R}$. In our airline passenger
example, $x \in \mathcal{X}$ is a time point and $f(x) \in \mathbb{R}$ is passenger
volume at time $x$. The GP prior can express both simple parametric forms, such
as polynomial functions, as well as more complex relationships dictated by
periodicity, smoothness, and time-varying functionals. Following notation of
\citet{rasmussen2006}, we formalize a GP
\mbox{$f \sim \mathsf{GP}(m, k)$} with mean function $m: \mathcal{X} \to
\mathcal{Y}$ and covariance function $k: \mathcal{X} \times \mathcal{X} \to
\mathbb{R}$ as follows: $f$ is a collection of random variables $\set{f(x): x
\in \mathcal{X}}$. Give time points $\set{x_1,\dots,x_n}$ the vector
of random variables $[f(x_1),\dots,f(x_n)]$ is
jointly Gaussian with mean vector $[m(x_1),\dots,m(x_n)]$ and covariance matrix
$[k(x_i,x_j)]_{1\le{i,j}\le{n}}$.

The prior mean is typically set to zero (as it can be absorbed by the
covariance). The functional form of the covariance $k$ defines essential
features of the unknown function $f$, allowing the GP to (i) fit
structural patterns in observed data, and (ii) make time series forecasts.
GP covariance functions (also called kernels) can be created by
composing a set of simple base kernels through sum, product, and change point
operators \citep[Section 4.2]{rasmussen2006}. We therefore define the following
domain-specific language for expressing the covariance function of a specific
GP:
\begin{align*}
K \in \DomKernel &\Coloneqq
    \ttpl \texttt{C}\; v \ttpr
    \mid \ttpl \texttt{WN}\; v \ttpr
    \mid \ttpl \texttt{SE}\; v \ttpr
    \mid \ttpl \texttt{LIN}\; v \ttpr
    \mid \ttpl \texttt{PER}\; v_1\; v_2 \ttpr && \mathrm{[BaseKernels]}\\
    &
    \mid \ttpl \texttt{+}\; K_1\; K_2 \ttpr
    \mid \ttpl \texttt{$\times$}\; K_1\; K_1 \ttpr
    \mid \ttpl \texttt{CP}\; v\; K_1\; K_2 \ttpr && \mathrm{[CompositeKernels]}
\end{align*}
The base kernels are constant (\texttt{C}), white noise (\texttt{WN}), squared
exponential (\texttt{SE}), linear (\texttt{LIN}), and periodic (\texttt{PER}).
Each base kernel has one or more numeric parameters $v$. For example,
\texttt{LIN} has an x-intercept and \texttt{PER} has a length scale and period.
The composition operators are sum ($+$), product ($\times$), and change point
($\texttt{CP}$, which smoothly transitions between two kernels at some
x-location).
Given a covariance function specified in this language, the predictive
distribution over airline passenger volumes $[f(x_1),\dots,f(x_n)]$ at time
points $(x_1,\dots,x_n)$ is a standard multivariate normal.

\subsection{Synthesized Gaussian Process Model Programs}
\label{subsec:example-gp-synthesized}

The first code box in Figure~\ref{fig:gp-tutorial} (lines 148-171) presents
Venture code that defines a prior distribution over programs in our
domain-specific Gaussian process  language. This code implements a probabilistic
context-free grammar that samples a specific program from the DSL. In the code
box, we have two representations of the program: (i) a program in the
domain-specific language (\texttt{dsl\_source}, line 169); and (ii) the
translation of this program into executable code in the Venture probabilistic
programming language (\texttt{ppl\_source}, line 171). When evaluated using
\texttt{venture\_eval}, the translated Venture code generates a stochastic
procedure that implements the Gaussian process model (\texttt{gp\_executable},
line 172).
Given the observed time points (\texttt{xs}, line 173 of the second code box in
Figure~\ref{fig:gp-tutorial}), the stochastic procedure \texttt{gp\_executable}
defines the probability of observing the corresponding passenger volume data
(\texttt{ys}, line 174) according to the GP.
This probability imposes a posterior distribution on the random choices made by
\texttt{generate\_random\_dsl\_code} (line 163) which sampled the
\texttt{dsl\_source}.
The next step is to perform Bayesian inference over these random choices to
obtain a collection of programs (approximately) sampled from this posterior
distribution. Programs sampled from the posterior tend to deliver a good fit
between the observed data and the Gaussian process. In our example we configure
Venture to sample 60 programs from the (approximately inferred) posterior (line
176).
The third code box in Figure~\ref{fig:gp-tutorial} actually performs the
inference, specifically by using a Venture custom inference strategy that
alternates between performing a Metropolis-Hastings step over the program
structure (base kernels and operators) and a Metropolis-Hastings step over the
program parameters (lines 177-179).


\subsection{Inferring Qualitative Structure by Processing Synthesized Programs}
\label{subsec:example-gp-processing}

Figure~\ref{subfig:gp-tutorial-posterior} presents one of the programs sampled
from the posterior. The syntactic structure of this synthesized program reflects
temporal structure present in the airline passenger volume data. Specifically,
this program models the data as the sum of linear and periodic components (base
kernels) along with white noise. Of course, this is only one of many programs
(in our example, 60 programs) sampled from the posterior. To analyze the
structure that these programs expose in the data, we check for
the existence of different base kernels and operators in each program (fourth
code box in Figure~\ref{fig:gp-tutorial} and
Figure~\ref{subfig:gp-tutorial-processing}). By averging across
the ensemble of synthesized program we
estimate the probable presence of each structure in the data. In our example,
all of the programs have white noise, 95\% exhibit a linear trend, 95\% exhibit
periodicity, and only 22\% contain a change point. These results indicate the
likely presence of both linear and periodic structure.

\subsection{Predicting New Data via Probabilistic Inference}
\label{subsec:example-gp-dynamic}

In addition to supporting structure discovery by processing the synthesized
domain-specific program (\texttt{dsl\_source}), it is also possible to
sample new data directly from the Venture executable (\texttt{gp\_executable})
obtained from the translated program (\texttt{ppl\_source}). In our example, this
capability makes it possible to forecast future passenger volumes. The final
code box in Figure~\ref{fig:gp-tutorial} presents one such forecast, which we
obtain by sampling data from the collection of 60 synthesized stochastic
procedures (the green line in Figure~\ref{subfig:gp-tutorial-dynamic} overlays
all the sampled predictions). The forecasts closely match the actual held-out
data, which indicates that the synthesized programs effectively capture
important aspects of the underlying temporal structure in the actual data.


\section{Bayesian Synthesis in Domain-Specific Data Modeling Languages}
\label{sec:bayesian-synthesis}


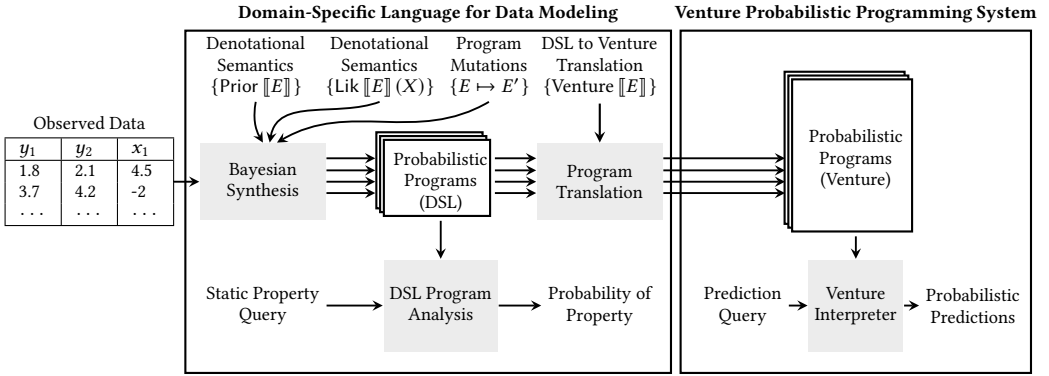
\begin{figure}[!t]

\scriptsize

\begin{tikzpicture}[thick]

\node[
  rectangle,
  draw=none,
  minimum height=1cm,
  align=center,
  text width = 1.5cm,
  fill=lightgray!30!white,
] (program-synthesis) {Bayesian\\Synthesis};

\node[
    rectangle,
    inner sep = 0pt,
    left = 0.20 of program-synthesis,
    text width=2.5cm,
    align=center,
    label={above:{Observed Data}},
] (data-table) {
    \begin{tabular}{|l|l|l|}\hline
    $y_1$ & $y_2$ & $x_1$ \\ \hline
    1.8 & 2.1 & 4.5 \\
    3.7 & 4.2 & -2 \\
    $\cdots$ & $\cdots$ & $\cdots$ \\ \hline
    \end{tabular}
};

\node[
  rectangle,
  draw=none,
  minimum height=1.2cm,
  align=center,
  above=1 of program-synthesis.north west,
  anchor=west,
] (denotation-prior) {Denotational\\Semantics\\$\set{\Denotation[\Prior]{E}}$};

\node[
  rectangle,
  draw=none,
  inner sep = 0pt,
  minimum height=1.2cm,
  align=center,
  right = .15 of denotation-prior.east,
  anchor=west,
] (denotation-likelihood) {Denotational\\Semantics\\$\set{\Denotation[\Likelihood]{E}(X)}$};

\node[
  rectangle,
  draw=none,
  inner sep = 0pt,
  minimum height=1.2cm,
  align=center,
  right=.15 of denotation-likelihood.east,
  anchor=west,
] (program-mutations) {Program\\Mutations\\$\set{E \mapsto E'}$};

\node[
  rectangle,
  align=center,
  minimum height = 1cm,
  text width = 1.25cm,
  right= .80 of program-synthesis.east,
  anchor=west,
  copy shadow={draw, fill=white, shadow xshift=-1.5mm, shadow yshift=1.5mm},
  copy shadow={draw, fill=white, shadow xshift=-1.0mm, shadow yshift=1.0mm},
  copy shadow={draw, fill=white, shadow xshift=-0.5mm, shadow yshift=0.5mm},
] (dsl-programs) {Probabilistic Programs\\(DSL)};

\node[
  rectangle,
  draw=none,
  minimum height=1cm,
  text width = 1.5cm,
  align=center,
  right= .55 of dsl-programs,
  fill=lightgray!30!white,
] (program-translation) {Program\\Translation};

\node[
  rectangle,
  align=center,
] (denotation-transformation)
  at (denotation-likelihood -| program-translation)
  {DSL to Venture\\Translation\\$\set{\Denotation[\mathrm{Venture}]{E}}$};

\node[
  rectangle,
  draw=none,
  minimum height=1.2cm,
  align=center,
  below= .5 of dsl-programs,
  fill=lightgray!30!white,
] (dsl-program-analysis) {DSL Program\\Analysis};

\node[
  rectangle,
  draw=none,
  minimum height=1.2cm,
  align=center,
] (dsl-property-query) at (program-synthesis |- dsl-program-analysis)
{Static Property\\Query};

\node[
  rectangle,
  draw=none,
  minimum height=1.2cm,
  align=center,
] (dsl-property-results) at (dsl-program-analysis -| program-translation)
{Probability of\\Property};

\node[
  rectangle,
  align=center,
  inner sep = 5pt,
  minimum height = 2cm,
  minimum width = 1.25cm,
  right=1.75 of program-translation.east,
  anchor=200,
  copy shadow={draw, fill=white, shadow xshift=-1.5mm, shadow yshift=1.5mm},
  copy shadow={draw, fill=white, shadow xshift=-1.0mm, shadow yshift=1.0mm},
  copy shadow={draw, fill=white, shadow xshift=-0.5mm, shadow yshift=0.5mm},
] (venture-programs) {Probabilistic\\Programs\\(Venture)};

\node[
  rectangle,
  draw=none,
  minimum height=1.2cm,
  align=center,
  fill=lightgray!30!white,
] (venture-interpreter)
  at (dsl-program-analysis -| venture-programs)
  {Venture\\Interpreter};

\node[
  rectangle,
  draw=none,
  minimum height=1.2cm,
  align=center,
  left= .25 of venture-interpreter
] (venture-predictive-query) {Prediction\\Query};

\node[
  rectangle,
  draw=none,
  minimum height=1.2cm,
  align=center,
  right= .20 of venture-interpreter
] (venture-result) {Probabilistic\\Predictions};

\node[
    rectangle,
    draw=black,
    above = 2 of program-synthesis.west,
    minimum height=4.5cm,
    text width= 6.3cm,
    anchor = north west,
    align=center,
    xshift=-0.2cm,
    label={\bfseries Domain-Specific Language for Data Modeling},
] (synthesis-system) {};

\node[
    rectangle,
    draw=black,
    right = 0.10 of synthesis-system.east,
    minimum height=4.5cm,
    text width = 4.5cm,
    anchor = west,
    align=center,
    label={\bfseries Venture Probabilistic Programming System},
] (venture-system) {};

\draw[-stealth, line width=0.25mm] ([xshift=-.15cm]data-table.east) to (program-synthesis.west);
\draw[-stealth, line width=0.25mm] ([yshift=.15cm]denotation-prior.south) [out=260,in=120] to (program-synthesis.90);
\draw[-stealth, line width=0.25mm] ([yshift=.15cm]denotation-likelihood.south) [out=200,in=80] to (program-synthesis.80);
\draw[-stealth, line width=0.25mm] ([yshift=.15cm]program-mutations.south) [out=210,in=30] to (program-synthesis.70);

\draw[-stealth, line width=0.25mm] (dsl-programs) -- (dsl-program-analysis);
\draw[-stealth, line width=0.25mm] (dsl-property-query) -- (dsl-program-analysis);
\draw[-stealth, line width=0.25mm] (dsl-program-analysis) -- (dsl-property-results);

\coordinate (ps-c0) at (program-synthesis.20 -| dsl-programs.west);
\coordinate (ps-c1) at (program-synthesis.10 -| dsl-programs.west);
\coordinate (ps-c2) at (program-synthesis.west -| dsl-programs.west);
\coordinate (ps-c3) at (program-synthesis.-10 -| dsl-programs.west);
\draw[-stealth, line width=0.25mm] (program-synthesis.20) -- ([xshift=-0.15cm]ps-c0);
\draw[-stealth, line width=0.25mm] (program-synthesis.10) -- ([xshift=-0.15cm]ps-c1);
\draw[-stealth, line width=0.25mm] (program-synthesis.east) -- ([xshift=-0.15cm]ps-c2);
\draw[-stealth, line width=0.25mm] (program-synthesis.-10) -- ([xshift=-0.15cm]ps-c3);

\coordinate (ds-c0) at (ps-c0 -| dsl-programs.east);
\coordinate (ds-c1) at (ps-c1 -| dsl-programs.east);
\coordinate (ds-c2) at (ps-c2 -| dsl-programs.east);
\coordinate (ds-c3) at (ps-c3 -| dsl-programs.east);
\draw[-stealth, line width=0.25mm] (ds-c0) -- (ds-c0 -| program-translation.west);
\draw[-stealth, line width=0.25mm] (ds-c1) -- (ds-c1 -| program-translation.west);
\draw[-stealth, line width=0.25mm] (ds-c2) -- (ds-c2 -| program-translation.west);
\draw[-stealth, line width=0.25mm] (ds-c3) -- (ds-c3 -| program-translation.west);

\draw[-stealth] (denotation-transformation) -- (program-translation);

\coordinate (ps-c0) at (ds-c0 -| program-translation.east);
\coordinate (ps-c1) at (ds-c1 -| program-translation.east);
\coordinate (ps-c2) at (ds-c2 -| program-translation.east);
\coordinate (ps-c3) at (ds-c3 -| program-translation.east);
\coordinate (vs-c0) at  (ps-c0 -| venture-interpreter.west);
\coordinate (vs-c1) at (ps-c1 -| venture-interpreter.west);
\coordinate (vs-c2) at (ps-c2 -| venture-interpreter.west);
\coordinate (vs-c3) at (ps-c3 -| venture-interpreter.west);
\draw[-stealth, line width=0.25mm] (ps-c0) -- ([xshift=-0.30cm]vs-c0);
\draw[-stealth, line width=0.25mm] (ps-c1) -- ([xshift=-0.30cm]vs-c1);
\draw[-stealth, line width=0.25mm] (ps-c2) -- ([xshift=-0.30cm]vs-c2);
\draw[-stealth, line width=0.25mm] (ps-c3) -- ([xshift=-0.30cm]vs-c3);

\draw[-stealth, line width=0.25mm] (venture-programs) -- (venture-interpreter);
\draw[-stealth, line width=0.25mm] (venture-predictive-query) -- (venture-interpreter);
\draw[-stealth, line width=0.25mm] (venture-interpreter) -- (venture-result);

\end{tikzpicture}

\smallskip
\caption{Components of Bayesian synthesis of probabilistic programs for
automatic data modeling.}
\label{fig:synthesis-diagram}
\end{figure}

We next present a general framework for Bayesian synthesis in
probabilistic domain-specific languages. We formalize the probabilistic DSL
$\Lang$ as a countable set of strings, where an expression $E \in \Lang$
represents the structure and parameters of a family of statistical models
specified by the DSL (such as the space of all Gaussian process models from
Section~\ref{sec:example}).
We associate $\Lang$ with a pair of denotational semantics $\Prior$ and
$\Likelihood$ that describe the meaning of expressions $E$ in the language:

\begin{itemize}
\item The ``prior'' semantic function $\Prior : \Lang \to (0,1]$, where
$\Denotation[\Prior]{E}$ is a positive real number describing the probability
that a given random process generated $E$.

\item The ``likelihood'' semantic function $\Likelihood : \Lang \to (\DataSpace
\to \Reals_{\ge 0})$, where $\Denotation[\Likelihood]{E}$ is a probability
function over a data space $\DataSpace$, where each item $X \in \DataSpace$ has a relative
probability $\Denotation[\Likelihood]{E}(X)$.
\end{itemize}

These two semantic functions are required to satisfy three technical conditions.

\begin{condition}[Normalized Prior]\label{cond:prior-normalized}
$\Prior$ defines a valid probability distribution over $\Lang$:
\begin{align}
\sum_{E \in \Lang} \Denotation[\Prior]{E} = 1.
\end{align}

If $\Lang$ is finite then this condition can be verified directly by summing the
values. Otherwise if $\Lang$ is countably infinite then a more careful analysis
is required (see Section~\ref{subsec:pcfg-prior}).
\end{condition}

\begin{condition}[Normalized Likelihood]\label{cond:likelihood-normalized}
For each $E \in \Lang$, $\Denotation[\Likelihood]{E}$ is a probability
distribution (for countable data) or density (for continuous data) over
$\mathcal{X}$:
\begin{align}
\forall E \in \Lang. \begin{cases}
\sum_{X \in \DataSpace} \Denotation[\Likelihood]{E}(X) = 1
  & \textrm{(if $\DataSpace$ is a countable space)} \\
\int_{X \in \DataSpace} \Denotation[\Likelihood]{E}(X)\mu(dX) = 1
  & \textrm{(if $\DataSpace$ is a general space with base measure $\mu$)}.
\end{cases}
\end{align}
\end{condition}

\begin{condition}[Bounded Likelihood]\label{cond:likelihood-bounded}
$\Likelihood$ is bounded, and is non-zero for some expression $E$:
\begin{align}
\forall X \in \DataSpace.\;
  0 < c^{\max}_X \Coloneqq \sup
  \set{\Denotation[\Likelihood]{E}(X) \mid E \in \Lang} < \infty.
\end{align}
\end{condition}

For each $X \in \DataSpace$, define $c_X \Coloneqq \sum_{E \in \Lang}
\Denotation[\Likelihood]{E}(X) \cdot \Denotation[\Prior]{E}$ to be the marginal
probability of $X$ (which is finite by Conditions~\ref{cond:prior-normalized}
and \ref{cond:likelihood-bounded}). When the semantic functions satisfy the
above conditions, they induce a new semantic function $\Posterior : \Lang \to
\DataSpace \to \Reals_{\ge{0}}$, called the posterior distribution:
\begin{align}
\Denotation[\Posterior]{E}(X) &\Coloneqq
  (\Denotation[\Likelihood]{E}(X)\; \Denotation[\Prior]{E}) / c_X.
\end{align}

\begin{lemma}
Let $\Lang$ be a language whose denotational semantics $\Prior$ and
$\Likelihood$ satisfy Conditions~\ref{cond:prior-normalized},
\ref{cond:likelihood-normalized}, and \ref{cond:likelihood-bounded}. For each $X
\in \DataSpace$, the function $\lambda E.\Denotation[\Posterior]{E}(X)$ is a
probability distribution over $\Lang$.
\end{lemma}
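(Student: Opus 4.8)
The plan is to verify the two defining properties of a probability distribution over the countable set $\Lang$: that every value $\post{E}$ is nonnegative and that these values sum to one. Both reduce to elementary manipulations once the normalizing constant $c_X$ is shown to be a well-defined, strictly positive, finite real number, so I would organize the argument by first pinning down $c_X$ and only then reading off the two properties.

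Nonnegativity is immediate from the types of the semantic functions: $\prior{E} \in (0,1]$ is strictly positive, and $\lik{E} \ge 0$ since $\Likelihood$ has codomain $\DataSpace \to \Reals_{\ge 0}$ (equivalently, $\Denotation[\Likelihood]{E}$ is a distribution or density by Condition~\ref{cond:likelihood-normalized}). Hence, provided $c_X > 0$, the quotient $\post{E} = (\lik{E}\,\prior{E})/c_X$ is nonnegative.

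The finiteness and positivity of $c_X$ is where the technical conditions do their work, and I expect this to be the only step requiring care rather than a genuine obstacle. For finiteness I would bound each likelihood value by its supremum using Condition~\ref{cond:likelihood-bounded}, $\lik{E} \le c^{\max}_X$, and factor this bound out of the sum: $c_X = \sum_{E \in \Lang} \lik{E}\,\prior{E} \le c^{\max}_X \sum_{E \in \Lang} \prior{E} = c^{\max}_X < \infty$, where the penultimate equality is Condition~\ref{cond:prior-normalized} and the final bound is again Condition~\ref{cond:likelihood-bounded}. For positivity, the same condition gives $c^{\max}_X > 0$, so the supremum is approached arbitrarily closely by some expression; in particular there exists $E^\star \in \Lang$ with $\lik{E^\star} > 0$, and since $\prior{E^\star} > 0$ as well, the single term $\lik{E^\star}\,\prior{E^\star}$ already witnesses $c_X > 0$.

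With $c_X \in (0,\infty)$ in hand, I would finish by computing the total mass directly. Since every summand is nonnegative, the constant factor $1/c_X$ can be pulled out of the (possibly countably infinite) sum without any convergence subtlety, yielding $\sum_{E \in \Lang} \post{E} = (1/c_X)\sum_{E \in \Lang}\lik{E}\,\prior{E} = c_X/c_X = 1$. The only point worth flagging is the countably infinite case, where I would note that scaling and rearranging a series of nonnegative terms is unconditionally valid once the sum is known to be finite, so recognizing the remaining series as exactly $c_X$ needs no appeal to absolute convergence beyond the finiteness established above.
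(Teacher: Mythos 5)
Your proof is correct and follows essentially the same route as the paper, whose entire proof is the one-line computation $\sum_{E \in \Lang}\Denotation[\Posterior]{E}(X) = c_X/c_X = 1$. The additional care you take in establishing $0 < c_X < \infty$ is sound and welcome, but the paper handles that point in the prose immediately preceding the lemma (where $c_X$ is defined and asserted finite by Conditions~\ref{cond:prior-normalized} and \ref{cond:likelihood-bounded}) rather than inside the proof itself.
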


\begin{proof}
Fix $X$. Then $\sum_{E \in \Lang}\Denotation[\Posterior]{E}(X) =
\sum_{E \in \Lang} (\Denotation[\Likelihood]{E}(X)\; \Denotation[\Prior]{E}) /
c_X = c_X / c_X = 1.$
\end{proof}

The objective of Bayesian synthesis can now be stated.

\begin{objective}[Bayesian Synthesis]\label{objective:bayesian-synthesis}
Let $\Lang$ be a language whose denotational semantics $\Prior$ and
$\Likelihood$ satisfy Conditions~\ref{cond:prior-normalized},
\ref{cond:likelihood-normalized}, and \ref{cond:likelihood-bounded}. Given a
dataset $X \in \DataSpace$, generate expressions $E$ with probability
$\Denotation[\Posterior]{E}(X)$.
\end{objective}

Figure~\ref{fig:synthesis-diagram} shows the main components of our approach.
Assuming we can achieve the objective of Bayesian synthesis, we now outline how
to use synthesized probabilistic programs in the DSL for (i) inferring
qualitative statistical structure, by using simple program analyses on the
program text; and (ii) predicting new data by translating them into executable
Venture probabilistic programs.

\subsection{Inferring Probabilities of Qualitative Properties in Synthesized DSL
Programs}
\label{subsec:bayesian-syntheis-qualitative}

After synthesizing an ensemble of $n$ DSL expressions $\set{E_1,\dots,E_n}$
according to the posterior probability distribution given $X$, we can
obtain insight into the learned models by processing the
synthesized DSL expressions.
In this paper, a typical query has the signature $\HasProperty: \Lang \to
\set{0,1}$ that checks whether a particular property holds in a given DSL
program. In other words, $\Denotation[\HasProperty]{E}$ is 1 if the $E$ reflects
the property, and 0 otherwise. We can use the synthesized programs to form an
unbiased estimate of the posterior probability of a property:
\begin{align}
\label{eq:prob-has-property}
\Pr\set{\Denotation[\HasProperty]{E} \mid X}
  \approx \frac{1}{n}\sum_{i=1}^{n} \Denotation[\HasProperty]{E_i}.
\end{align}

\subsection{Forming Predictions by Translating DSL Programs into Venture
Programs}
\label{subsec:bayesian-syntheis-predictions}

In our formalism, each DSL expression $E \in \Lang$ is a probabilistic program
that represents the structure and parameters of a family of statistical models
specified by the domain-specific language.
In this paper, we translate probabilistic programs $E$ in the DSL into new
probabilistic programs $\Denotation[\Venture]{E}$ specified in Venture
\citep{mansinghka2014}.
A main advantage of translating DSL programs into Venture programs is that we
can reuse general-purpose inference machinery in Venture \citep{mansinghka2018}
to obtain accurate predictions, as opposed to writing custom interpreters for
data prediction on a per-DSL basis.
Sections \ref{sec:dsl-time-series} and \ref{sec:dsl-crosscat} show two concrete
examples of how DSL programs can be translated into Venture programs and how to
use inference programming in Venture to obtain predictions on new data.

\subsection{Markov Chain Monte Carlo Algorithms for Bayesian Synthesis}
\label{subsec:bayesian-syntheis-mcmc}

We next describe Markov chain Monte Carlo (MCMC) techniques to
randomly sample expressions $E \in \Lang$ with probability that approximates
$\post{E}$ to achieve Objective~\ref{objective:bayesian-synthesis}.
In particular, we employ a general class of MCMC sampling algorithms
(Algorithm~\ref{alg:mcmc}) that first obtains an expression $E_0 \in \Lang$ such
that $\lik{E} > 0$ and then iteratively generates a sequence of expressions
$E_1, \ldots, E_n \in \Lang$.
The algorithm iteratively generates $E_i$ from $E_{i-1}$ using a DSL-specific
\emph{transition operator} $\TransOper$, which takes as input expression $E \in
\Lang$ and a data set $X$, and stochastically samples an expression $E'$ with
probability denoted $\TransOperX{E \to E'}$, where $\sum_{E' \in \Lang}
\TransOperX{E \to E'} = 1$ for all $E \in \Lang$.
The algorithm returns the final expression $E_n$.
Implementing Bayesian synthesis for a data-modeling DSL $\Lang$ requires
implementing three procedures:
\begin{enumerate}
\item $\textproc{generate-expression-from-prior}$, which generates expression
  $E$ with probability $\prior{E}$.
\item $\textproc{evaluate-likelihood}(X, E)$, which evaluates $\lik{E}$.
\item $\textproc{generate-new-expression}(X, E)$, which generates expression
$E'$ from expression $E$ with probability $\TransOperX{E \to E'}$.
\end{enumerate}

\begin{algorithm}[H]
\caption{Template of Markov chain Monte Carlo algorithm for Bayesian synthesis.}
\label{alg:mcmc}
\begin{algorithmic}[1]
  \Procedure{bayesian-synthesis}{$X$, $\TransOper$, $n$}
    \Do
      \State $E_0 \sim \textproc{generate-expression-from-prior}()$
      \doWhile{$\lik{E_0} = 0$}
      \For{$i=1\ldots n$}
          \State $E_i \sim \textproc{generate-new-expression}(X, E_{i-1}))$
      \EndFor
      \State \Return $E_n$
  \EndProcedure
\end{algorithmic}
\end{algorithm}

%
Using MCMC for synthesis allows us to rigorously characterize soundness conditions.
We now describe three conditions on the operator $\TransOper$ that are sufficient
to show that Algorithm~\ref{alg:mcmc} generates expressions from arbitrarily
close approximation to the Bayesian posterior distribution on expressions
$\post{E}$.
In Section~\ref{subsec:pcfg-mcmc} we show how to construct such an operator for
a class of DSLs generated by probabilistic context-free grammars and prove that
this operator satisfies the conditions.

\begin{condition}[Posterior invariance]\label{cond:invariance}
If an expression $E$ is sampled from the posterior distribution and a new
expression $E' \in \Lang$ is sampled with probability $\TransOperX{E \to E'}$,
then $E'$ is also a sample from the posterior distribution:
\begin{align*}
  \sum_{E \in \Lang} \post{E} \cdot \TransOperX{E \to E'} = \post{E'}.
\end{align*}
\end{condition}

\begin{condition}[Posterior irreducibility]\label{cond:irreducibility}
Every expression $E'$ with non-zero likelihood is reachable from every
expression $E \in \Lang$ in a finite number of steps.
That is, for all pairs of expressions $E \in \Lang$ and $E' \in \{\Lang :
\lik{E'} > 0\}$ there exists an integer $n \ge 1$ and a sequence of expressions
$E_1, E_2, \ldots, E_n$ where $E_1 = E$ and $E_n = E'$ such that
$\TransOperX{E_{i-1} \to E_i} > 0$ for all $i \in \{2, \ldots n\}$.
\end{condition}

\begin{condition}[Aperiodicity]\label{cond:aperiodicity}
There exists some expression $E \in \Lang$ such that the transition operator has
a non-zero probability of returning to the same expression,
i.e.~$\TransOperX{E \to E} > 0$.
\end{condition}

We now show that Algorithm~\ref{alg:mcmc} gives asymptotically correct results
provided these three conditions hold.
First, we show that it is possible to obtain an expression $E_0 \in \Lang$ where
$\lik{E_0} > 0$ using a finite number of invocations of
\textproc{generate-expression-from-prior}.

\begin{lemma}
The do-while loop of Algorithm~\ref{alg:mcmc} will terminate with probability 1,
and the expected number of iterations in the do-while loop is at most
$c_X^{\max} / c_X$.
\end{lemma}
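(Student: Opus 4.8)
The plan is to recognize that the body of the do-while loop draws an expression independently from the prior on each pass and exits exactly when that draw has positive likelihood, so the number of iterations is a geometric waiting time. First I would define the per-draw success probability
\begin{align*}
p \Coloneqq \sum_{E \in \Lang \,:\, \lik{E} > 0} \prior{E},
\end{align*}
the probability that a single call to $\textproc{generate-expression-from-prior}$ returns an $E_0$ with $\lik{E_0} > 0$. Because successive draws are i.i.d., the iteration count $N$ satisfies $\Pr\{N = k\} = (1-p)^{k-1}p$, so provided $p > 0$ the loop terminates with probability $1$ and $\mathbb{E}[N] = 1/p$.

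Next I would establish $p > 0$. Condition~\ref{cond:likelihood-bounded} gives $c^{\max}_X > 0$, so the supremum of $\lik{E}$ over $\Lang$ is positive and there is at least one expression $E^\star$ with $\lik{E^\star} > 0$. Since the prior semantic function maps into $(0,1]$, every expression—in particular $E^\star$—has strictly positive prior mass, whence $p \ge \prior{E^\star} > 0$. The same $E^\star$ also shows $c_X \ge \lik{E^\star}\,\prior{E^\star} > 0$, so the ratio $c^{\max}_X / c_X$ is finite (finiteness of $c_X$ itself having already been noted when $\Posterior$ was defined).

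Finally, the quantitative bound reduces to showing $c_X \le p \cdot c^{\max}_X$, which is equivalent to $1/p \le c^{\max}_X / c_X$. The key step is that the zero-likelihood terms drop out of $c_X$, and each surviving factor $\lik{E}$ is bounded by the supremum $c^{\max}_X$:
\begin{align*}
c_X = \sum_{E \in \Lang} \lik{E}\,\prior{E}
    = \sum_{E \in \Lang \,:\, \lik{E} > 0} \lik{E}\,\prior{E}
    \le c^{\max}_X \sum_{E \in \Lang \,:\, \lik{E} > 0} \prior{E}
    = c^{\max}_X \cdot p.
\end{align*}
Dividing through by $p\,c_X > 0$ gives $\mathbb{E}[N] = 1/p \le c^{\max}_X / c_X$, as claimed.

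I do not expect a genuine obstacle: the argument is a short coupling of a geometric waiting time with a crude sup-bound on the likelihood. The only point requiring care is ensuring $p > 0$, so that the geometric distribution and the reciprocal $1/p$ are well-defined; this is precisely where the strict positivity of the prior and the non-degeneracy half of Condition~\ref{cond:likelihood-bounded} ($c^{\max}_X > 0$) are jointly used.
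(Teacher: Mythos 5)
Your proof is correct and follows essentially the same route as the paper's: identify the iteration count as a geometric random variable with success probability $p = \sum_{E} \prior{E}\,\mathbb{I}[\lik{E} > 0]$, and bound $p \ge c_X / c^{\max}_X$ by replacing each $\lik{E}$ with the supremum $c^{\max}_X$ on the positive-likelihood terms. Your explicit verification that $p > 0$ (via the existence of some $E^\star$ with $\lik{E^\star} > 0$ and the strict positivity of the prior) is a small point the paper leaves implicit, but it does not change the argument.
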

\begin{proof}
The number of iterations of the loop is geometrically distributed with mean $p$,
given by:
\begin{align*}
  p &= \sum_{E \in \Lang} \prior{E} \cdot \mathbb{I}[\lik{E} > 0]
  = \frac{1}{c_X^{\max}}
    \sum_{E \in \Lang} \prior{E} \cdot c^{\max}_{X} \cdot
      \mathbb{I}[\lik{E} > 0]\\
  &\ge \frac{1}{c_X^{\max}} \sum_{E \in \Lang} \prior{E} \cdot \lik{E}
  = \frac{c_X}{c_X^{\max}}.
\end{align*}
Therefore, the expected number of iterations of the do-while loop is at most
$1/p = c_X^{\max} / c_X < \infty$.
\end{proof}

\noindent We denote the probability that Algorithm~\ref{alg:mcmc} returns
expression $E$, given that it started with expression $E_0$ by
$\Denotation[\ApproxPosterior{_{E_0}^1}]{E}(X)$, and define the $n$-step
probability inductively:
\begin{align*}
    \Denotation[\ApproxPosterior{_{E_0}^1}]{E}(X)
      & \Coloneqq \TransOperX{E_0 \to E}\\
    \Denotation[\ApproxPosterior{_{E_0}^n}]{E}(X)
      & \Coloneqq \sum_{E' \in \Lang} \TransOperX{E' \to E}
        \cdot \Denotation[\ApproxPosterior{_{E_0}^{n-1}}]{E'}(X) && (n > 1).
\end{align*}

We can now state a key convergence theorem, due to \citet{tierney1994}.

\begin{theorem}[Convergence of MCMC \citep{tierney1994}]
\label{thm:mcmc-convergence}
If Condition~\ref{cond:invariance} and Condition~\ref{cond:irreducibility} and
Condition~\ref{cond:aperiodicity} hold for some language $\Lang$, transition
operator $\TransOper$, and data $X \in \DataSpace$, then
\begin{align*}
  \forall E_0, E \in \Lang.\;
  \lik{E_0} > 0 \implies
    \lim_{n \to \infty}
    \Denotation[\ApproxPosterior{_{E_0}^n}]{E}(X) \to \post{E}.
\end{align*}
\end{theorem}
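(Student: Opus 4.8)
The plan is to recognize the statement as an instance of the fundamental ergodic theorem for time-homogeneous Markov chains on a countable state space, and to reduce the proof to checking the standard hypotheses of that theorem --- irreducibility, aperiodicity, and positive recurrence --- so that the convergence result of \citet{tierney1994} applies directly. The operator $\TransOper$ defines a Markov chain on the countable set $\Lang$ whose $n$-step transition probabilities are precisely $\Denotation[\ApproxPosterior{_{E_0}^n}]{E}(X)$, and Condition~\ref{cond:invariance} says that $\lambda E.\post{E}$ is a stationary distribution for this chain. Because $\Lang$ may be countably infinite, irreducibility and aperiodicity are not enough on their own; I will also need positive recurrence, and the existence of a genuine stationary \emph{probability} distribution (established by the earlier lemma) is what supplies it.

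The step I expect to be the main obstacle is showing that the support $S \Coloneqq \set{E \in \Lang : \post{E} > 0}$ is closed under $\TransOper$, so that a chain begun inside $S$ never leaves it and the restriction of the chain to $S$ is itself a Markov chain to which the ergodic theorem applies. First observe that $S = \set{E : \lik{E} > 0}$, since $\prior{E} > 0$ for every $E$ (the prior takes values in $(0,1]$) and $c_X > 0$. To prove closure, fix $E' \notin S$, so $\post{E'} = 0$; posterior invariance gives $\sum_{E \in \Lang} \post{E} \cdot \TransOperX{E \to E'} = \post{E'} = 0$. Each summand is nonnegative, so every term vanishes, and in particular $\TransOperX{E \to E'} = 0$ for all $E \in S$ (where $\post{E} > 0$). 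Hence no positive-probability transition leaves $S$.

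With closure established, I would verify the ergodic hypotheses on $S$. By Condition~\ref{cond:irreducibility} every $E' \in S$ is reachable from every $E \in \Lang$, and since closure forces the connecting paths to stay inside $S$, the restricted chain is irreducible on the single communicating class $S$. Condition~\ref{cond:aperiodicity} supplies a state with a positive self-transition; as period is a class property, irreducibility propagates aperiodicity to all of $S$. Finally $\lambda E.\post{E}$ restricted to $S$ is a stationary probability distribution, and an irreducible countable-state chain admitting a stationary probability distribution is positive recurrent with that law as its unique stationary distribution; invoking the cited theorem of \citet{tierney1994} then gives $\lim_{n \to \infty} \Denotation[\ApproxPosterior{_{E_0}^n}]{E}(X) = \post{E}$ for all $E_0, E \in S$. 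The remaining case $E \notin S$ is immediate: $\post{E} = 0$, and by closure the $n$-step probability of reaching $E$ from any $E_0 \in S$ is identically zero, so both sides agree. Since $\lik{E_0} > 0$ is exactly the condition $E_0 \in S$, this covers every admissible starting point.
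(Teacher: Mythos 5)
The paper does not actually prove this theorem: it is stated as a known result and attributed to \citet{tierney1994}, with Conditions~\ref{cond:invariance}, \ref{cond:irreducibility}, and \ref{cond:aperiodicity} serving only as hypotheses to be verified for each concrete transition operator. Your proposal therefore goes further than the paper, supplying a self-contained derivation from the classical ergodic theorem for countable-state chains, and the derivation is essentially correct. The reduction is the standard one, and the step you single out as the main obstacle --- closure of the support $S = \set{E : \lik{E} > 0}$ under $\TransOper$, deduced from posterior invariance by noting that a sum of nonnegative terms equal to $\post{E'} = 0$ forces every term to vanish --- is exactly the right observation. It both makes the restricted chain on $S$ a genuine irreducible chain (the positive-probability paths guaranteed by Condition~\ref{cond:irreducibility} cannot exit $S$, so by induction every intermediate state lies in the communicating class) and disposes of the case $E \notin S$, where both sides of the limit are zero. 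The identification $S = \set{E : \post{E} > 0}$ via $\prior{E} > 0$ and $c_X > 0$, and the use of the existence of a stationary probability distribution to obtain positive recurrence on a countably infinite state space, are likewise correct and necessary.

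One small caveat. Condition~\ref{cond:aperiodicity} as stated only asserts that \emph{some} $E \in \Lang$ satisfies $\TransOperX{E \to E} > 0$; it does not require that this $E$ lie in $S$. Your argument that aperiodicity propagates to all of $S$ because the period is a class property needs the self-looping state to belong to the communicating class $S$; if the only such state were outside the support, the restricted chain could still be periodic and the pointwise limit could fail (one would obtain only Ces\`aro convergence to $\post{E}$). This is a looseness in the hypothesis rather than in your strategy --- in the paper's instantiation (Lemma~\ref{lem:mcmc-pcfg-aperiodic}) the self-transition probability is positive for \emph{every} expression, so the issue never arises --- but a fully rigorous write-up should either take the aperiodic state to have $\lik{E} > 0$ or note explicitly that this is what is being used.
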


\section{Bayesian Synthesis for Domain-Specific Languages Defined By
  Probabilistic Context-Free Grammars}
\label{sec:pcfg}

Having described the general framework for Bayesian synthesis in general
domain-specific languages, we now focus on the class of domain-specific
languages which are generated by a probabilistic context-free grammar (PCFG)
over symbolic expressions (s-expressions).
We begin by describing the formal model of the grammar and the languages it can
produce. We then outline a default denotational semantics $\Prior$, derive a
general MCMC algorithm for all languages in this class, and prove that the
synthesis technique converges to the posterior distribution over expressions.

\subsection{Tagged Probabilistic Context-Free Grammars with Random Symbols}
\label{subsec:pcfg-definition}

Probabilistic context-free grammars are commonly used models for constructing
languages~\citep{jelinek1992}.
We describe a special type of PCFG, called a tagged PCFG with random symbols, by
extending the standard definition in two ways: (i) we require the grammar to
produce s-expressions containing a unique phrase tag for each production rule,
which guarantees that the grammar is unambiguous; and (ii) we allow each
non-terminal to specify a probability distribution over symbols in the alphabet.
The model is formally described below.

\begin{definition}[Tagged probabilistic context-free grammar with random symbols]
\label{def:tagged-pcfg}

A tagged probabilistic context-free grammar with random symbols is a tuple
$G=(\Sigma, N, R, T, P, Q, S)$ where

\begin{itemize}
\item $\Sigma$ is a finite set of terminal symbols.

\item $N \Coloneqq \set{N_1, \dots, N_m}$ is a finite set of non-terminal symbols.

\item $R \Coloneqq \set{R_{ik} \mid i=1,\dots,m; k=1,\dots,r_i}$ is a set of
production rules, where $\rik$ is the $k\textsuperscript{th}$ production
rule of non-terminal $N_i$.
Each production rule $\rik$ is a tuple of the form
\begin{align}
\rik &\Coloneqq (N_i, \tik, \tilde{N}_1\cdots\tilde{N}_{\hik}),
\label{eq:production-rule}
\end{align}
where $\hik \ge 0$ is the number of non-terminals on the right-hand side of $\rik$
and each $\tilde{N}_j \in N$ $(j=1,2,\dots,\hik)$ is a non-terminal symbol.
If $\hik = 0$ then $\tilde{N}_1\cdots\tilde{N}_{\hik} = \epsilon$ is the empty
string and we call $\rik$ a \textit{non-recursive} production rule. Otherwise it
is a \textit{recursive} rule.

\item $T \Coloneqq \set{T_{ik} \mid i=1,\dots,m; k=1,\dots,r_i}$
 is a set of phrase tag symbols, disjoint from $N$, where $\tik$ is a
unique symbol identifying the production rule $\rik$.

\item $P : T \to (0,1]$ is a map from phrase tag symbols to their probabilities,
where $P(\tik)$ is the probability that non-terminal $N_i$ selects its
$k\textsuperscript{th}$ production rule $\rik$. For each non-terminal $N_i$, the
probabilities over its production rules sum to unity
$\sum_{k=1}^{r_i}P(\tik) = 1$.

\item $Q : T \times \Sigma \to [0,1]$ is a map from phrase tags and terminal
symbols to probabilities, where $Q(\tik, s)$ is the probability that production
rule $\rik$ of non-terminal $N_i$ draws the terminal symbol $s \in
\Sigma$. For each tag $\tik$, the probabilities over symbols sum to unity
$\sum_{s \in \Sigma}Q(\tik, s) = 1$.

\item $S \in N$ is a designated start symbol.
\end{itemize}

We additionally assume that grammar $G$ is \textit{proper}: production rules
must be cycle-free, and there are no useless symbols in $\Sigma \cup N \cup T$
\citep{nijholt1980}.
\end{definition}

%
The production rules from Eq~\eqref{eq:production-rule} define the rewriting
rules that describe how to syntactically replace a non-terminal $N_i$ with a
tagged s-expression.
%
%
We now describe how an evaluator uses these production rules to generate tagged
s-expressions according to the probabilities $P$ and $Q$.
The big-step sampling semantics for this evaluator are shown below, where the
notation $N_i \Downarrow^p_G E$ means that starting from non-terminal $N_i$, the
evaluator yielded expression $E$ with $p$ being the total probability of all the
phrase tags and terminal symbols in the generated s-expression.
\begin{align*}
\textrm{\footnotesize[Sampling: Non-Recursive Production Rule]} &&
\textrm{\footnotesize[Sampling: Recursive Production Rule]} &&
\\
\infer
  {N_i \Downarrow^{P(\tik)Q(\tik,s)\phantom{\prod_{z=1}^{\hik}{p_z}}}_G
     \hspace{-2.5em} (\tik\; s)}
  {(N_i, \tik, \epsilon) \in R,\, s \in \Sigma,\, Q(\tik,s) > 0}
&&
\infer
  {N_i \Downarrow^{P(\tik)\prod_{z=1}^{\hik}{p_z}}_G (\tik\; E_1\; \dots\; E_{\hik})}
  {(N_i, \tik, \tilde{N}_1 \cdots \tilde{N}_{\hik}) \in R,\;
    \tilde{N}_1 \Downarrow^{p_1}_G E_1,\; \dots,\;
    \tilde{N}_{\hik} \Downarrow^{p_{\hik}}_{G} E_j}
\end{align*}

In words, when the evaluator encounters a non-terminal symbol $N_i$, it chooses
a production rule $\rik$ with probability $P(\tik)$.
If the selected production rule is non-recursive, the evaluator randomly samples
a symbol $s \in \Sigma$ with probability $Q(\tik, s)$, and returns an
s-expression starting with $\tik$ followed by the random symbol.
Otherwise, if the selected production rule is recursive, the evaluator
recursively evaluates all the constituent non-terminals and returns an
s-expression starting with $\tik$ followed by all the evaluated
sub-expressions.
Note that each evaluation step yields an s-expression where the first element is
a phrase tag $\tik$ that unambiguously identifies the production rule $\rik$
that was selected to produce the expression. Hence, every expression maps
uniquely to its corresponding parse tree simply by reading the phrase tags
\citep{turbak2008}. As a result, the probability of any expression under its
sampling semantics is unambiguous which is essential for the soundness
properties established in Section~\ref{subsec:pcfg-mcmc}.

Finally, we let $\Lang(G, N_i)$ denote the set of all strings that can be
yielded starting from non-terminal $N_i$ $(i = 1,\dots,m)$, according to the
sampling semantics $\Downarrow_G$. The \textit{language} $\Lang(G)$ generated by
$G$ is the set of all strings derivable from the start symbol $S$, so that
$\Lang(G) \Coloneqq \Lang(G, S)$. Conceptually, for a probabilistic
domain-specific language $\Lang$ specified by tagged PCFGs, terminal symbols $s
\in \Sigma$ are used to represent the program parameters and tag symbols
$t \in T$ represent the program structure.

\subsection{A Default Prior Semantics with the Normalization Property}
\label{subsec:pcfg-prior}

Having described the big-step sampling semantics for expressions $E \in \Lang$
generated by a tagged probabilistic context-free grammar $G$, we next describe
the ``prior'' denotational semantics of an expression, given by the semantic
function $\Prior : \Lang(G) \to (0,1]$.
To aid with the construction, we first introduce some additional notation.
%
%
%
%
Define the semantic function %
$\GenDist : \Lang(G) \to N \to [0, 1]$ which takes an expression and a
non-terminal symbol, and returns the probability that the non-terminal evaluates
to the given expression:
\begin{align*}
\gendist{(\tik \; s)}{N_i}
  &\Coloneqq P(\tik) \cdot Q(\tik,s) \\
\gendist{(\tik \; E_1 \; \cdots \; E_{\hik})} {N_i}
  & \Coloneqq P(\tik) \cdot \textstyle\prod_{z=1}^{\hik} \gendist{E_z}{\tilde{N}_{z}}\\
  &\; \mbox{where } \rik = (\tik, N_i, \tilde{N}_1, \dots, \tilde{N}_{\hik}),
\end{align*}
for $i=1,\dots,n$ and $k=1,\dots,r_i$.

\begin{lemma}
For each non-terminal $N_i$ and for all expressions $E \in \Lang(G, N_i)$,
we have
\begin{align*}
\gendist{E}{N_i} = p \mbox{ if and only if } N_i \Downarrow^p_G E.
\end{align*}
\end{lemma}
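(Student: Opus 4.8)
The plan is to proceed by structural induction on the expression $E$, exploiting the fact that both $\GenDist$ and the big-step judgment $\Downarrow_G$ are defined by the same case analysis on the syntactic shape of $E$. The essential enabling fact, noted just above the statement, is that the head phrase tag $\tik$ of any $E \in \Lang(G, N_i)$ uniquely identifies the production rule $\rik$ that generated it, and hence determines both the defining clause of $\gendist{E}{N_i}$ and the unique inference rule of $\Downarrow_G$ applicable at the root. This unambiguity is what lets me treat the two semantics as the same recursion read in two directions and establish the biconditional in a single induction.

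For the base case, $E = (\tik\; s)$ for some terminal $s \in \Sigma$, arising from a non-recursive rule $(N_i, \tik, \epsilon)$. Here $\gendist{(\tik\; s)}{N_i} = P(\tik) \cdot Q(\tik, s)$ by definition, while the non-recursive sampling rule derives $N_i \Downarrow^{P(\tik)Q(\tik,s)}_G (\tik\; s)$. Since $E \in \Lang(G, N_i)$ guarantees the side condition $Q(\tik, s) > 0$, both directions of the iff hold with $p = P(\tik)Q(\tik,s)$.

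For the inductive case, $E = (\tik\; E_1 \cdots E_{\hik})$ arises from a recursive rule $\rik = (N_i, \tik, \tilde{N}_1 \cdots \tilde{N}_{\hik})$. The tag $\tik$ fixes the right-hand nonterminals $\tilde{N}_1, \dots, \tilde{N}_{\hik}$, and each subexpression $E_z$ lies in $\Lang(G, \tilde{N}_z)$, so the induction hypothesis applies: $\gendist{E_z}{\tilde{N}_z} = p_z$ iff $\tilde{N}_z \Downarrow^{p_z}_G E_z$. For the $(\Leftarrow)$ direction, a derivation $N_i \Downarrow^p_G E$ must end in the recursive rule, giving $p = P(\tik)\prod_{z=1}^{\hik} p_z$ with $\tilde{N}_z \Downarrow^{p_z}_G E_z$; applying the hypothesis to replace each $p_z$ by $\gendist{E_z}{\tilde{N}_z}$ and comparing with the defining equation yields $p = \gendist{E}{N_i}$. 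The $(\Rightarrow)$ direction runs the same identities backward, using the hypothesis to lift each factor $\gendist{E_z}{\tilde{N}_z}$ to a subderivation and assembling them with the recursive rule.

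I expect the only delicate point, rather than a genuine obstacle, to be justifying that the recursion is well-founded and that the rule choice is forced: making precise that the head tag determines a unique rule, so that $\GenDist$ is well-defined and exactly one $\Downarrow_G$ rule fires, and that membership $E \in \Lang(G, N_i)$ supplies the positivity side conditions needed for the operational derivations to exist. Both facts follow from the uniqueness of phrase tags in $T$ and from properness of $G$; once these are invoked, the two clauses of $\GenDist$ and the two inference rules of $\Downarrow_G$ align term-by-term and the induction closes.
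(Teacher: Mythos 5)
Your proof is correct and takes the same route as the paper, whose entire proof is the single line ``By structural induction on the s-expression $E$''; you have simply carried out that induction explicitly, including the correct observation that the unique phrase tag $\tik$ forces both the defining clause of $\GenDist$ and the applicable $\Downarrow_G$ rule to coincide.
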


\begin{proof}
By structural induction on the s-expression $E$.
\end{proof}

Recalling that $S$ is the start symbol of $G$, we define the prior semantics
\begin{equation}
\prior{E} \Coloneqq \gendist{E}{S}.
\label{ref:define-prior-gendist}
\end{equation}

Having established the correspondence between the sampling semantics and
denotational semantics, we next provide necessary and sufficient conditions on
$G$ that are needed for $\Prior$ to be properly normalized as required by
Condition~\ref{cond:prior-normalized}. We begin with the following definition:

\begin{definition}[Consistency \citep{booth1973}]
\label{def:pcfg-consistency}
A probabilistic context-free grammar $G$ is consistent if the probabilities
assigned to all words derivable from $G$ sum to 1.
\end{definition}

The following result of \citet{booth1973} plays a key role in our construction.

\begin{theorem}[Sufficient Condition for Consistency \citep{booth1973}]
\label{thm:pcfg-consistency}
A proper probabilistic context-free grammar $G$ is consistent if the largest
eigenvalue (in modulus) of the expectation matrix of $G$ is less than 1.
\end{theorem}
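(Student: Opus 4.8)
The plan is to interpret the random generation process defined by the big-step sampling semantics $\Downarrow_G$ as a multi-type Galton--Watson branching process, in which each non-terminal symbol $N_i$ is a particle ``type'' and expanding $N_i$ via a chosen production rule produces as offspring exactly the non-terminals appearing on the right-hand side of that rule. Consistency in the sense of Definition~\ref{def:pcfg-consistency} is precisely the statement that this process becomes extinct---i.e., the derivation terminates, yielding a finite word---with probability $1$. The expectation matrix of $G$ is the $m \times m$ non-negative matrix $M$ whose entry $M_{ij}$ records the expected number of occurrences of $N_j$ generated when $N_i$ is expanded, namely $M_{ij} = \sum_{k=1}^{r_i} P(\tik)\, c_{ik}(j)$, where $c_{ik}(j)$ counts the occurrences of $N_j$ among $\tilde N_1 \cdots \tilde N_{\hik}$ in rule $\rik$. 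Properness of $G$ (cycle-freeness and absence of useless symbols) guarantees these expectations are finite and the process is well defined.

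First I would track the expected number of non-terminals at each depth of the derivation tree. Letting $\mathbf{z}_n \in \Reals_{\ge 0}^m$ be the vector whose $i$-th component is the expected number of instances of $N_i$ at depth $n$, the start symbol $S$ gives $\mathbf{z}_0 = \mathbf{e}_S$, and a single round of expansion yields the linear recurrence $\mathbf{z}_{n+1} = M^{\mathsf T} \mathbf{z}_n$, so that $\mathbf{z}_n = (M^{\mathsf T})^n \mathbf{e}_S$. The expected total number of non-terminal expansions over the entire (possibly infinite) derivation is therefore
\[
  \sum_{n=0}^{\infty} \mathbf{1}^{\mathsf T} \mathbf{z}_n
    = \mathbf{1}^{\mathsf T} \Bigl( \sum_{n=0}^{\infty} (M^{\mathsf T})^n \Bigr) \mathbf{e}_S .
\]

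Next I would use the spectral hypothesis to show this sum is finite. Because $M$ is non-negative, its largest eigenvalue in modulus coincides with its spectral radius $\rho(M)$, which is strictly less than $1$ by hypothesis; by Gelfand's formula $\rho(M) < 1$ is equivalent to $(M^{\mathsf T})^n \to 0$ geometrically, so the Neumann series $\sum_{n \ge 0}(M^{\mathsf T})^n$ converges to the finite matrix $(I - M^{\mathsf T})^{-1}$. Hence the displayed expected total is a finite real number. The concluding step is a standard measure-theoretic observation: a non-negative integer-valued random variable---here, the total number of nodes in the random derivation tree---with finite expectation is finite almost surely. Consequently the derivation terminates with probability $1$, the generated word is finite with probability $1$, and the probabilities of all derivable words sum to $1$, which is exactly consistency.

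I expect the principal technical obstacle to be the passage from the spectral-radius bound to convergence of the matrix power series and its identification with $(I - M^{\mathsf T})^{-1}$: this is where one must invoke the spectral theory of non-negative matrices (Gelfand's formula, or Perron--Frobenius together with properness to control the dominant eigenvalue and eigenvector) rather than argue entrywise. The branching-process bookkeeping and the final almost-sure-finiteness argument are routine once the recurrence $\mathbf{z}_n = (M^{\mathsf T})^n \mathbf{e}_S$ and the convergence of its sum are in hand.
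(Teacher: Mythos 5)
Your argument is correct, but note that the paper does not prove this statement at all: it is imported verbatim as a known result of Booth and Thompson (1973), so there is no in-paper proof to compare against. What you have written is essentially the classical sub-critical multi-type Galton--Watson argument, which is also the spirit of the original reference: identify non-terminals with particle types, observe that the first-moment recursion gives $E[\mathbf{Z}_{n+1}] = M^{\mathsf T} E[\mathbf{Z}_n]$ by the tower property and the independence of rule choices at distinct nodes, sum over generations via Tonelli, invoke $\rho(M) < 1$ (Gelfand) to get convergence of the Neumann series to $(I - M^{\mathsf T})^{-1}$, and conclude that the total node count of the derivation tree has finite expectation and is therefore almost surely finite. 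The final identification of ``derivation terminates almost surely'' with ``the probabilities of derivable words sum to $1$'' is unproblematic here because the tagged grammar is unambiguous, so each word has a unique derivation and the total mass on finite words is exactly the extinction probability. Two minor remarks: the hypothesis of properness is not actually needed for the sufficiency direction you prove (finiteness of the entries of $M$ is automatic from the finiteness of $R$); and the phrase ``largest eigenvalue in modulus coincides with the spectral radius'' is true by definition for any matrix, so no appeal to non-negativity is required there. Neither affects correctness. Your proof would serve as a legitimate self-contained justification of the cited theorem.
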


The expectation matrix of $G$ can be computed explicitly using the transition
rule probabilities $P$ for non-terminal symbols \citep[Equation 2]{gecse2010}.
If $G$ is non-recursive then it is necessarily consistent since $\Lang(G)$
consists of a finite number of finite length words.
Otherwise, if $G$ is recursive,  consistency is equivalently described by having
the expected number of steps in the stochastic rewriting-process $\Downarrow_G$
be finite \citet{gecse2010}.
%
%
To ensure that the $\Prior$ semantic function of $G$ is correctly normalized, we
construct the expectation matrix and confirm that the modulus of the largest
eigenvalue is less than 1.
(Note that the probabilities $Q$ of random symbols in the tagged PCFG are
immaterial in the analysis, since they appear only in non-recursive production
rules and do not influence the production rule probabilities $P$.)
We henceforth require every tagged probabilistic context-free grammar to satisfy
the stated conditions for consistency.

\subsection{Bayesian Synthesis Using Markov Chain Monte Carlo}
\label{subsec:pcfg-mcmc}

We now derive a Bayesian synthesis algorithm specialized to
domain-specific languages which are generated by a tagged context-free grammar,
and prove that the algorithm satisfies the conditions for convergence in
Theorem~\ref{thm:mcmc-convergence}.
Recall that implementing Algorithm~\ref{alg:mcmc} requires an implementation of
three procedures:

\begin{enumerate}
\item $\textproc{generate-expression-from-prior}$, which generates
  expression $E$ with probability $\prior{E}$ (subject to
  Condition~\ref{cond:prior-normalized}).
  Letting $E \sim \gendist{\cdot}{N_i}$ mean that $E$ is sampled randomly with
  probability $\gendist{E}{N_i}$ $(N_i \in N)$, we implement this procedure by
  sampling $E \sim \gendist{\cdot}{S}$ as described in
  Section~\ref{subsec:pcfg-prior}.

\item $\textproc{evaluate-likelihood}(X, E)$, which evaluates $\lik{E}$
  (subject to Conditions~\ref{cond:likelihood-normalized} and
  \ref{cond:likelihood-bounded}).
  This procedure is DSL specific; two concrete examples are given in
  Sections~\ref{sec:dsl-time-series} and \ref{sec:dsl-crosscat}.

\item $\textproc{generate-new-expression}(X, E)$, which generates a new
  expression $E'$ from the starting expression $E$ with probability
  $\TransOperX{E \to E'}$ (subject to Conditions~\ref{cond:irreducibility},
  \ref{cond:invariance}, and \ref{cond:aperiodicity}), which we describe below.
\end{enumerate}

We construct a class of transition operators $\TransOperX{E \to E'}$ that (i)
stochastically replaces a random sub-expression in $E$; then (ii) stochastically
accepts or rejects the mutation depending on how much it increases (or
decreases) the value of $\lik{E}$.
Our construction applies to any data-modeling DSL generated by a tagged PCFG and
is shown in Algorithm~\ref{alg:cfg-mcmc}. In this section we establish that the
transition operator in Algorithm~\ref{alg:cfg-mcmc} satisfies
Conditions~\ref{cond:irreducibility}, \ref{cond:invariance}, and
\ref{cond:aperiodicity} for Bayesian synthesis with Markov chain Monte Carlo
given in Section~\ref{subsec:bayesian-syntheis-mcmc}.

\begin{algorithm}[ht]
\caption{Transition operator $\TransOper$ for a context-free data-modeling language.}
\label{alg:cfg-mcmc}
\begin{algorithmic}[1]
\Procedure{generate-new-expression}{$E, X$}
  \Comment{{\footnotesize Input expression $E$ and data set $X$}}
    \State $a \sim \mbox{SelectRandomElementUniformly}(A_E)$
      \Comment{{\footnotesize Randomly select a node in parse tree}}
    \State $(N_i, \Ehole) \gets \sever{E}{a}$
      \Comment{{\footnotesize Sever the parse tree and return the
        non-terminal symbol at the sever point}}
      \label{algline:cfg-mcmc-sever}
    \State $\Esub \sim \gendist{\cdot}{N_i}$
      \Comment{{\footnotesize Generate random $\Esub$ with probability
        $\gendist{\Esub}{N_i}$}}
      \label{algline:cfg-mcmc-expand}
    \State $E' \gets \Ehole[\Esub]$
      \Comment{{\footnotesize Fill hole in $\Ehole$ with expression $\Esub$}}
    \State $L \gets \lik{E}$
      \Comment{{\footnotesize Evaluate likelihood for expression $E$
        and data set $X$}}
      \label{algline:cfg-mcmc-likelihood}
    \State $L' \gets \lik{E'}$
      \Comment{{\footnotesize Evaluate likelihood for expression $E'$
        and data set $X$}}
    \State $\paccept \gets \min\left\{1, (|A_E| / |A_{E'}|) \cdot (L' /  L)\right\}$
      \Comment{{\footnotesize Compute the probability of accepting the mutation}}
    \State $r \sim \mbox{UniformRandomNumber}([0, 1])$
      \Comment{{\footnotesize Draw a random number from the unit interval}}
    \If{$r < \paccept$}
        \Comment{{\footnotesize If-branch has probability $\paccept$}}
        \State \textbf{return} $E'$
          \Comment{{\footnotesize Accept and return the mutated expression}}
    \Else
        \Comment{{\footnotesize Else-branch has probability $1-\paccept$}}
        \State \textbf{return} $E$
          \Comment{{\footnotesize Reject the mutated expression,
            and return the input expression}}
    \EndIf
\EndProcedure
\end{algorithmic}
\end{algorithm}

Let $G$ be a grammar from Definition~\ref{def:tagged-pcfg} with language $\Lang
\Coloneqq \Lang(G)$.
We first describe a scheme for uniquely identifying syntactic locations in the
parse tree of each expression $E \in \Lang$.
Define the set %
$A \Coloneqq \set{(a_1, a_2, \ldots, a_l) \mid
    a_i \in \set{1, 2, \ldots, h_{\max}},
    l \in \set{0, 1, 2, \ldots}}
$
to be a countably infinite set that indexes nodes in the parse tree of $E$,
where $h_{\max}$ denotes the maximum number of symbols that appear on the right
of any given production rule of the form Eq~\eqref{eq:production-rule} in the
grammar.
Each element $a \in A$ is a sequence of sub-expression positions on the path
from the root node of a parse tree to another node.
For example, if $E = (t_0 \; E_1 \; E_2)$ where $E_1 = (t_1 \; E_3 \; E_4)$ and
$E_2 = (t_2 \; E_5 \; E_6)$, then the root node of the parse tree has index
$\aroot \Coloneqq ()$; the node corresponding to $E_1$ has index $(1)$, the node
corresponding to $E_2$ has index $(2)$; the nodes corresponding to $E_3$ and
$E_4$ have indices $(1, 1)$ and $(1, 2)$ respectively; and the nodes
corresponding to $E_5$ and $E_6$ have indices $(2, 1)$ and $(2, 2)$.
For an expression $E$, let $A_E \subset A$ denote the finite subset of nodes
that exist in the parse tree of $E$.

Let $\square$ denote a hole in an expression.
For notational convenience, we extend the definition of $\GenDist$ to be
a partial function with
signature $\GenDist : \{\Sigma \cup T \cup \square \}^* \to N \to [0, 1]$ and
add the rule: $\gendist{\square}{N_i} \Coloneqq 1$ (for each $i=1,\dots,m)$.
We define an operation $\Sever$, parametrized by $a \in A$, that takes an
expression $E$ and either returns a tuple $(N_i, \Ehole)$ where $\Ehole$
is the expression with the sub-expression located at $a$ replaced with $\square$
and where $N_i$ is the non-terminal symbol from which the removed sub-expression
is produced, or returns failure ($\varnothing$) if the parse tree for $E$ has no
node $a$:
\begin{align*}
\infer{(a, (t \; E_1 \cdots E_l)) \SeverTo (N_i, \square)}
  {a = (),\; \exists{k}. t \equiv T_{ik}}
\qquad
\infer{(a, (t \; E_1 \; \cdots E_l))
  \SeverTo (N_i, (t \; E_1 \; \cdots \; E_{j-1}
    \; \Esev \; E_{j+1} \; \cdots \; E_l)
  }
  {((a_2, a_3, \ldots), E_j) \SeverTo (N_i, \Esev) \mbox{ and } a_1 = j}
\\
\sever{(t \; E_1 \; E_2 \; \ldots \; E_l)}{a} \Coloneqq
  \begin{cases}
    (N_i, \Ehole) & \mbox{ if }
      (a, (t \; E_1 \; E_2 \; \ldots \; E_l))
      \SeverTo (N_i, \Ehole)\\ \varnothing & \mbox{ otherwise }
  \end{cases}
\end{align*}
Note that for any expression $E \in \Lang$, setting $a = \aroot \equiv ()$ gives
$((), E) \SeverTo (S, \square)$ where $S \in N$ is the start symbol.
Also note that $\Ehole \not \in \Lang$ because $\Ehole$ contains
$\square$.
For expression $\Ehole$ that contains a single hole, where $(a, E) \SeverTo
(N_i, \Ehole)$ for some $a$ and $E$, let $\Ehole[\Esub] \in \Lang$
denote the expression formed by replacing $\square$ with $\Esub$, where
$\Esub \in \Lang(G, N_i)$.
We further define an operation $\Subexpr$, parametrized by $a$, that extracts
the sub-expression corresponding to node $a$ in the parse tree:
\begin{align*}
  \subexpr{(t \; E_1 \; E_2 \; \ldots \; E_l)}{a} \Coloneqq
  \begin{cases}
    \varnothing & \mbox{ if } a = () \mbox{ or } a_1 > k \\
    E_j & \mbox{ if } a = (j) \mbox{ for some } 1 \le j \le k \\
    \subexpr{E_j}{(a_2, a_3, \ldots)} & \mbox{ if } i \ne (j) \mbox{ and }  a_1 = j \mbox{ for some } 1 \le j \le k
  \end{cases}
\end{align*}

Consider the probability that $\TransOper$ takes an expression $E$ to another
expression $E'$, which by total probability is an average over the uniformly
chosen node index $a$:
\begin{align}
\TransOperX{E \to E'}
  &= \frac{1}{|A_E|} \sum_{a \in A_E}
  \TransOperXa{E \to E'}{a}
  = \frac{1}{|A_E|} \sum_{a \in A_E \cap A_{E'}} \TransOperXa{E \to E'}{a},
  \notag \\
\intertext{where}
\TransOperXa{E \to E'}{a} &\Coloneqq
  \begin{cases}
  \begin{aligned}
  &\gendist{\subexpr{E'}{a}}{N_i} \cdot \alpha(E, E')
    + \mathbb{I}[E = E'] (1 - \alpha(E, E')) \span\\
  &\phantom{0}\qquad \mbox{if}\;\,
    \sever{E}{a} = \sever{E'}{a} = (N_i, \Ehole)
    \mbox{ for some } i \mbox{ and } \Ehole , \\
  &0\qquad \mbox{otherwise},
  \end{aligned}  \\
  \end{cases}
  \notag \\
\alpha(E, E')
  &\Coloneqq \min \set{1, \frac{|A_E| \cdot \lik{E'}}{|A_{E'}| \cdot \lik{E}}}.
  \label{eq:mh-acceptance-ratio}
\end{align}
Note that we discard terms in the sum with $a \in A_E \setminus A_{E'}$
because for these terms $\sever{E'}{a} = \varnothing$ and $\TransOperXa{E \to E'}{a}
= 0$.

\begin{lemma} \label{lemma:expand-nonzero}
For $E \in \Lang$, if\, $\sever{E}{a} = (N_i, \Ehole)$ then
$\gendist{\subexpr{E}{a}}{N_i} > 0$.
\end{lemma}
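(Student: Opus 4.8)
The plan is to reduce the positivity of $\gendist{\subexpr{E}{a}}{N_i}$ to the positivity of the generation probability of $E$ itself, and then to exploit the fact that the $\GenDist$ semantics factor as a product over the parse tree. Since $E \in \Lang = \Lang(G,S)$, the correspondence between $\GenDist$ and the sampling semantics $\Downarrow_G$ gives $\gendist{E}{S} > 0$; because $G$ is tagged (hence unambiguous), $E$ has a unique parse tree, so there is no ambiguity about which non-terminal generated each node. By construction $\Sever$ walks down the path $a$ and returns, at its foot, the non-terminal $N_i$ whose production rule produced the sub-expression sitting at node $a$, which is exactly $\subexpr{E}{a}$; the goal is therefore to show that this node-level probability is one of the necessarily non-zero factors of the positive product $\gendist{E}{S}$. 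To make the recursion go through I would prove the strengthened statement by induction on the length of the index $a$: for every non-terminal $\tilde N$ and every $E$ with $\gendist{E}{\tilde N} > 0$ (equivalently $E \in \Lang(G,\tilde N)$, by the correspondence lemma), if $\sever{E}{a} = (N_i, \Ehole)$ then $\gendist{\subexpr{E}{a}}{N_i} > 0$. The generalization over arbitrary start non-terminals $\tilde N$ rather than just $S$ is essential, because the recursive clause of $\Sever$ descends into a child sub-expression that was generated from a child non-terminal.

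For the base case $a = \aroot = ()$, we have $\sever{E}{()} = (N_i, \square)$, where $N_i$ is read off from the outermost phrase tag $t$ of $E = (t\;E_1\cdots E_l)$. Since $\gendist{E}{\tilde N} > 0$, the outermost rule of $E$ must be a production of $\tilde N$, and tag-uniqueness forces $N_i = \tilde N$; using the node-indexing convention that $\aroot$ names the entire expression, $\subexpr{E}{()} = E$ and so $\gendist{\subexpr{E}{()}}{N_i} = \gendist{E}{\tilde N} > 0$. For the inductive step $a = (j, a_2, a_3, \ldots)$, the recursion of $\Sever$ gives $\sever{E_j}{(a_2, a_3, \ldots)} = (N_i, \Esev)$ and the recursion of $\Subexpr$ gives $\subexpr{E}{a} = \subexpr{E_j}{(a_2, a_3, \ldots)}$, so the two operations descend in lockstep into the same child $E_j$. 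Writing the outermost rule of $E$ as $(\tilde N, t, \tilde N_1\cdots\tilde N_l)$, the defining equation $\gendist{E}{\tilde N} = P(t)\prod_{z=1}^{l}\gendist{E_z}{\tilde N_z} > 0$ forces every factor to be positive, in particular $\gendist{E_j}{\tilde N_j} > 0$, so $E_j \in \Lang(G,\tilde N_j)$. Applying the induction hypothesis to $E_j$, to $\tilde N_j$, and to the shorter path $(a_2, a_3, \ldots)$ yields $\gendist{\subexpr{E_j}{(a_2,a_3,\ldots)}}{N_i} > 0$, which is precisely $\gendist{\subexpr{E}{a}}{N_i} > 0$.

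The main obstacle I anticipate is not the positivity arithmetic, since every factor of a positive product of numbers in $[0,1]$ is automatically positive, but rather rigorously justifying the lockstep alignment: that the non-terminal $N_i$ returned by $\Sever$ at the foot of the path $a$ is exactly the non-terminal from whose production $\subexpr{E}{a}$ is derived in the unique parse tree of $E$. This amounts to matching the recursive clauses of $\Sever$ and $\Subexpr$ against the recursive clause of $\GenDist$ and invoking tag-uniqueness to pin down the generating non-terminal at each step. The base case in particular hinges on identifying $\subexpr{E}{\aroot}$ with the whole expression $E$ and on the fact that a string lies in $\Lang(G,\tilde N)$ only if its outermost tag names a production of $\tilde N$.
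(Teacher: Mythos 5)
Your proof is correct and rests on the same idea as the paper's: the quantity $\gendist{\subexpr{E}{a}}{N_i}$ is a factor of the positive product $\prior{E} = \gendist{E}{S}$, so it must itself be positive (the paper phrases this as ``otherwise $\prior{E}=0$, a contradiction,'' leaning implicitly on the factorization it states as the next lemma). Your induction on the length of $a$, generalized over the start non-terminal, simply makes explicit the lockstep descent of $\Sever$ and $\Subexpr$ that the paper's two-line argument leaves tacit.
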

\begin{proof}
If $\sever{E}{a} = (N_i, \Ehole)$ then $\subexpr{E}{a}$ is an expression with
tag $t \in T_i$.
Since the expression $E \in \Lang$, it follows that
$\gendist{\subexpr{E}{a}}{N_i} > 0$ because otherwise $\prior{E} = 0$ (a
contradiction).
\end{proof}

\begin{lemma} \label{lemma:factor}
For $E \in \Lang$, if $\sever{E}{a} = (N_i, \Ehole)$ then:
\begin{align*}
  \gendist{E}{S} = \gendist{\subexpr{E}{a}}{N_i} \cdot \gendist{\Ehole}{S}.
\end{align*}
\end{lemma}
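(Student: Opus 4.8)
The plan is to prove a slightly stronger statement by induction on the length $l$ of the node index $a = (a_1, \dots, a_l)$, which mirrors the recursive structure of the $\Sever$ and $\Subexpr$ operations. The statement in the lemma is specialized to the start symbol $S$, but the recursion descends into sub-expressions generated from \emph{other} non-terminals, so I would first generalize the claim to the following: for every non-terminal $N_s \in N$ and every $E \in \Lang(G, N_s)$, if $\sever{E}{a} = (N_i, \Ehole)$ then $\gendist{E}{N_s} = \gendist{\subexpr{E}{a}}{N_i} \cdot \gendist{\Ehole}{N_s}$. The lemma is then the special case $N_s = S$.

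For the base case $l = 0$ (so $a = \aroot = ()$), severing at the root yields $(N_s, \square)$ with $\subexpr{E}{\aroot} = E$ and $\Ehole = \square$; using the extended rule $\gendist{\square}{N_s} = 1$, the right-hand side collapses to $\gendist{E}{N_s} \cdot 1$, matching the left-hand side. For the inductive step, since $a$ has length $l \ge 1$ the production rule used at the root of $E$ must be recursive, so $E = (t\; E_1 \cdots E_h)$ where $t = T_{sk}$ for a rule $(N_s, t, \tilde{N}_1 \cdots \tilde{N}_h)$ and each $E_z \in \Lang(G, \tilde{N}_z)$; in particular $\gendist{E}{N_s} = P(t) \prod_{z=1}^h \gendist{E_z}{\tilde{N}_z}$. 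Writing $j = a_1$, the recursive $\Sever$ rule gives $\Ehole = (t\; E_1 \cdots E_{j-1}\; \Esev\; E_{j+1} \cdots E_h)$ where $\sever{E_j}{(a_2, \dots, a_l)} = (N_i, \Esev)$, so the extended $\GenDist$ semantics evaluate $\gendist{\Ehole}{N_s} = P(t) \cdot \big(\prod_{z \ne j} \gendist{E_z}{\tilde{N}_z}\big) \cdot \gendist{\Esev}{\tilde{N}_j}$. Applying the induction hypothesis to $E_j \in \Lang(G, \tilde{N}_j)$ with the shorter index $(a_2, \dots, a_l)$, and using $\subexpr{E_j}{(a_2, \dots, a_l)} = \subexpr{E}{a}$ from the recursive definition of $\Subexpr$, gives $\gendist{E_j}{\tilde{N}_j} = \gendist{\subexpr{E}{a}}{N_i} \cdot \gendist{\Esev}{\tilde{N}_j}$. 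Substituting this into the product expansion of $\gendist{E}{N_s}$ and regrouping factors isolates $\gendist{\subexpr{E}{a}}{N_i}$ and leaves exactly $\gendist{\Ehole}{N_s}$, establishing the claim.

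The main subtlety --- and the reason the lemma is stated for $S$ but must be proved for all non-terminals --- is getting the induction hypothesis to fit: the sever point lives deep inside the sub-tree rooted at the $j$-th child, whose governing non-terminal $\tilde{N}_j$ differs from $N_s$, so the claim cannot be closed without the generalization. The only other point requiring care is checking that the extended definition $\gendist{\square}{N_i} = 1$ makes $\GenDist$ factor multiplicatively across the hole, i.e.\ that $\gendist{\Ehole}{N_s}$ is computed by the same product rule with $\Esev$ occupying the $j$-th slot; this is immediate once one observes that the $\GenDist$ recursion simply descends into $\Esev$ until it reaches the hole and contributes the unit factor there. I would also invoke the correspondence between $\Downarrow_G$ and $\GenDist$ together with the unambiguity of the tagged grammar to justify that $E$'s root tag determines a unique production rule of $N_s$ and that each $E_z$ indeed lies in $\Lang(G, \tilde{N}_z)$, which is what licenses the structural decomposition used throughout.
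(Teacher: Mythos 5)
Your proof is correct and rests on the same idea as the paper's: the value $\gendist{E}{S}$ is a product of $P$- and $Q$-factors indexed by parse-tree nodes, and these factors partition into those at or below $a$ (collected in $\gendist{\subexpr{E}{a}}{N_i}$) and the rest (collected in $\gendist{\Ehole}{S}$, with the hole contributing a unit factor). The paper states this partition directly in two sentences, whereas you supply the formal induction on the length of $a$ with the necessary strengthening to arbitrary non-terminals $N_s$ --- a legitimate tightening of the same argument rather than a different route.
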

\begin{proof}
Each factor in $\gendist{E}{S}$ corresponds to a particular node $a'$ in the
parse tree.
Each factor corresponding to $a'$ appears in $\gendist{\subexpr{E}{a}}{N_i}$ if
$a'$ is a descendant of $a$ in the parse tree, or in $\gendist{\Ehole}{S}$
otherwise.
\end{proof}

\begin{lemma} \label{lemma:cfg-mcmc-new-prior}
For any $E, E' \in \Lang$ where $\sever{E}{a} = \sever{E'}{a} = (N_i, \Ehole)$:
\begin{align*}
  \prior{E'} = \prior{E}
    \cdot \frac{\gendist{\subexpr{E'}{a}}{N_i}}{\gendist{\subexpr{E}{a}}{N_i}}.
\end{align*}
\end{lemma}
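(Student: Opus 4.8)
The plan is to reduce the claim to the factorization already established in Lemma~\ref{lemma:factor}, exploiting the fact that the severed context $\Ehole$ is, by hypothesis, shared between $E$ and $E'$. Recalling the definition $\prior{E} = \gendist{E}{S}$ from Eq~\eqref{ref:define-prior-gendist}, I would first apply Lemma~\ref{lemma:factor} separately to $E$ and to $E'$. Since $\sever{E}{a} = \sever{E'}{a} = (N_i, \Ehole)$ with the \emph{same} non-terminal $N_i$ and the \emph{same} hole expression $\Ehole$, this yields the two identities
\[
\gendist{E}{S} = \gendist{\subexpr{E}{a}}{N_i} \cdot \gendist{\Ehole}{S}, \qquad \gendist{E'}{S} = \gendist{\subexpr{E'}{a}}{N_i} \cdot \gendist{\Ehole}{S}.
\]

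The crucial observation is that the factor $\gendist{\Ehole}{S}$ is literally the same number in both equations, because it is computed from the common context $\Ehole$, and the extended rule $\gendist{\square}{N_i} = 1$ makes this factor entirely independent of whatever sub-expression is eventually substituted at node $a$. The next step would be to divide the two identities and cancel this shared factor. For the division to be legitimate I would invoke Lemma~\ref{lemma:expand-nonzero}, which guarantees $\gendist{\subexpr{E}{a}}{N_i} > 0$; together with $E \in \Lang \Rightarrow \prior{E} > 0$ this forces $\gendist{\Ehole}{S} > 0$ as well, so neither quantity I cancel is zero. Rearranging then gives $\gendist{E'}{S} = \gendist{E}{S} \cdot \gendist{\subexpr{E'}{a}}{N_i} / \gendist{\subexpr{E}{a}}{N_i}$, which is precisely the claim after rewriting $\gendist{\cdot}{S}$ as $\prior{\cdot}$.

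I expect the argument to be short, so the only real subtlety — more a point to state carefully than a genuine obstacle — is justifying that $\Ehole$ contributes an \emph{identical} factor to both priors. This rests entirely on the severing hypothesis producing the same $(N_i, \Ehole)$ for $E$ and $E'$, and on the convention $\gendist{\square}{N_i} = 1$ neutralizing the hole in the factorization of Lemma~\ref{lemma:factor}. Once that is made explicit, the non-vanishing of the denominator supplied by Lemma~\ref{lemma:expand-nonzero} and a one-line algebraic rearrangement complete the proof.
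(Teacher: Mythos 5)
Your proof is correct and follows essentially the same route as the paper's: the paper's (one-line) argument is precisely to rewrite $\prior{E}$ and $\prior{E'}$ via $\GenDist$ and expand both with Lemma~\ref{lemma:factor}, cancelling the common factor $\gendist{\Ehole}{S}$. Your additional care in invoking Lemma~\ref{lemma:expand-nonzero} and $\prior{E} > 0$ to justify the cancellation is a welcome explicit detail that the paper leaves implicit.
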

\begin{proof}
Use $\prior{E} = \gendist{E}{S}$ and $\prior{E'} = \gendist{E'}{S}$, and expand
using Lemma~\ref{lemma:factor}.
\end{proof}

\begin{lemma}\label{lem:mcmc-pcfg-invariant}
The transition operator $\TransOper$ of Algorithm~\ref{alg:cfg-mcmc} satisfies
Condition~\ref{cond:invariance} (posterior invariance).
\end{lemma}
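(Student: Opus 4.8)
The plan is to prove the stronger property of \emph{detailed balance} of $\TransOper$ with respect to the posterior, namely that $\post{E}\cdot\TransOperX{E\to E'} = \post{E'}\cdot\TransOperX{E'\to E}$ for every pair $E,E'\in\Lang$, and then recover Condition~\ref{cond:invariance} by a one-line summation. Granting detailed balance and using that each row of $\TransOper$ sums to unity (as stated where $\TransOper$ is introduced), summing over $E$ gives $\sum_{E}\post{E}\,\TransOperX{E\to E'} = \sum_{E}\post{E'}\,\TransOperX{E'\to E} = \post{E'}\sum_{E}\TransOperX{E'\to E} = \post{E'}$, which is exactly the invariance equation. Since detailed balance is trivial when $E=E'$, it suffices to verify it for $E\ne E'$.

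Next I would reduce detailed balance to a node-by-node statement. Using the total-probability decomposition $\TransOperX{E\to E'} = \frac{1}{|A_E|}\sum_{a\in A_E\cap A_{E'}}\TransOperXa{E\to E'}{a}$ and the analogous expansion for $\TransOperX{E'\to E}$, I note that a node $a$ contributes a nonzero term to either sum precisely when $\sever{E}{a}=\sever{E'}{a}=(N_i,\Ehole)$ for some $i$ and $\Ehole$. This condition is \emph{symmetric} in $E$ and $E'$, so exactly the same set of nodes indexes both directions, and the per-node contributions can be compared one at a time. Because $E\ne E'$, the indicator term in the definition of $\TransOperXa{}{}$ drops out, so it suffices to establish, for each such contributing node $a$, the per-node identity $\post{E}\,\frac{1}{|A_E|}\,\gendist{\subexpr{E'}{a}}{N_i}\,\alpha(E,E') = \post{E'}\,\frac{1}{|A_{E'}|}\,\gendist{\subexpr{E}{a}}{N_i}\,\alpha(E',E)$.

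The heart of the argument, and where I expect the real work to lie, is the algebra of this per-node identity. Writing $\post{E}=\lik{E}\prior{E}/c_X$ and likewise for $E'$, I would eliminate the prior ratio using Lemma~\ref{lemma:cfg-mcmc-new-prior}, which gives $\prior{E'}=\prior{E}\cdot\gendist{\subexpr{E'}{a}}{N_i}/\gendist{\subexpr{E}{a}}{N_i}$; Lemma~\ref{lemma:expand-nonzero} guarantees the denominator here is strictly positive, so this substitution is well-defined and introduces no division by zero. After substitution, both sides carry the common factor $\prior{E}\,\gendist{\subexpr{E'}{a}}{N_i}/c_X$, and what remains is the standard Metropolis--Hastings cancellation: with $\alpha(E,E')=\min\set{1,\frac{|A_E|\lik{E'}}{|A_{E'}|\lik{E}}}$, the elementary identity $u\cdot\min\set{1,v/u}=\min\set{u,v}=v\cdot\min\set{1,u/v}$ (valid for $u,v>0$) shows that the left side reduces to $\min\set{\lik{E}/|A_E|,\,\lik{E'}/|A_{E'}|}$ times the common factor, while the right side reduces to the same minimum times the same factor. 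Hence the two sides coincide, establishing detailed balance per node, and therefore detailed balance of $\TransOper$ and the claimed posterior invariance.
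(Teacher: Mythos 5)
Your main line is correct and follows essentially the same route as the paper's proof: establish detailed balance, reduce it via the total-probability decomposition to a per-node identity over $a \in A_E \cap A_{E'}$, eliminate the prior ratio with Lemma~\ref{lemma:cfg-mcmc-new-prior} (using Lemma~\ref{lemma:expand-nonzero} to justify the division), and finish with the Metropolis--Hastings cancellation. Your use of the identity $u\cdot\min\set{1,v/u}=\min\set{u,v}=v\cdot\min\set{1,u/v}$ with $u=\lik{E}/|A_E|$ and $v=\lik{E'}/|A_{E'}|$ is a cleaner packaging of what the paper does by a three-way case split on which of $\alpha(E,E')$, $\alpha(E',E)$ equals $1$; both compute the same ratio $\alpha(E,E')/\alpha(E',E)=\lik{E'}|A_E|/(\lik{E}|A_{E'}|)$.

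The one genuine omission is the degenerate case in which a likelihood vanishes. Detailed balance must hold for \emph{all} pairs $E,E'\in\Lang$, including those with $\lik{E}=0$ or $\lik{E'}=0$ (such expressions are in $\Lang$; they merely have zero posterior), and your min-identity is, as you yourself note, valid only for $u,v>0$. So your per-node verification silently assumes both likelihoods are positive and says nothing otherwise. The paper disposes of this as a separate preliminary case: if $\post{E}\cdot\TransOperX{E\to E'}=0$, then either $\post{E}=0$, in which case $\lik{E}=0$ forces $\alpha(E',E)=0$ and hence $\TransOperX{E'\to E}=0$; or $\TransOperX{E\to E'}=0$, which (since $\gendist{\subexpr{E'}{\aroot}}{S}=\prior{E'}>0$ for $E'\in\Lang$) forces $\alpha(E,E')=0$ and hence $\lik{E'}=0$ and $\post{E'}=0$. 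Either way both sides of the detailed-balance equation vanish. Adding a remark of this form closes the gap; the rest of your argument stands as written.
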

\begin{proof}
We show detailed balance for $\TransOper$ with respect to the posterior:
\begin{align*}
  \post{E} \cdot \TransOperX{E \to E'}
    = \post{E'} \cdot \TransOperX{E' \to E} && (E, E' \in \Lang).
\end{align*}
First, if $\post{E} \cdot \TransOperX{E \to E'} = 0$ then %
$\post{E'} \cdot \TransOperX{E' \to E} = 0$ as follows.
Either $\post{E} = 0$ or $\TransOperX{E \to E'} = 0$.
\begin{enumerate}
\item If $\post{E} = 0$ we have $\lik{E} = 0$ which implies that %
  $\alpha(E', E) = 0$ and therefore $\TransOperX{E' \to E} = 0$.

\item If $\TransOperX{E \to E'} = 0$ then $\TransOperXa{E \to E'}{\aroot} = 0$
  which implies either that the acceptance probability $\alpha(E, E') = 0$ or
  that $\gendist{\subexpr{E'}{\aroot}}{S} = 0$. %
  If $\alpha(E, E') = 0$ then $\lik{E'} = 0$ and $\post{E'} = 0$. But
  $\gendist{\subexpr{\aroot}{E'}}{S} = 0$ is a contradiction since
  $\gendist{\subexpr{\aroot}{E'}}{S} = \gendist{E'}{S} =
  \prior{E'} > 0$.
\end{enumerate}

Next, consider $\post{E} \cdot \TransOperX{E \to E'} > 0$ and $\post{E'} \cdot
\TransOperX{E' \to E} > 0$.
It suffices to show that $\post{E} \cdot |A_{E'}| \cdot \TransOperXa{E \to
E'}{a} = \post{E'} \cdot |A_E| \cdot \TransOperXa{E' \to E}{a}$ for all $a \in
A_E \cap A_{E'}$.
If $E = E'$, then this is vacuously true.
Otherwise $E \ne E'$, then consider two cases:
\begin{enumerate}
\item If $\sever{E}{a} = \sever{E'}{a} = (N_i, \Ehole)$ for some $i$ and
  $\Ehole$, then it suffices to show that:
\begin{align*}
&\post{E}
  \cdot |A_{E'}|
  \cdot \gendist{\subexpr{E'}{a}}{N_i}
  \cdot \alpha(E, E') \\
&\qquad= \post{E'}
  \cdot |A_E|
  \cdot
  \gendist{\subexpr{E}{a}}{N_i}
  \cdot \alpha(E', E)
\end{align*}
Both sides are guaranteed to be non-zero because $\post{E} > 0$ implies $\lik{E}
> 0$ implies $\alpha(E', E) > 0$, and by Lemma~\ref{lemma:expand-nonzero} (and
similarly for $\post{E'} > 0$).
Therefore, it suffices to show that
\begin{align*}
\frac{\alpha(E, E')}{\alpha(E', E)}
  &= \frac{\post{E'} \cdot |A_E| \cdot \gendist{\subexpr{E}{a}}{N_i}}
          {\post{E} \cdot |A_E'| \cdot \gendist{\subexpr{E'}{a}}{N_i}}
  \\
  &= \frac{\prior{E'} \cdot \lik{E'} \cdot |A_E| \cdot \gendist{\subexpr{E}{a}}{N_i}}
          {\prior{E} \cdot \lik{E} \cdot |A_E'| \cdot \gendist{\subexpr{E'}{a}}{N_i}}
  \\
  &= \frac{\lik{E'} \cdot |A_E|}{\lik{E} \cdot |A_E'|},
\end{align*}
where for the last step we used Lemma~\ref{lemma:cfg-mcmc-new-prior} to expand
$\prior{E'}$, followed by cancellation of factors.
Note that if $\alpha(E, E') < 1$ then $\alpha(E', E) = 1$ and
\begin{align*}
    \frac{\alpha(E, E')}{\alpha(E', E)}
    = \frac{\alpha(E, E')}{1}
    = \frac{\lik{E'} \cdot |A_E|}{\lik{E} \cdot |A_E'|}.
\end{align*}
If $\alpha(E', E) < 1$ then $\alpha(E, E') = 1$ and
\begin{align*}
    \frac{\alpha(E, E')}{\alpha(E', E)}
    = \frac{1}{\alpha(E', E)}
    = \frac{\lik{E'} \cdot |A_E|}{\lik{E} \cdot |A_E'|}.
\end{align*}
If $\alpha(E, E') = 1 = \alpha(E', E)$ then %
$\alpha(E, E') / \alpha(E', E) = 1 = \alpha(E, E')$.
\item If $\sever{E}{a} \ne \sever{E'}{a}$, then $\TransOperXa{E \to E'}{a} =
  \TransOperXa{E' \to E}{a} = 0$.
\end{enumerate}
If $E \ne E'$, then $\TransOperXa{E \to E'}{a} = \prior{\subexpr{E'}{a}} \cdot
\alpha(E, E')$.
Finally, posterior invariance follows from detailed balance:
\begin{align*}
  \textstyle\sum_{E \in \Lang} \post{E} \TransOperX{E \to E'}
  = \textstyle\sum_{E \in \Lang} \post{E'} \TransOperX{E' \to E}
  = \post{E'}.
\end{align*}
\end{proof}

\begin{lemma}\label{lem:mcmc-pcfg-irreducible}
The transition operator $\TransOper$ of Algorithm~\ref{alg:cfg-mcmc} satisfies
Condition~\ref{cond:irreducibility} (posterior irreducibility).
\end{lemma}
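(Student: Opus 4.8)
The plan is to prove a considerably stronger statement than bare irreducibility: from \emph{any} expression $E \in \Lang$, every target expression $E'$ with $\lik{E'} > 0$ is reachable in a \emph{single} transition step, i.e. $\TransOperX{E \to E'} > 0$. This immediately yields Condition~\ref{cond:irreducibility} (take the sequence $E_1 = E$, $E_2 = E'$, or simply $E_1 = E = E'$ when they coincide, in which case the reachability condition on intermediate steps is vacuous). The underlying idea is that the operator can always choose to sever at the root node, whose hole expression is $\square$ and whose associated non-terminal is the start symbol $S$; regenerating from $S$ can reproduce the entire target expression in one shot.

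First I would fix $E, E' \in \Lang$ with $\lik{E'} > 0$ and exhibit a single node $a \in A_E \cap A_{E'}$ for which $\TransOperXa{E \to E'}{a} > 0$. Since $\TransOperX{E \to E'}$ is a nonnegative sum over such nodes divided by $|A_E|$, one positive term suffices. I take $a = \aroot = ()$, which lies in the parse tree of every expression and hence in $A_E \cap A_{E'}$. By the remark following the definition of $\Sever$, $\sever{E}{\aroot} = \sever{E'}{\aroot} = (S, \square)$, so both severs agree and the first branch of $\TransOperXa{E \to E'}{\aroot}$ applies, giving $\TransOperXa{E \to E'}{\aroot} = \gendist{\subexpr{E'}{\aroot}}{S} \cdot \alpha(E, E') + \mathbb{I}[E = E'](1 - \alpha(E, E'))$. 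Because the sub-expression rooted at $\aroot$ is the whole expression (the same identification used in Lemma~\ref{lem:mcmc-pcfg-invariant}), $\gendist{\subexpr{E'}{\aroot}}{S} = \gendist{E'}{S} = \prior{E'}$, which is strictly positive since $\Prior$ has codomain $(0,1]$ on $\Lang = \Lang(G, S)$.

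It remains to check that the acceptance factor $\alpha(E, E')$ is strictly positive, and this is where the one genuinely delicate point lies. When $\lik{E} > 0$, the ratio defining $\alpha$ in Eq~\eqref{eq:mh-acceptance-ratio} is a quotient of strictly positive quantities, so $\alpha(E, E') > 0$. The subtle case is $\lik{E} = 0$: because Condition~\ref{cond:irreducibility} quantifies over \emph{all} starting expressions $E \in \Lang$, the starting expression may have zero likelihood even though the target does not. Here the algorithm's computation $\paccept = \min\set{1, (|A_E|/|A_{E'}|)(L'/L)}$ with $L = 0 < L' = \lik{E'}$ evaluates to $\min\set{1, +\infty} = 1$, so $\alpha(E, E') = 1 > 0$. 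This convention is precisely what allows the chain to escape any zero-likelihood initialization into the positive-likelihood region.

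Combining these observations, $\TransOperXa{E \to E'}{\aroot} \ge \prior{E'} \cdot \alpha(E, E') > 0$ when $E \ne E'$ (the indicator term vanishes), and when $E = E'$ we have $\lik{E} = \lik{E'} > 0$, hence $\alpha(E, E) = 1$ and $\TransOperXa{E \to E}{\aroot} = \prior{E} > 0$. In every case $\TransOperX{E \to E'} \ge \frac{1}{|A_E|}\TransOperXa{E \to E'}{\aroot} > 0$, establishing one-step reachability and therefore posterior irreducibility.
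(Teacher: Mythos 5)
Your proposal is correct and takes essentially the same route as the paper's proof: sever at the root node $\aroot$, note that $\sever{E}{\aroot} = \sever{E'}{\aroot} = (S,\square)$, and bound $\TransOperX{E \to E'}$ below by $\frac{1}{|A_E|}\gendist{E'}{S}\cdot\alpha(E,E') > 0$ to get one-step reachability. Your explicit handling of the $\lik{E}=0$ starting case (where $\alpha(E,E')=1$) and of $E=E'$ is a welcome bit of extra care that the paper's proof leaves implicit, but it is not a different argument.
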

\begin{proof}
We show that for all expressions $E$ and $E' \in \{\Lang : \post{E'} > 0\}$,
$E'$ is reachable from $E$ in one step (that is, $\TransOperX{E \to E'} > 0$).
Note for all $E, E' \in \Lang$, $\aroot \in A_E \cap A_{E'}$ and
$\sever{E}{\aroot} = \sever{E'}{\aroot} = (S, \square)$.
Since $\post{E'} > 0$, we know that $\prior{E'} = \gendist{E'}{S} > 0$, and
$\lik{E'} > 0$, which implies $\alpha(E, E') > 0$.
Finally:
\begin{align*}
  \TransOperX{E \to E'}
    \ge \frac{1}{|A_E|} \cdot \TransOperXa{E \to E'}{a}
    \ge \gendist{E'}{S} \cdot \alpha(E, E') > 0.
\end{align*}
\end{proof}

\begin{lemma}\label{lem:mcmc-pcfg-aperiodic}
The transition operator $\TransOper$ of Algorithm~\ref{alg:cfg-mcmc} satisfies
Condition~\ref{cond:aperiodicity} (aperiodicity).
\end{lemma}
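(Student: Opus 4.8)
The plan is to exhibit a single expression $E \in \Lang$ whose self-transition probability $\TransOperX{E \to E}$ is strictly positive; since Condition~\ref{cond:aperiodicity} is purely existential, this suffices. First I would invoke Condition~\ref{cond:likelihood-bounded}, which guarantees that the likelihood is non-zero for at least one expression, and fix some $E \in \Lang$ with $\lik{E} > 0$. Restricting to such an $E$ (rather than an arbitrary one) is what makes the acceptance ratio well-defined, as the next step shows.

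The essential observation is that the Metropolis--Hastings acceptance ratio for a move that leaves the expression unchanged is trivially one: specializing Eq~\eqref{eq:mh-acceptance-ratio} to the case $E' = E$ gives $\alpha(E, E) = \min\set{1, (|A_E| \cdot \lik{E}) / (|A_E| \cdot \lik{E})} = 1$, where the quotient is meaningful precisely because $\lik{E} > 0$. I would then read off the single-node kernel at $E' = E$. For every node $a \in A_E$ we trivially have $\sever{E}{a} = \sever{E}{a} = (N_i, \Ehole)$ for the associated $N_i$ and $\Ehole$, so the first branch of the definition of $\TransOperXa{E \to E}{a}$ applies; substituting $\alpha(E, E) = 1$ annihilates the rejection term $\mathbb{I}[E = E'](1 - \alpha(E, E'))$ and leaves $\TransOperXa{E \to E}{a} = \gendist{\subexpr{E}{a}}{N_i}$. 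Taking $a = \aroot$, where $\sever{E}{\aroot} = (S, \square)$, Lemma~\ref{lemma:expand-nonzero} guarantees $\gendist{\subexpr{E}{\aroot}}{S} > 0$ (this is the same positive quantity that appears in the proof of Lemma~\ref{lem:mcmc-pcfg-irreducible}). Averaging over the uniform choice of node then gives $\TransOperX{E \to E} \ge (1/|A_E|) \cdot \gendist{\subexpr{E}{\aroot}}{S} > 0$, which establishes the claim.

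I do not anticipate a genuine obstacle here: this is the most elementary of the three MCMC conditions, and the proof amounts to observing that the chain may always re-propose the identical subexpression at any severed node --- a move of positive $\GenDist$-probability by Lemma~\ref{lemma:expand-nonzero} --- and, having acceptance ratio one, always accepts it. The only point demanding any care is ensuring that $\alpha(E, E)$ is well-defined, which is exactly why the chosen $E$ is required to have positive likelihood rather than being wholly arbitrary; once that choice is made, positivity of the self-transition is immediate.
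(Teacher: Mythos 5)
Your proof is correct and takes essentially the same route as the paper's: restrict attention to the root node $a = \aroot$, observe that the re-proposal probability is $\gendist{E}{S} = \prior{E} > 0$ and that $\alpha(E,E) = 1$, and conclude $\TransOperX{E \to E} \ge (1/|A_E|)\cdot\prior{E} > 0$. Your extra care in selecting an $E$ with $\lik{E} > 0$ so that $\alpha(E,E)$ is well-defined is a slight refinement over the paper, which asserts the bound for all $E \in \Lang$; since Condition~\ref{cond:aperiodicity} is existential, your version is fully sufficient.
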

\begin{proof}
For all $E \in \Lang$, we have
$\TransOperX{E \to E}
  \ge (1/|A_E|) \cdot \gendist{E}{S}
  = (1/|A_E|) \cdot \prior{E} > 0
$.
The inequality derives from choosing only $a = \aroot$ in the sum over $a \in
|A_E|$.
\end{proof}

We have thus established (by Lemmas~\ref{lem:mcmc-pcfg-invariant},
\ref{lem:mcmc-pcfg-irreducible}, and \ref{lem:mcmc-pcfg-aperiodic}) that for a
language $\Lang$ specified by a consistent probabilistic context-free grammar
with well-defined $\Prior$ and $\Likelihood$ semantics
(Conditions~\ref{cond:prior-normalized}, \ref{cond:likelihood-normalized}, and
\ref{cond:likelihood-bounded}) the Bayesian synthesis procedure in
Algorithm~\ref{alg:mcmc} with the transition operator $\TransOper$ from
Algorithm~\ref{alg:cfg-mcmc} satisfies the preconditions for sound inference
given in Theorem~\ref{thm:mcmc-convergence}.


\section{Gaussian Process DSL for Univariate Time Series Data}
\label{sec:dsl-time-series}


\begin{figure}[bp]
\scriptsize
\hrule
\bigskip

\begin{subfigure}[t]{.5\linewidth}
\textbf{\footnotesize Syntax and Production Rules}
\begin{align*}
v &\in \DomReals\; \\
H &\in \DomParameters \Coloneqq \ttpl \texttt{gamma}\, v \ttpr
  && \mathrm{[GammaParam]} \\
K &\in \DomKernel \Coloneqq
  \ttpl \texttt{const}\; H \ttpr                  && \mathrm{[Constant]} \\
  & \mid \ttpl \texttt{wn}\; H \ttpr              && \mathrm{[WhiteNoise]} \\
  & \mid \ttpl \texttt{lin}\; H \ttpr             && \mathrm{[Linear]}\\
  & \mid \ttpl \texttt{se}\; H \ttpr              && \mathrm{[SquaredExp]} \\
  & \mid \ttpl \texttt{per}\; H_1\; H_2 \ttpr     && \mathrm{[Periodic]} \\
  & \mid \ttpl \texttt{+}\; K_1\; K_2 \ttpr       && \mathrm{[Sum]} \\
  & \mid \ttpl \texttt{*}\; K_1\; K_2 \ttpr       && \mathrm{[Product]} \\
  & \mid \ttpl \texttt{cp}\; H\; K_1\; K_2 \ttpr  && \mathrm{[ChangePoint]}
\end{align*}
\end{subfigure}%
\begin{subfigure}[t]{.5\linewidth}
\textbf{\footnotesize Production Rule Probabilities}
\begin{align*}
H &:P(\texttt{gamma}) \Coloneqq 1.\\
K &:P(\texttt{const}) = P(\texttt{wn}) = P(\texttt{lin}) = P(\texttt{se}) \Coloneqq 0.14,\\
&\qquad P(\texttt{+}) = P(\texttt{*}) \Coloneqq 0.135,\, P(\texttt{cp}) \Coloneqq 0.03.
\end{align*}
\smallskip

\textbf{\footnotesize Terminal Symbol Probabilities}
\begin{align*}
\texttt{gamma}:
  Q(\texttt{gamma}, v) &\Coloneqq
    \frac{\beta^\alpha v^{\alpha-1}e^{-\beta{v}}}{\Gamma(\alpha)}
    \; (\alpha=1, \beta=1).
\end{align*}
\smallskip

\textbf{\footnotesize Prior Denotation}

Default prior for probabilistic context-free grammars
(Section~\ref{subsec:pcfg-prior}).
\end{subfigure}

\bigskip

\begin{subfigure}{\linewidth}
\textbf{\footnotesize Likelihood Denotation}
\begin{align*}
\Denotation[\Covar]{(\texttt{const}\; (\texttt{gamma}\, v))}(x)(x')
  &\Coloneqq v \\
\Denotation[\Covar]{(\texttt{wn}\; (\texttt{gamma}\, v))}(x)(x')
  &\Coloneqq v \cdot \mathbb{I}[x = x'] \\
\Denotation[\Covar]{(\texttt{lin}\; (\texttt{gamma}\, v))}(x)(x')
  &\Coloneqq (x-v)(x'-v) \\
\Denotation[\Covar]{(\texttt{se}\; (\texttt{gamma}\, v))}(x)(x')
  &\Coloneqq \exp({-(x-x')^2/v}) \\
\Denotation[\Covar]{(\texttt{per}\;
  (\texttt{gamma}\, v_1)\; (\texttt{gamma}\, v_2))}(x)(x')
  &\Coloneqq \exp(-2/v_1 \sin((2\pi/v_2)\abs{x-x'})^2) \\
\Denotation[\Covar]{(\texttt{+}\; K_1\; K_2)}(x)(x')
  &\Coloneqq \Denotation[\Covar]{K_1}(x)(x') + \Denotation[\Covar]{K_2}(x)(x') \\
\Denotation[\Covar]{(\texttt{*}\; K_1\; K_2)}(x)(x')
  &\Coloneqq \Denotation[\Covar]{K_1}(x)(x') \times \Denotation[\Covar]{K_2}(x)(x') \\
\Denotation[\Covar]{(\texttt{cp}\; (\texttt{gamma}\, v)\; K_1\; K_2)}(x)(x')
  &\Coloneqq
    \Delta(x,v)\Delta(x',v)(\Denotation[\Covar]{K_1}(x)(x'))
    + (1-\Delta(x,v))(1-\Delta(x',v))(\Denotation[\Covar]{K_1}(x)(x'))\\
  &\qquad \mbox{ where } \Delta(x,v) \Coloneqq 0.5\times(1+\tanh(10(x-v)))\\
%
\Denotation[\Likelihood]{K}((\mathbf{x}, \mathbf{y})) &\Coloneqq
  \exp\Big(-1/2
    \textstyle \sum_{i=1}^{n}y_i\left(
    \textstyle\sum_{j=1}^{n}(\lbrace
      [\Denotation[\Covar]{K}(x_i)(x_j) +
        0.01\delta(x_i,x_j)]_{i,j=1}^{n}
      \rbrace^{-1}_{ij})y_j
    \right) \\
  &\qquad
  -1/2\log\left\lvert{[\Denotation[\Covar]{K}(x_i)(x_j)
    + 0.01\delta(x_i,x_j)]_{i,j=1}^{n}}\right\rvert
  - (n/2)\log{2\pi}
  \Big)
\end{align*}
\end{subfigure}

\begin{subfigure}{\linewidth}
\textbf{\footnotesize DSL to Venture Translation}
\begin{align*}
\Denotation[\VKernel]{(\texttt{const}\; (\texttt{gamma}\, v))}
  &\Coloneqq \texttt{((x1, x2) -> \{$v$\})} \\
\Denotation[\VKernel]{(\texttt{wn}\; (\texttt{gamma}\, v))}
  &\Coloneqq \texttt{((x1, x2) -> \{if (x1==x2) \{$v$\} else \{0\}\})} \\
\Denotation[\VKernel]{(\texttt{lin}\; (\texttt{gamma}\, v))}
  &\Coloneqq \texttt{((x1, x2) -> \{(x1-$v$) * (x2-$v$)\})} \\
\Denotation[\VKernel]{(\texttt{se}\; (\texttt{gamma}\, v))}
  &\Coloneqq \texttt{((x1, x2) -> \{exp((x1-x2)**2/$v$)\})} \\
\Denotation[\VKernel]{(\texttt{per}\;
  (\texttt{gamma}\, v_1)\; (\texttt{gamma}\, v_2))}(x)(x')
  &\Coloneqq \texttt{((x1, x2) -> \{-2/$v_1$ * sin(2*pi/$v_2$ * abs(x1-x2))**2\})} \\
\Denotation[\VKernel]{(\texttt{+}\; K_1\; K_2)}
  &\Coloneqq \texttt{((x1, x2) ->
    \{$\Denotation[\VKernel]{K_1}$(x1, x2)
    + $\Denotation[\VKernel]{K_2}$(x1, x2)\})
  } \\
\Denotation[\VKernel]{(\texttt{*}\; K_1\; K_2)}
  &\Coloneqq \texttt{((x1, x2) ->
    \{$\Denotation[\VKernel]{K_1}$(x1, x2)
    * $\Denotation[\VKernel]{K_2}$(x1, x2)\})
  } \\
\Denotation[\VKernel]{((\texttt{cp}\; (\texttt{gamma}\, v)\; K_1\; K_2)}
  &\Coloneqq \texttt{((x1, x2) -> \{} \\
  &\qquad \texttt{sig1 = sigmoid(x1, $v$, .1) * sigmoid(x2, $v$, .1); } \\
  &\qquad \texttt{sig2 = (1-sigmoid(x1, $v$, .1)) * (1-sigmoid(x2, $v$, .1)); } \\
  &\qquad \texttt{sig1 * $\Denotation[\VKernel]{K_1}$(x1,x2)
    + sig2 * $\Denotation[\VKernel]{K_2}$(x1,x2)\})} \\
%
\Denotation[\VProg]{K}
  &\Coloneqq \texttt{assume gp = gaussian\_process(gp\_mean\_constant(0), $\Denotation[\VKernel]{K}$);}
\end{align*}
\end{subfigure}
\bigskip
\hrule

\captionsetup{skip=5pt}
\caption{Components of domain-specific language for automatic data
modeling of time series using Gaussian process models. This DSL belongs to
the family of probabilistic context-free grammars from Section~\ref{sec:pcfg}.}
\label{fig:dsl-gp}
\end{figure}

Section~\ref{subsec:example-gp-introduction} introduced the mathematical
formalism and domain-specific language for Gaussian process models.
The DSL is generated by a tagged probabilistic context-free grammar as described
in Definition~\ref{def:tagged-pcfg}, which allows us to reuse the general
semantics and inference algorithms in Section~\ref{sec:pcfg}.
Figure~\ref{fig:dsl-gp} shows the core components of the domain-specific
language: the syntax and production rules; the symbol and production rule
probabilities; the likelihood semantics; and the translation procedure of
DSL expressions into Venture.
By using addition, multiplication, and change point operators to build composite
Gaussian process kernels, it is possible to discover a wide range of time series
structure from observed data~\citep{lloyd2014}.
The synthesis procedure uses Algorithm~\ref{alg:mcmc} along the generic
transition operator for PCFGs in Algorithm~\ref{alg:cfg-mcmc}, which is sound by
Lemmas~\ref{lem:mcmc-pcfg-invariant}, %
\ref{lem:mcmc-pcfg-irreducible}, and \ref{lem:mcmc-pcfg-aperiodic}.
Appendix~\ref{appendix:gp-dsl-proofs} formally proves that the prior and
likelihood semantics satisfy the preconditions for Bayesian synthesis.
%
%

This Gaussian process domain-specific language (whose implementation is shown in
Figure~\ref{fig:gp-tutorial}) is suitable for both automatic model structure
discovery of the covariance kernel as well as for estimating the parameters of
each kernel.
This contrasts with standard software packages for inference in GPs (such as
python's scikit-learn~\citep{pedergosa2011}, or MATLAB's
GPML~\citep{rasmussen2010}) which do not have the ability to synthesize
different Gaussian process model structures and can only perform parameter
estimation given a fixed, user-specified model structure which is based on
either the user's domain-specific knowledge or (more often) an arbitrary choice.

\subsection{Time Complexity of Bayesian Synthesis Algorithm}
\label{subsec:dsl-gp-complexity}

In Algorithm~\ref{alg:cfg-mcmc} used for synthesis, each transition step from an
existing expression $K$ to a candidate new expression $K'$ requires the
following key computations:

\begin{enumerate}
\item Severing the parse tree at a randomly selected node using $\Sever$
    (Algorithm~\ref{alg:cfg-mcmc}, line \ref{algline:cfg-mcmc-sever}).
    The cost is linear in the number $\abs{K}$ of subexpressions in
    the overall s-expression $K$.

\item Forward-sampling the probabilistic context-free grammar using $\GenDist$
    (Algorithm~\ref{alg:cfg-mcmc}, line~\ref{algline:cfg-mcmc-expand}). The
    expected cost is linear in the average length of strings $l_G$
    generated by the PCFG, which can be determined
    from the transition probabilities \citep[Eq.~3]{gecse2010}).

\item Assessing the likelihood under the existing and proposed expressions
    using $\Denotation[\Likelihood]{K}((\mathbf{x}, \mathbf{y}))$
    (Algorithm~\ref{alg:cfg-mcmc}, line~\ref{algline:cfg-mcmc-likelihood}).
    For a Gaussian process with $n$ observed time points, the cost
    of building the covariance matrix $\Denotation[\Covar]{K}$
    is $O(\abs{K}n^2)$ and the cost of obtaining its inverse is $O(n^3)$.
\end{enumerate}

The overall time complexity of a transition from $K$ to $K'$ is thus
$O(l_G + \max(\abs{K}, \abs{K'})n^2 + n^3)$.
This term is typically dominated by the $n^3$ term from assessing $\Likelihood$.
Several existing techniques use sparse approximations of Gaussian processes to
reduce this complexity to $O(m^2n)$ where $m \ll n$ is a parameter representing
the size of a subsample of the full data to be used
\citep[Chapter~8]{rasmussen2006}.
These approximation techniques trade-off predictive accuracy with significant
increase in scalability.
\citet{quinonero2005} show that sparse Gaussian process approximation methods
typically correspond to well-defined probabilistic models that have different
likelihood functions than the standard Gaussian process, and so our synthesis
framework can naturally incorporate sparse Gaussian processes by adapting the
$\Likelihood$ semantics.

As for the quadratic factor $O(\max(\abs{K}, \abs{K'})n^2)$, this scaling
depends largely on the characteristics of the underlying time series data. For
simple time series with only a few patterns we expect the program size $\abs{K}$
to be small, whereas for time series with several complex temporal patterns then
the synthesized covariance expressions $K$ are longer to obtain strong fit to
the data.


\begin{figure}[!b]
\scriptsize
\hrule
\bigskip

\begin{subfigure}[t]{.5\linewidth}
\textbf{\footnotesize Syntax and Production Rules}
\begin{align*}
x,y,w &\in \DomReals, a \in [m], s \in [n] && \\
P \in \DomPartition
  &\Coloneqq \ttpl \texttt{partition}\, B_1\, \dots\, B_k \ttpr
  && \mathrm{[Partition]} \\
B \in \DomBlock
  &\Coloneqq \ttpl \texttt{block}\,
    (a_1\, \dots\, a_l)\, C_1\, \dots\, C_t \ttpr
  && \mathrm{[Block]} \\
C \in \DomCluster
  &\Coloneqq \ttpl \texttt{cluster}\, s\, V_1\, V_2\, \dots\, V_l \ttpr
  && \mathrm{[Cluster]} \\
V \in \DomVariable
  &\Coloneqq \ttpl \texttt{var}\, a\, D \ttpr
  && \mathrm{[Variable]} \\
D \in \DomDist
  &\Coloneqq \ttpl \texttt{normal}\, (x\, y) \ttpr
    && \mathrm{[Gaussian]} \\
  &\mid \ttpl \texttt{poisson}\, y \ttpr
    && \mathrm{[Poisson]} \\
  &\mid \ttpl \texttt{categorical}\, w_1\, \dots\, w_q \ttpr
    && \mathrm{[Categorical]}
\end{align*}

\textbf{\footnotesize Prior Denotation}
\begin{align*}
%
%
\Denotation[\Prior]{(\texttt{partition}\, B_1\, \dots\, B_k)}
  &\Coloneqq \frac{\textstyle\prod_{i=1}^{k} \Denotation[\Prior]{B_k}}{m!}\\
\Denotation[\Prior]{(\texttt{block}\, (a_1\, \dots\, a_l)\, C_1\, \dots\, C_t)}
  &\Coloneqq (l-1)! \, \frac{\textstyle\prod_{i=1}^{t}\Denotation[\Prior]{C_i}}{n!}\\
\Denotation[\Prior]{(\texttt{cluster}\, s\, V_1\, \dots\, V_l)}
  &\Coloneqq (s-1)! \, \textstyle\prod_{i=1}^{l}\Denotation[\Prior]{V_i} \\
\Denotation[\Prior]{(\texttt{var}\, a\, D)}
  &\Coloneqq \Denotation[\Prior]{D} \\
\Denotation[\Prior]{(\texttt{normal}\, v\, y)}
  &\Coloneqq
    \sqrt{\frac{\lambda}{y^22\pi}}
    \frac{\beta^{\alpha}}{\Gamma(\alpha)}\left(\frac{1}{y^2}\right)^{\alpha+1}\\
  &\;
    \exp\left(\frac{-(2\beta+\lambda(v-\eta)^2)}{2y^2} \right) \\
\Denotation[\Prior]{(\texttt{poisson}\, y)}
  &\Coloneqq \frac{\xi^\nu y^{\nu-1}e^{-\xi{y}}}{\Gamma(\nu)} \\
\Denotation[\Prior]{(\texttt{categorical}\, w_1\, \dots\, w_q)}
  &\Coloneqq \frac{\Gamma(\kappa)^q}{\Gamma(\kappa)}
    \textstyle\prod_{i=1}^{q}w^{\alpha}_{i} \\
\alpha, \beta, \lambda, \eta, \xi, \nu,
  \kappa &\Coloneqq (\textrm{statistical constants})
\end{align*}

\textbf{\footnotesize Likelihood Denotation}
\begin{align*}
%
\Denotation[\Likelihood]{(\texttt{partition}\, B_1\, \dots\, B_k)}(\mathbf{X})
  &\Coloneqq \textstyle\prod_{i=1}^{k}
    \Denotation[\Likelihood]{B_k}(\mathbf{X}) \\
\Denotation[\Likelihood]{(\texttt{block}\, (a_1\, \dots\, a_l)\, C_1\, \dots\, C_t)}(\mathbf{X})
  &\Coloneqq
    \textstyle\prod_{i=1}^{n}\sum_{j=1}^{t}
    \Denotation[\Likelihood]{C_j}(\mathbf{X}_{i})\\
\Denotation[\Likelihood]{(\texttt{cluster}\, s\, V_1\, \dots\, V_l)}(\mathbf{x})
  &\Coloneqq \frac{s}{n} \; \textstyle\prod_{i=1}^{l}\Denotation[\Likelihood]{V_i}(\mathbf{x}) \\
\Denotation[\Likelihood]{(\texttt{var}\, a\, D)}(\mathbf{x})
  &\Coloneqq \Denotation[\Likelihood]{D}(\mathbf{x}_a) \\
\Denotation[\Likelihood]{(\texttt{normal}\, v\, y)}(x)
  &\Coloneqq \frac{1}{\sqrt{2\pi{y}^2}}
    e^{-\left(\frac{x-v}{\sqrt{2}y}\right)^2}\\
\Denotation[\Likelihood]{(\texttt{poisson}\, y)}(x)
  &\Coloneqq y^xe^{-y}/x! \\
\Denotation[\Likelihood]{(\texttt{categorical}\, w_1\, \dots\, w_q)}(x)
  &\Coloneqq w_{x}
\end{align*}
\end{subfigure}\hfill%
\begin{subfigure}[t]{.45\linewidth}
\textbf{\footnotesize DSL to Venture Translation}
\medskip

\underline{\footnotesize Probabilistic Program in DSL}
\begin{lstlisting}[style=dsledits,basicstyle=\ttfamily\scriptsize]
(partition
 (block (1)
  (cluster 6 (var 1 (normal 0.6 2.1)))
  (cluster 4 (var 1 (normal 0.3 1.7))))
 (block (2 3)
  (cluster 2 (var 2 (normal 7.6 1.9)
             (var 3 (poisson 12))))
  (cluster 3 (var 2 (normal 1.1 0.5)
             (var 3 (poisson 1))))
  (cluster 5 (var 2 (normal -0.6 2.9)
             (var 3 (poisson 4))))))
\end{lstlisting}
\bigskip

\underline{\footnotesize Probabilistic Program in Venture}
\begin{lstlisting}[style=venturescript, basicstyle=\ttfamily\scriptsize]
assume block1_cluster =
  categorical(simplex([0.6, 0.4])) #block:1;

assume var1 = cond(
  (block1_cluster == 0) (normal(0.6, 2.1))
  (block1_cluster == 1) (normal(0.3, 1.7)));

assume block2_cluster =
  categorical(simplex([0.2, 0.3, 0.5])) #block:2;

assume var2 = cond(
  (block2_cluster == 0) (normal(7.6, 1.9))
  (block2_cluster == 1) (normal(1.1, 0.5))
  (block2_cluster == 2) (normal(-0.6, 2.9)));

assume var3 = cond(
  (block2_cluster == 0) (poisson(12))
  (block2_cluster == 1) (poisson(1))
  (block2_cluster == 2) (poisson(4)));
\end{lstlisting}
\bigskip

\underline{Prediction Query in Venture}
\begin{lstlisting}[style=venturescript, basicstyle=\ttfamily]
observe var3 = 8;
infer repeat(100, {gibbs(quote(block), one)});
sample [var1, var2];
\end{lstlisting}
\end{subfigure}
\bigskip
\hrule

\captionsetup{skip=5pt}
\caption{Components of domain-specific language for automatic data
modeling of multivariate tabular data using nonparametric mixture models.
This DSL is context-sensitive and contains a custom prior denotation.}
\label{fig:dsl-crosscat}
\end{figure}


\definecolor{EditDeleteColor}{RGB}{255,175,175}
\definecolor{EditAddColor}{RGB}{200,225,255}

\begin{figure}[t]
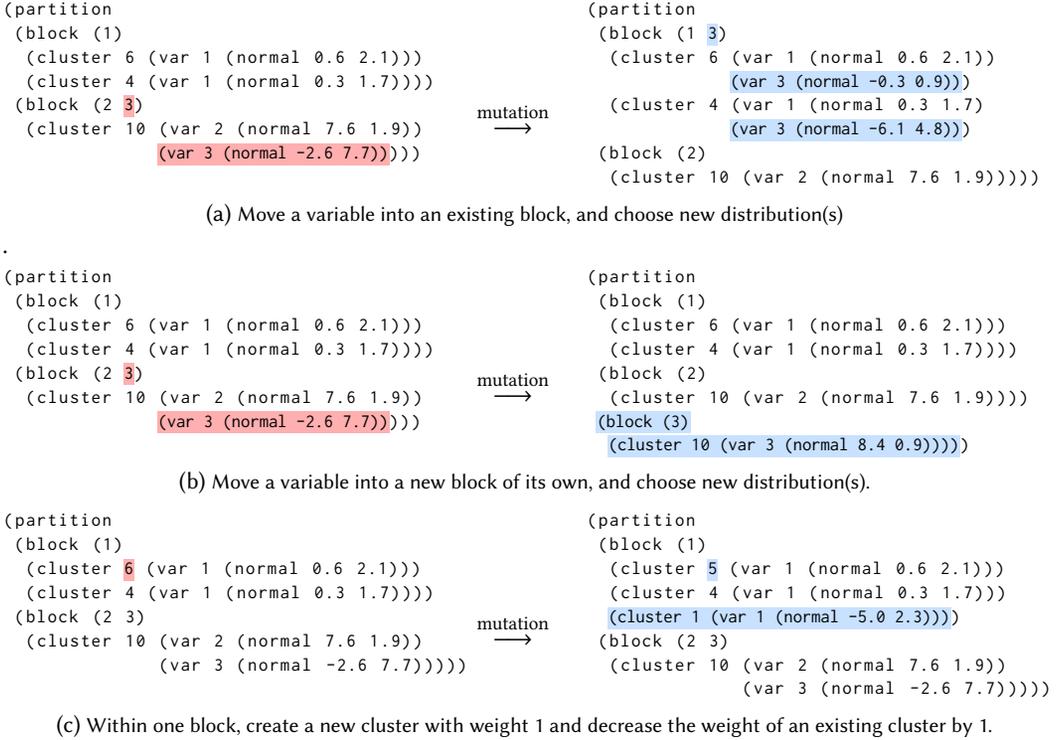

\centering


\begin{subfigure}{1.0\linewidth}
\begin{minipage}[t]{0.45\textwidth}
\sethlcolor{EditDeleteColor}
\begin{lstlisting}[style=dsledits,basicstyle=\ttfamily\scriptsize]
(partition
 (block (1)
  (cluster 6 (var 1 (normal 0.6 2.1)))
  (cluster 4 (var 1 (normal 0.3 1.7))))
 (block (2 (*@\hl{3}@*))
  (cluster 10 (var 2 (normal 7.6 1.9))
              (*@\hl{(var 3 (normal -2.6 7.7))}@*))))
\end{lstlisting}
\end{minipage}
\begin{minipage}[t]{0.1\textwidth}
\vspace{1.5cm}
$\overset{\rm mutation}{\longrightarrow}$
\end{minipage}
\begin{minipage}[t]{0.55\textwidth}
\sethlcolor{EditAddColor}
\begin{lstlisting}[style=dsledits,basicstyle=\ttfamily\scriptsize]
(partition
 (block (1 (*@\hl{3}@*))
  (cluster 6 (var 1 (normal 0.6 2.1))
             (*@\hl{(var 3 (normal -0.3 0.9))}@*))
  (cluster 4 (var 1 (normal 0.3 1.7)
             (*@\hl{(var 3 (normal -6.1 4.8))}@*))
 (block (2)
  (cluster 10 (var 2 (normal 7.6 1.9)))))
\end{lstlisting}
\end{minipage}
\captionsetup{skip=0pt}
\caption{\footnotesize Move a variable into an existing block, and choose new distribution(s)}.
\end{subfigure}


\begin{subfigure}{1.0\linewidth}
\begin{minipage}[t]{0.45\textwidth}
\sethlcolor{EditDeleteColor}
\begin{lstlisting}[style=dsledits,basicstyle=\ttfamily\scriptsize]
(partition
 (block (1)
  (cluster 6 (var 1 (normal 0.6 2.1)))
  (cluster 4 (var 1 (normal 0.3 1.7))))
 (block (2 (*@\hl{3}@*))
  (cluster 10 (var 2 (normal 7.6 1.9))
              (*@\hl{(var 3 (normal -2.6 7.7))}@*))))
\end{lstlisting}
\end{minipage}
\begin{minipage}[t]{0.1\textwidth}
\vspace{1.5cm}
$\overset{\rm mutation}{\longrightarrow}$
\end{minipage}
\begin{minipage}[t]{0.55\textwidth}
\sethlcolor{EditAddColor}
\begin{lstlisting}[style=dsledits,basicstyle=\ttfamily\scriptsize]
(partition
 (block (1)
  (cluster 6 (var 1 (normal 0.6 2.1)))
  (cluster 4 (var 1 (normal 0.3 1.7))))
 (block (2)
  (cluster 10 (var 2 (normal 7.6 1.9))))
 (*@\hl{(block (3)}@*)
  (*@\hl{(cluster 10 (var 3 (normal 8.4 0.9))))}@*))
\end{lstlisting}
\end{minipage}
\captionsetup{skip=0pt}
\caption{\footnotesize Move a variable into a new block of its own, and choose new distribution(s).}
\end{subfigure}

\begin{subfigure}{1.0\linewidth}
\begin{minipage}[t]{0.45\textwidth}
\sethlcolor{EditDeleteColor}
\begin{lstlisting}[style=dsledits,basicstyle=\ttfamily\scriptsize]
(partition
 (block (1)
  (cluster (*@\hl{6}@*) (var 1 (normal 0.6 2.1)))
  (cluster 4 (var 1 (normal 0.3 1.7))))
 (block (2 3)
  (cluster 10 (var 2 (normal 7.6 1.9))
              (var 3 (normal -2.6 7.7)))))
\end{lstlisting}
\end{minipage}
\begin{minipage}[t]{0.1\textwidth}
\vspace{1.5cm}
$\overset{\rm mutation}{\longrightarrow}$
\end{minipage}
\begin{minipage}[t]{0.45\textwidth}
\sethlcolor{EditAddColor}
\begin{lstlisting}[style=dsledits,basicstyle=\ttfamily\scriptsize]
(partition
 (block (1)
  (cluster (*@\hl{5}@*) (var 1 (normal 0.6 2.1)))
  (cluster 4 (var 1 (normal 0.3 1.7)))
  (*@\hl{(cluster 1 (var 1 (normal -5.0 2.3)))}@*))
 (block (2 3)
  (cluster 10 (var 2 (normal 7.6 1.9))
              (var 3 (normal -2.6 7.7)))))
\end{lstlisting}
\end{minipage}
\captionsetup{skip=0pt}
\caption{\footnotesize Within one block, create a new cluster with weight 1 and decrease the weight of an existing cluster by 1.}
\end{subfigure}

\captionsetup{skip=5pt}
\caption{Examples of mutation operators applied to a mixture modeling DSL program during
Bayesian synthesis.}
\vspace{-.175in}
\label{fig:crosscat-edits}
\end{figure}

\section{Nonparametric Mixture Model DSL for Multivariate Tabular Data}
\label{sec:dsl-crosscat}

We now describe a second DSL for multivariate tabular data called MultiMixture. The
setup starts with a data table containing $m$ columns and $n$ rows. Each column
$c$ represents a distinct random variable $X_c$ and each row $r$ is a joint
instantiation $\set{x_{r1},\dots,x_{rm}}$ of all the $m$ random variables. Our
DSL describes the data generating process for cells in the table based on the
nonparametric ``Cross-Categorization'' mixture model introduced by
\citet{mansinghka2014}, which we review.

Figure~\ref{fig:dsl-crosscat} shows the components required for Bayesian
synthesis in the MultiMixture DSL.
Probabilistic programs in this DSL assert that the set of columns $[m]
\coloneqq \set{1,\dots,m}$ is partitioned into $k$ non-overlapping blocks $B_i$,
which is specified by the $\texttt{partition}$ production rule. Each
$\texttt{block}$ contains a set of unique columns $\set{a_1,\dots,a_l}
\subset [m]$, where a particular column must appear in exactly one block of the
\texttt{partition}.
Each \texttt{block} has clusters $C_i$, where each cluster specifies a
component in a mixture model. Each \texttt{cluster} has relative
weight $s$, a set of \texttt{var} objects that specify a variable index
$a$, and a primitive distribution $D$. The distributions are \texttt{normal}
(with mean $x$ and variance $y$); \texttt{poisson} (with mean $y$); and
\texttt{categorical} (with category weights $w_i$).

Since the MultiMixture DSL is not context-free, we specify custom $\Prior$ semantics that
recursively describes the joint probability of all terms in an expression.
These probabilities provide a hierarchical decomposition of the distribution in
\citep[Section~2.2]{mansinghka2014} based on the structure of the DSL program.
Prior normalization (Condition~\ref{cond:prior-normalized}) holds since the five
production rules in the grammar are non-recursive.
The $\Likelihood$ semantics break down the full probability of a data table of
observations $\mathbf{X}$ into the cell-wise probabilities within each mixture
component. The likelihood is bounded (Condition~\ref{cond:likelihood-bounded})
since \texttt{poisson} and \texttt{categorical} likelihoods are bounded by
one, and $\Denotation[\Prior]{(\texttt{normal}\, v\, y)}$ is conjugate to
$\Denotation[\Likelihood]{(\texttt{normal}\, v\, y)}$ \citep{bernardo1994}.
The semantics of MultiMixture are designed to improve on the model
family from \citet{mansinghka2014} by making it (i) easier to embed the model in a
probabilistic language without enforcing complicated exchangeable coupling
(i.e.~the Venture programs for modeling and prediction in
Figure~\ref{fig:dsl-crosscat}); and (ii) possible to write a recursive,
decomposable semantics for the prior and likelihood.
%

Since the MultiMixture DSL is not generated by a context-free grammar, we employ custom
synthesis algorithms using the cycle of Gibbs kernels given in
\citet[Section~2.4]{mansinghka2016}.
Figure~\ref{fig:crosscat-edits} shows several examples of program mutations that
occur over the course of synthesis to sample from the approximate posterior
distribution over DSL expressions.
For a given expression $E \in \Lang$, a full sweep over all program mutations
shown in Figure~\ref{fig:crosscat-edits} has time complexity $O(mnkt)$, where
$m$ is the number of columns, $n$ is the number of rows, $k$ is the number of
$\texttt{block}$ subexpressions, and $t$ is the maximum number of
\texttt{cluster} subexpressions under any \texttt{block} subexpression. Full
details of this algorithmic analysis can be found in
\citet[Section~2.4]{mansinghka2016}.



\begin{figure}[!t]
\begin{subtable}{.5\linewidth}
\footnotesize
\begin{tabular}{cll}
\multirow{4}{*}{\includegraphics[width=.45\linewidth]{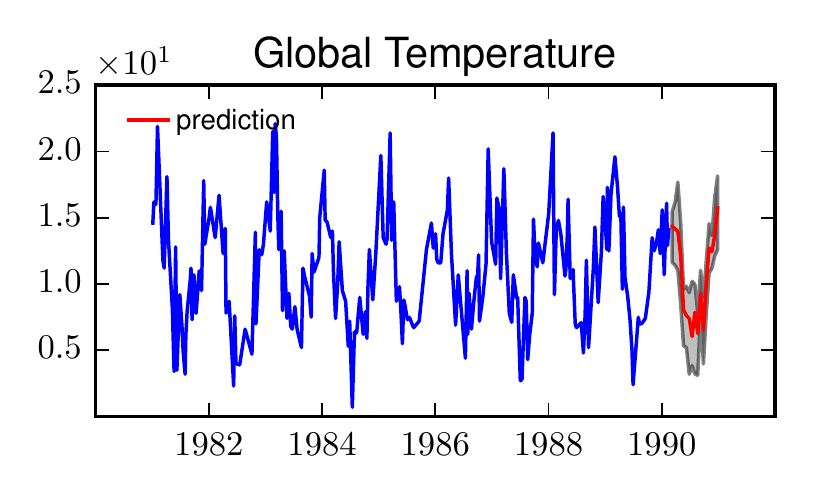}}
                & Temporal Structure & $p^{\scriptscriptstyle{\rm synth}}$ \\ \cline{2-3} \addlinespace[0.5em]
       & $\checkmark$ White Noise  & 100\%     \\
       & $\times$ Linear Trend     & 16\%      \\
       & $\checkmark$ Periodicity  & 92\%      \\
       & $\times$ Change Point     & 4\%
\end{tabular}
\captionsetup{skip=0pt}
\caption{}
\label{subfig:gp-structure-temperature}
\end{subtable}%
\begin{subtable}{.5\linewidth}
\footnotesize
\begin{tabular}{cll}
\multirow{4}{*}{\includegraphics[width=.45\linewidth]{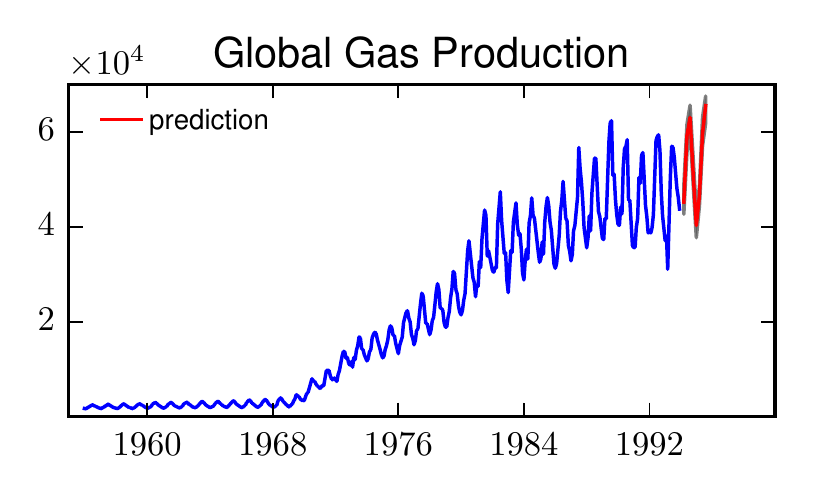}}
                & Temporal Structure & $p^{\scriptscriptstyle{\rm synth}}$ \\ \cline{2-3} \addlinespace[0.5em]
        & $\checkmark$ White Noise  & 100\%     \\
        & $\checkmark$ Linear Trend & 85\%      \\
        & $\checkmark$ Periodicity  & 76\%      \\
        & $\checkmark$ Change Point & 76\%
\end{tabular}
\captionsetup{skip=0pt}
\caption{}
\label{subfig:gp-structure-wages}
\end{subtable}
\medskip

\begin{subtable}{.5\linewidth}
\footnotesize
\begin{tabular}{cll}
\multirow{4}{*}{\includegraphics[width=.45\linewidth]{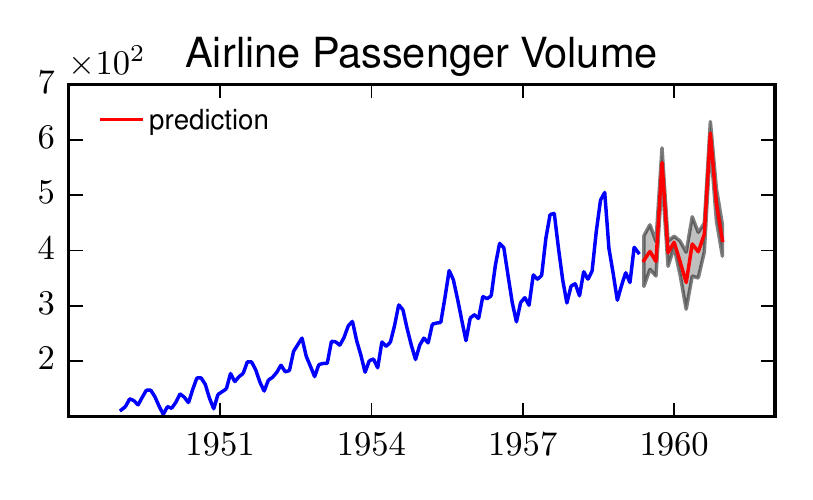}}
                & Temporal Structure & $p^{\scriptscriptstyle{\rm synth}}$ \\ \cline{2-3} \addlinespace[0.5em]
        & $\checkmark$ White Noise  & 100\%     \\
        & $\checkmark$ Linear Trend  & 95\%     \\
        & $\checkmark$ Periodicity  & 95\%      \\
        & $\times$ Change Point     & 22\%
\end{tabular}
\captionsetup{skip=0pt}
\caption{}
\label{subfig:gp-structure-airline}
\end{subtable}%
\begin{subtable}{.5\linewidth}
\footnotesize
\begin{tabular}{cll}
\multirow{4}{*}{\includegraphics[width=.45\linewidth]{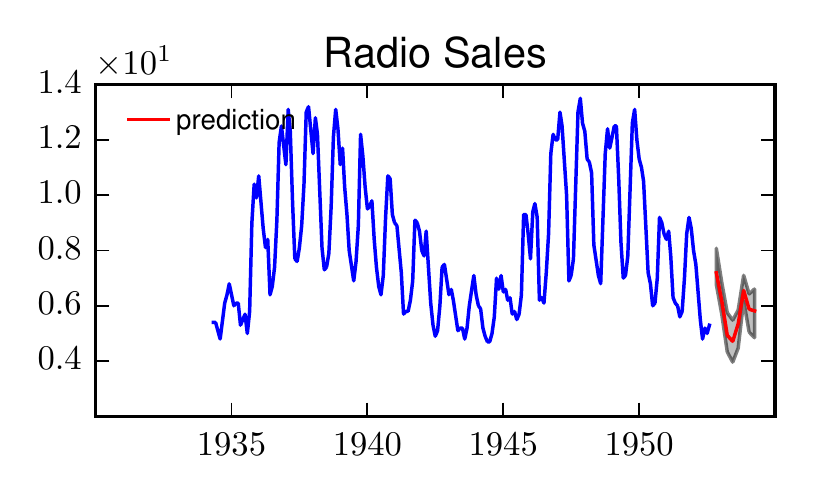}}
                & Temporal Structure & $p^{\scriptscriptstyle{\rm synth}}$ \\ \cline{2-3} \addlinespace[0.5em]
        & $\checkmark$ White Noise   & 100\%   \\
        & $\times$ Linear Trend      & 6\%     \\
        & $\checkmark$ Periodicity   & 93\%    \\
        & $\times$ Change Point      & 23\%
\end{tabular}
\captionsetup{skip=0pt}
\caption{}
\label{subfig:gp-structure-gas}
\end{subtable}
\medskip

\begin{subtable}{.5\linewidth}
\footnotesize
\begin{tabular}{cll}
\multirow{4}{*}{\includegraphics[width=.45\linewidth]{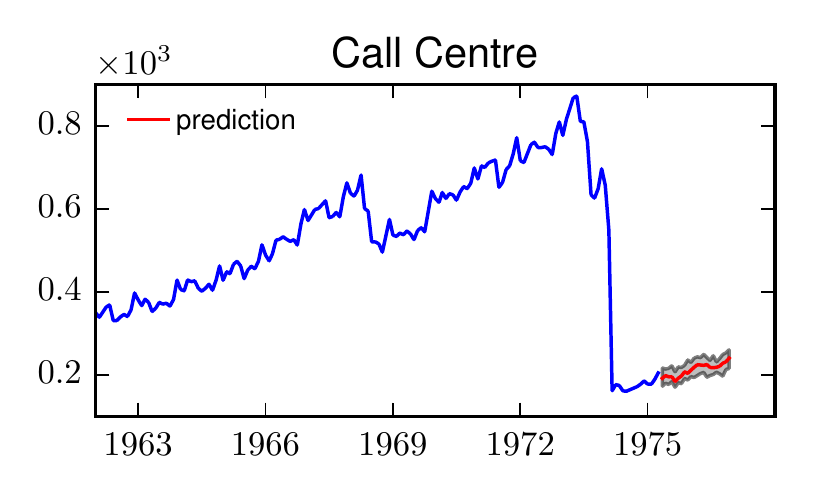}}
                & Temporal Structure & $p^{\scriptscriptstyle{\rm synth}}$ \\ \cline{2-3} \addlinespace[0.5em]
        & $\checkmark$ White Noise  & 100\%     \\
        & $\checkmark$ Linear Trend & 93\%      \\
        & $\checkmark$ Periodicity  & 97\%      \\
        & $\checkmark$ Change Point & 90\%
\end{tabular}
\captionsetup{skip=0pt}
\caption{}
\label{subfig:gp-structure-centre}
\end{subtable}%
\begin{subtable}{.5\linewidth}
\footnotesize
\begin{tabular}{cll}
\multirow{4}{*}{\includegraphics[width=.45\linewidth]{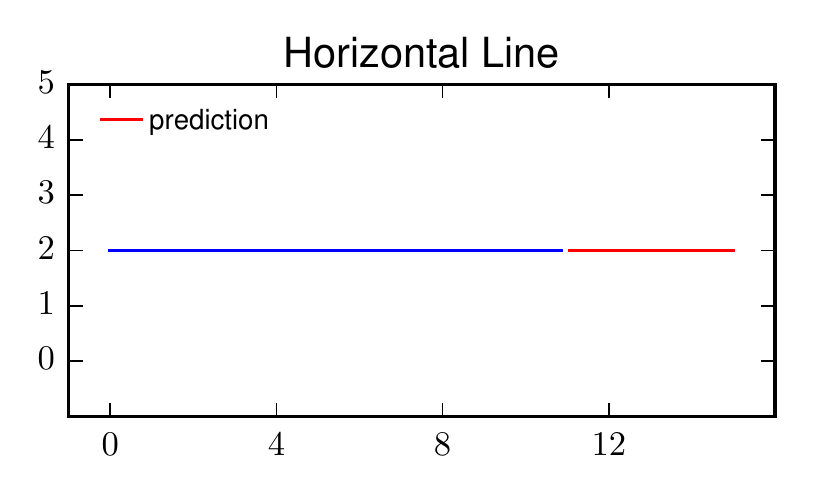}}
                & Temporal Structure & $p^{\scriptscriptstyle{\rm synth}}$ \\ \cline{2-3} \addlinespace[0.5em]
        & $\times$ White Noise  & 3\%     \\
        & $\times$ Linear Trend & 8\%      \\
        & $\times$ Periodicity  & 1\%      \\
        & $\times$ Change Point & 2\%
\end{tabular}
\captionsetup{skip=0pt}
\caption{}
\label{subfig:gp-structure-const}
\end{subtable}

\captionsetup{skip=5pt}
\caption{Detecting probable temporal structures in multiple time series with
varying characteristics. In each of the panels %
\subref{subfig:gp-structure-temperature}--\subref{subfig:gp-structure-const},
the plot shows observed time series data in blue and the table identifies which
temporal structures truly exist in the time series as well as the posterior
probability $p^{\rm synth}$ that each structure is present in a Gaussian process
program from Bayesian synthesis given the data. As described in
Eq~\eqref{eq:prob-has-property} of
Section~\ref{subsec:bayesian-syntheis-qualitative}, $p^{\rm synth}$ is estimated
by returning the fraction of programs in the ensemble that contain each
structure. Programs from Bayesian synthesis accurately reflect the probable
presence or absence of linear, periodic, and change point characteristics. The
red lines show predictions from a randomly selected synthesized program, showing
that they additionally capture compositions of temporal structures to faithfully
model the data.}
\label{fig:gp-structures}
\end{figure}

\section{Experimental Results}

We have developed a set of benchmark problems in two domains: (i) time series
data and (ii) multivariate tabular data. These benchmarks reflect a
broad range of real-world data generating processes with varying qualitative
structure.
We evaluated these probabilistic programs in two ways. First, we qualitatively
assessed the inferences about structure that were made by processing the text of
the programs. Second, we quantitatively benchmarked the predictive accuracy of
the synthesized programs. Probabilistic programs from Bayesian synthesis
often provide improved accuracy over standard methods for data modeling.


\begin{figure}[!tbp]

\begin{subfigure}{\linewidth}
\begin{subfigure}{\linewidth}
\includegraphics[width=\linewidth]{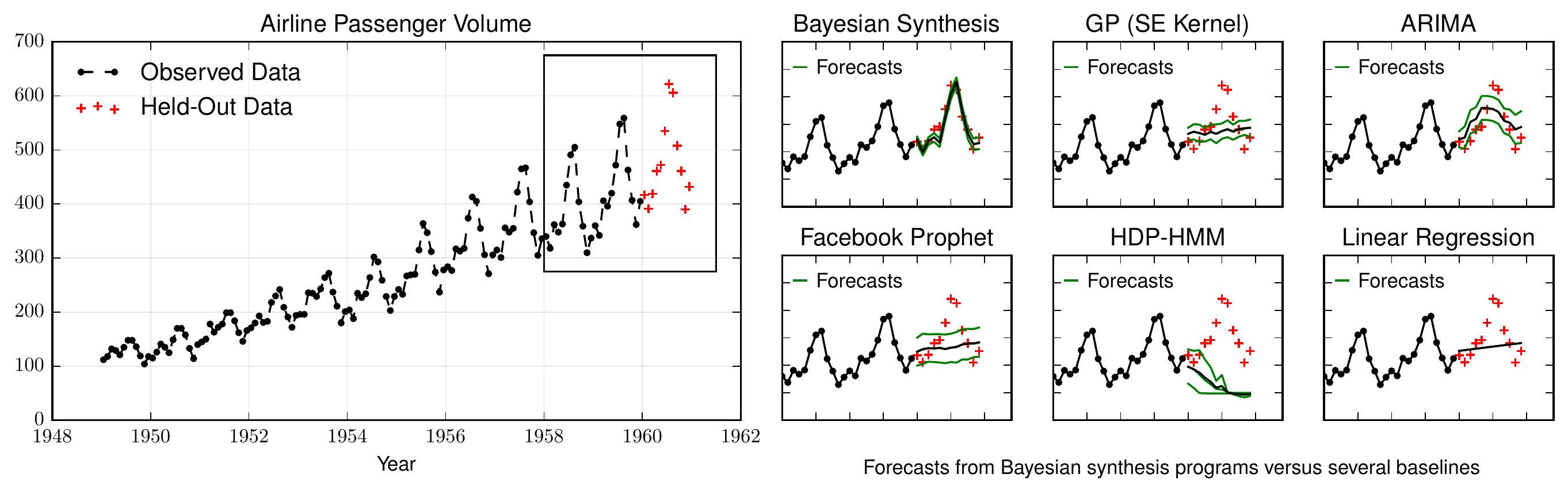}
\end{subfigure}

\begin{subtable}{\linewidth}
\footnotesize
\subcaption*{Standardized Root Mean Squared Forecasting Error (RMSE)
  on Real-World Benchmark Problems}
\begin{tabular*}{\textwidth}{@{\extracolsep{\fill}}lrrrrrrr}
\toprule
  & \texttt{temperature}
  & \texttt{airline}
  & \texttt{call}
  & \texttt{mauna}
  & \texttt{radio}
  & \texttt{solar}
  & \texttt{wheat}
\\ \cmidrule{2-8}
Bayesian Synthesis
  & \textbf{1.0}
  & \textbf{1.0}
  & \textbf{1.0}
  & \textbf{1.0}
  & \textbf{1.0}
  & 1.47
  & 1.50
\\
Gaussian Process (Squared Exponential Kernel)
  & 1.70
  & 2.01
  & 4.26
  & 1.54
  & 2.03
  & 1.63
  & 1.37
\\
Auto-Regressive Integrated Moving Average
  & 1.85
  & 1.32
  & 2.44
  & 1.09
  & 2.08
  & \textbf{1.0}
  & 1.41
\\
Facebook Prophet
  & 2.00
  & 1.83
  & 5.61
  & 1.23
  & 3.09
  & 1.73
  & 1.29
\\
Hierarchical-DP Hidden Markov Model
  & 1.77
  & 4.61
  & 2.26
  & 14.77
  & 1.19
  & 3.49
  & 1.89
\\
Linear Regression
  & 1.30
  & 1.79
  & 6.23
  & 2.19
  & 2.73
  & 1.57
  & \textbf{1.0}
\\ \bottomrule
\end{tabular*}
\end{subtable}
\end{subfigure}

\captionsetup{skip=5pt}
\caption{Quantitative evaluation of forecasting using Gaussian process
programs from Bayesian synthesis, as compared to five common baselines. The top
panels show extrapolated time series by each method on the \texttt{airline}
data. The table shows prediction errors achieved by each method on seven
real-world time series.}
\label{fig:gp-predictions}

\end{figure}

\subsection{Inferring Qualitative Structure from Time Series Data}

This set of benchmark problems consists of five real-world time series datasets
from \citet{gpss}: (i) bi-weekly global temperature measurements from 1980 to
1990, (ii) monthly global gas production from 1956 to 1995, (iii) monthly
airline passenger volume from 1948, (iv) monthly radio sales from 1935 to 1954,
and (iv) monthly call center activity from 1962 to 1976. Each of the datasets
has a different qualitative structure. For example, the global temperature
dataset is comprised of yearly periodic structure overlaid with white noise,
while the call center dataset has a linear trend until 1973 and a sharp drop
afterwards.
Figure~\ref{fig:gp-structures} shows the five datasets, along with the
inferences our technique made about the qualitative structure of the data
generating process underlying each dataset. The results show that our technique
accurately infers the presence or absence of each type of structure in each
benchmark. Specifically, if a specific qualitative structure is deemed to be
present if the posterior probability inferred by our technique is above 50\%,
and inferred to be absent otherwise, then the inferences from Bayesian synthesis
match the real world structure in every case.
To confirm that the
synthesized programs report the absence of any meaningful structure when it is
not present in the data, we also include a horizontal
line shown in panel~\subref{subfig:gp-structure-const}.

\subsection{Quantitative Prediction Accuracy for Time Series Data}

Figure~\ref{fig:gp-predictions} shows quantitative prediction accuracy for seven
real world econometric time series. The top panel visually compares predictions
from our method with predictions obtained by baseline methods on the airline
passenger volume benchmark. We chose to compare against baselines that (i) have
open source, re-usable implementations; (ii) are widely used and cited in the
statistical literature; and (iii) are based on flexible model families that,
like our Gaussian process technique, have default hyperparameter settings and do
not require significant manual tuning to fit the data. Baseline methods include
Facebook's Prophet algorithm~\citep{taylor2017}; standard econometric
techniques, such as Auto-Regressive Integrated Moving Average (ARIMA)
modeling~\citep{hyndman2008}; and advanced time series modeling techniques from
nonparametric Bayesian statistics, such as the Hierarchical Dirichlet Process
Hidden Markov Model \citep{johnson2013}. Bayesian synthesis is the only
technique that accurately captures both quantitative and qualitative structure.
The bottom panel of Figure~\ref{fig:gp-predictions} shows quantitative results
comparing predictive accuracy on these datasets. Bayesian synthesis
produces more accurate models for five of the seven benchmark problems and is
competitive with other techniques on the other two problems.

\subsection{Runtime Versus Prediction Accuracy for Time Series Data}


\begin{figure}[!tb]
\begin{subfigure}[b]{.5\linewidth}
\includegraphics[width=\linewidth]{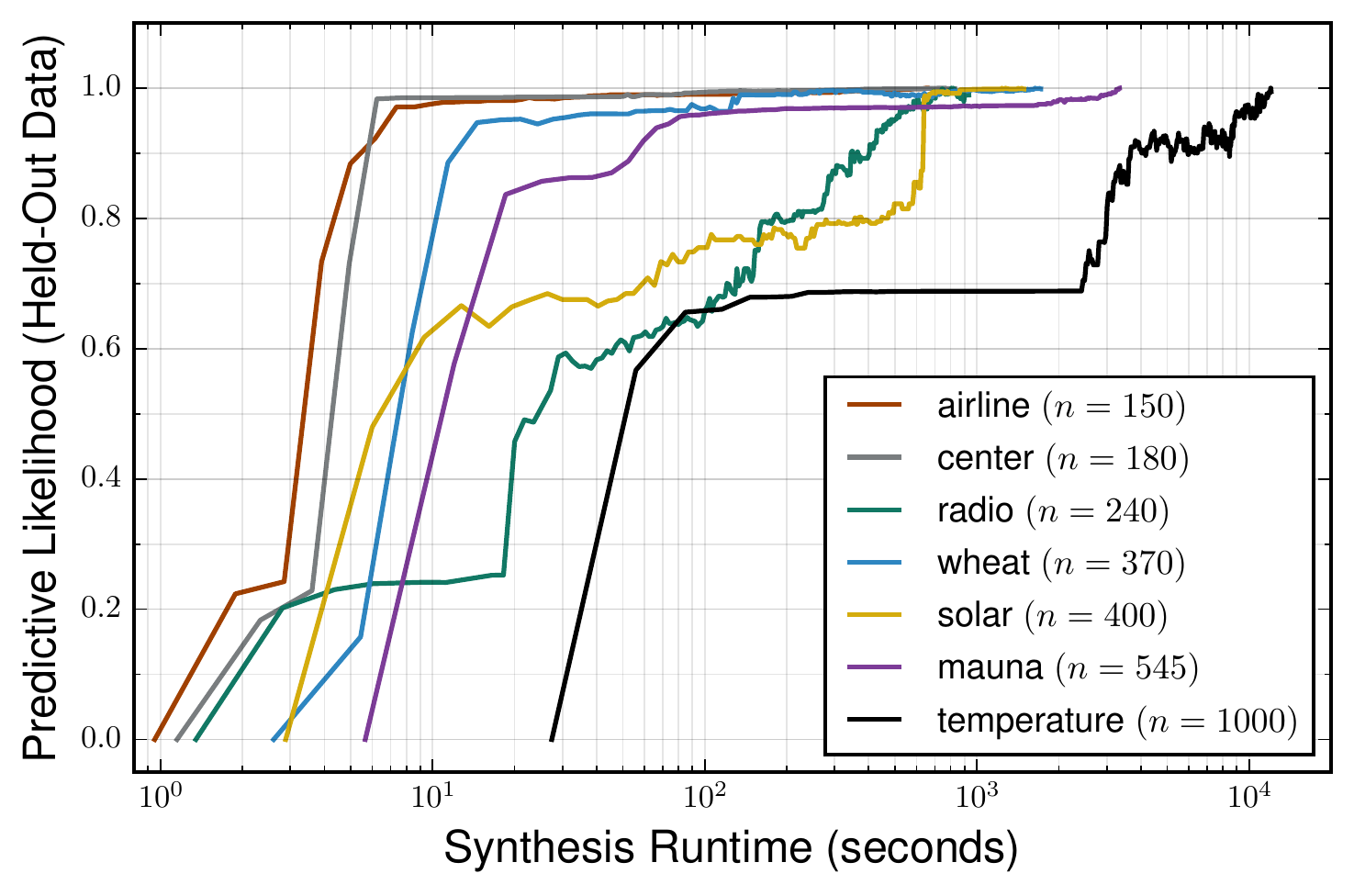}
\subcaption{\footnotesize Runtime vs. Accuracy Profiles}
\label{subfig:gp-synthesis-runtime-plot}
\end{subfigure}%
\begin{subfigure}[b]{.5\linewidth}
\includegraphics[width=\linewidth]{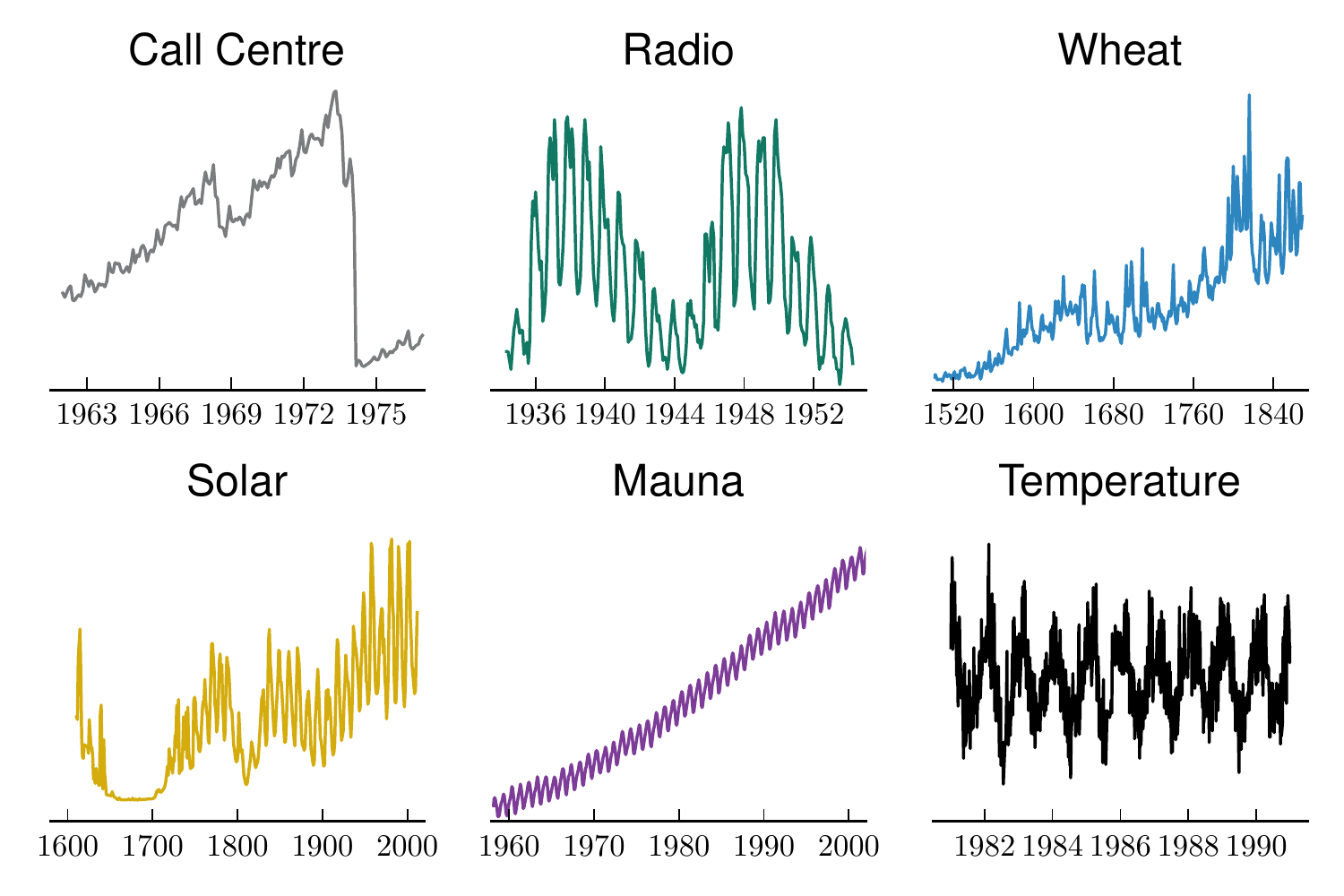}
\subcaption{\footnotesize Time Series Datasets}
\label{subfig:gp-synthesis-runtime-datasets}
\end{subfigure}%
\captionsetup{skip=5pt}
\caption{Synthesis runtime versus held-out predictive likelihood on the
time series benchmarked in Figure~\ref{fig:gp-predictions}. Panel
\subref{subfig:gp-synthesis-runtime-plot} shows a log-linear plot of the median
runtime and predictive likelihood values taken over 100 independent runs of
synthesis. Higher values of predictive likelihood on the y-axis indicate better
fit to the observed data. Panel \subref{subfig:gp-synthesis-runtime-datasets}
shows the time series datasets used to compute runtime and accuracy measurements.}
\label{fig:gp-synthesis-runtime}
\end{figure}


\begin{figure}[!p]

\scriptsize
\begin{tabular}{p{0.05cm}l|l|lc|rlrl}
\toprule
  & \multicolumn{1}{c|}{\multirow{2}{*}{Variable 1}}
  & \multicolumn{1}{c|}{\multirow{2}{*}{Variable 2}}
  & \multicolumn{2}{c|}{\multirow{2}{*}{True Predictive Structure}}
  & \multicolumn{4}{c}{Predictive Relationship Detected By}
\\
  &
  &
  &
  &
  & \multicolumn{2}{c}{Pearson Correlation}
  & \multicolumn{2}{c}{Bayesian Synthesis}
\\ \midrule
\subref{subfig:grid-0-0} & flavanoids        & color-intensity     & linear + bimodal                   & $\checkmark$ & $\times$ & $(0.03)$    & $\checkmark$ & $(0.97)$ \\
\subref{subfig:grid-0-1} & A02               & A07                 & linear + heteroskedastic           & $\checkmark$ & $\times$ & $(0.16)$   & $\checkmark$ & $(0.89)$ \\
\subref{subfig:grid-0-2} & A02               & A03                 & linear + bimodal + heteroskedastic & $\checkmark$ & $\times$ & $(0.03)$   & $\times    $ & $(0.66)$ \\
\subref{subfig:grid-0-3} & proline           & od280-of-wines      & nonlinear + missing regime         & $\checkmark$ & $\times$ & $(0.09)$   & $\checkmark$ & $(0.97)$ \\
\subref{subfig:grid-1-0} & compression-ratio & aspiration          & mean shift                         & $\checkmark$ & $\times$ & $(0.07)$   & $\checkmark$ & $(0.98)$ \\
\subref{subfig:grid-1-1} & age               & income              & different group tails              & $\checkmark$ & $\times$ & $(0.06)$   & $\checkmark$ & $(0.90)$ \\
\subref{subfig:grid-1-2} & age               & varices             & scale shift                        & $\checkmark$ & $\times$ & $(0.00)$  & $\checkmark$ & $(0.90)$ \\
\subref{subfig:grid-1-3} & capital-gain      & income              & different group tails              & $\checkmark$ & $\times$ & $(0.05)$   & $\times    $ & $(0.77)$ \\
\subref{subfig:grid-2-0} & city-mpg          & highway-mpg         & linearly increasing                & $\checkmark$ & $\checkmark$ & $(0.95)$ & $\checkmark$ & $(1.00)$ \\
\subref{subfig:grid-2-1} & horsepower        & highway-mpg         & linearly decreasing                & $\checkmark$ & $\checkmark$ & $(0.65)$ & $\checkmark$ & $(1.00)$ \\
\subref{subfig:grid-2-2} & education-years   & education-level     & different group means              & $\checkmark$ & $\checkmark$ & $(1.00)$ & $\checkmark$ & $(1.00)$ \\
\subref{subfig:grid-2-3} & compression-ratio & fuel-type           & different group means              & $\checkmark$ & $\checkmark$ & $(0.97)$ & $\checkmark$ & $(0.98)$ \\
\subref{subfig:grid-3-0} & cholesterol       & max-heart-rate      & none (+ outliers)                  & $\times$     & $\times$ & $(0.00)$     & $\times$ & $(0.08)$ \\
\subref{subfig:grid-3-1} & cholesterol       & st-depression       & none (+ outliers)                  & $\times$     & $\times$ & $(0.00)$    & $\times$ & $(0.00)$ \\
\subref{subfig:grid-3-2} & blood-pressure    & sex                 & none                               & $\times$     & $\times$ & $(0.01)$    & $\times$ & $(0.26)$ \\
\subref{subfig:grid-3-3} & st-depression     & electrocardiography & none                               & $\times$     & $\times$ & $(0.04)$    & $\times$ & $(0.00)$ \\  \bottomrule
\end{tabular}

\bigskip

\begin{subfigure}{.25\linewidth}
\includegraphics[width=\linewidth]{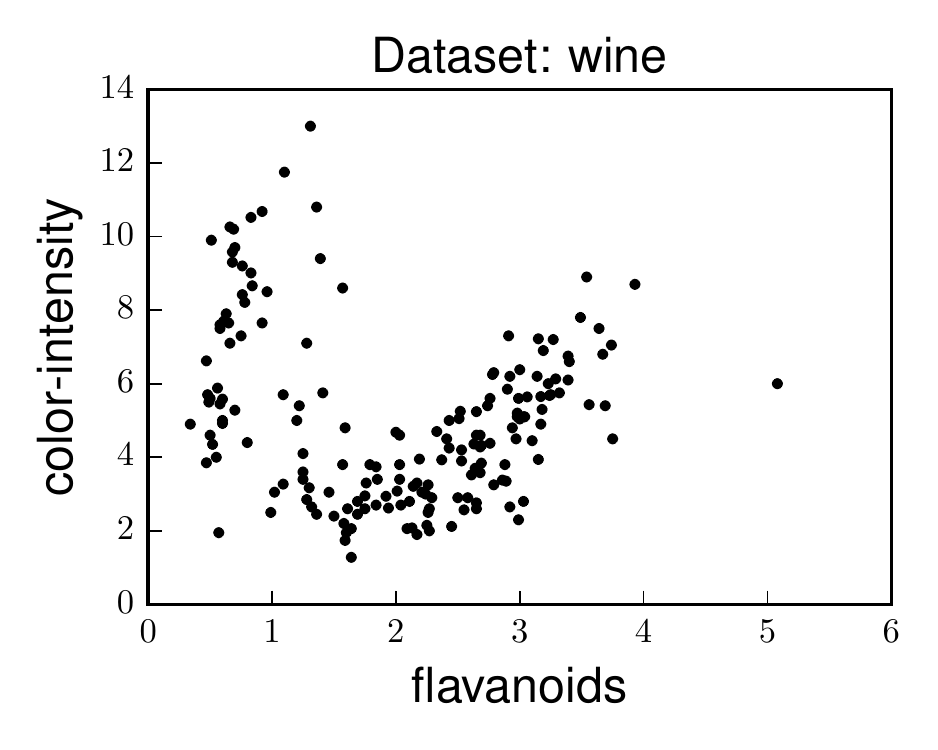}
\captionsetup{skip=-2pt}
\caption{\scriptsize Linear + Bimodal}
\label{subfig:grid-0-0}
\end{subfigure}%
\begin{subfigure}{.25\linewidth}
\includegraphics[width=\linewidth]{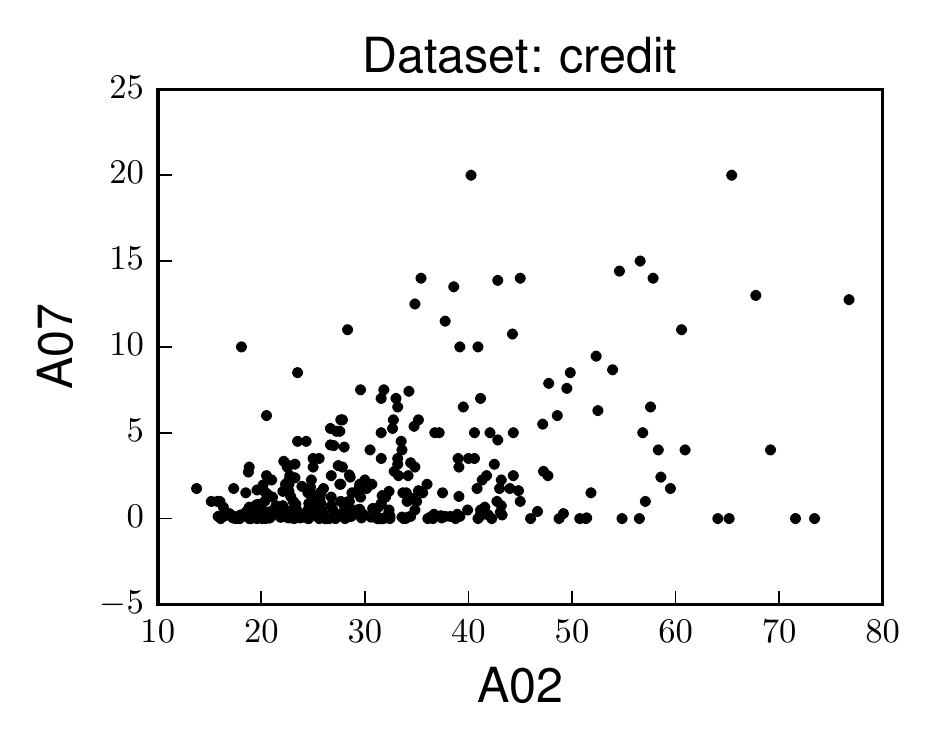}
\captionsetup{skip=-2pt}
\caption{\scriptsize Linear + Heteroskedastic}
\label{subfig:grid-0-1}
\end{subfigure}%
\begin{subfigure}{.25\linewidth}
\includegraphics[width=\linewidth]{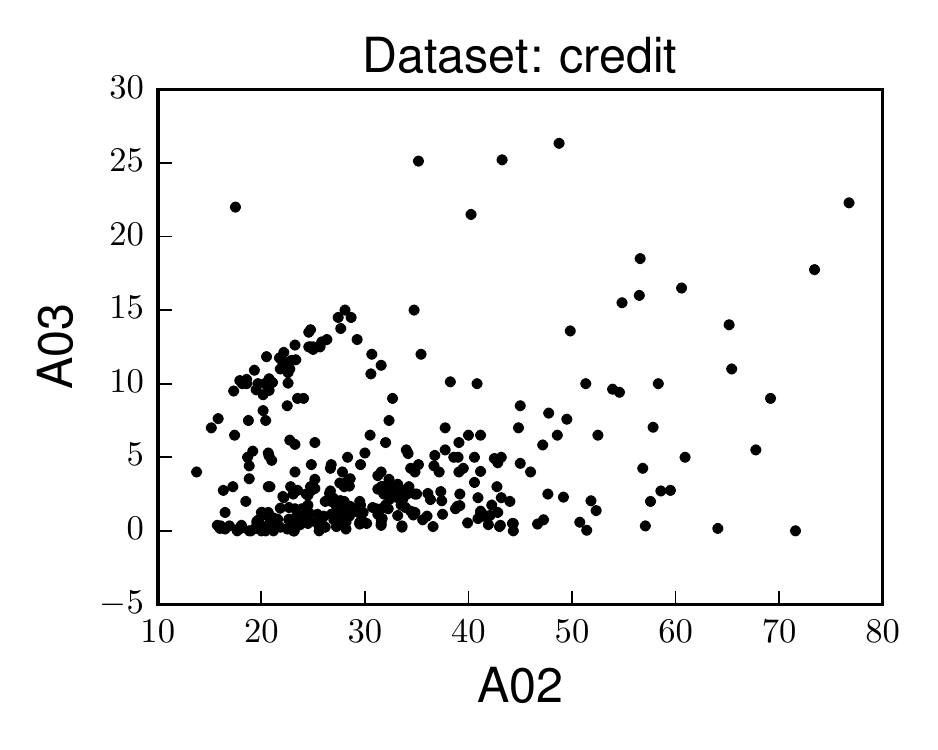}
\captionsetup{skip=-2pt}
\caption{\scriptsize Bimodal + Heteroskedastic}
\label{subfig:grid-0-2}
\end{subfigure}%
\begin{subfigure}{.25\linewidth}
\includegraphics[width=\linewidth]{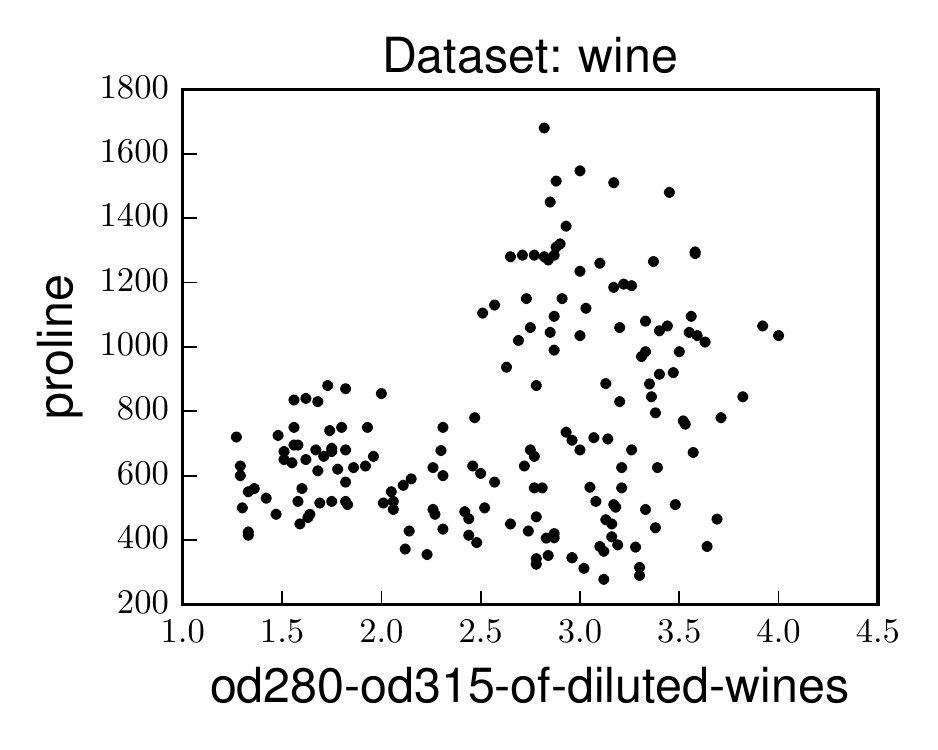}
\captionsetup{skip=-2pt}
\caption{\scriptsize Missing Regime}
\label{subfig:grid-0-3}
\end{subfigure}%

\begin{subfigure}{.25\linewidth}
\includegraphics[width=\linewidth]{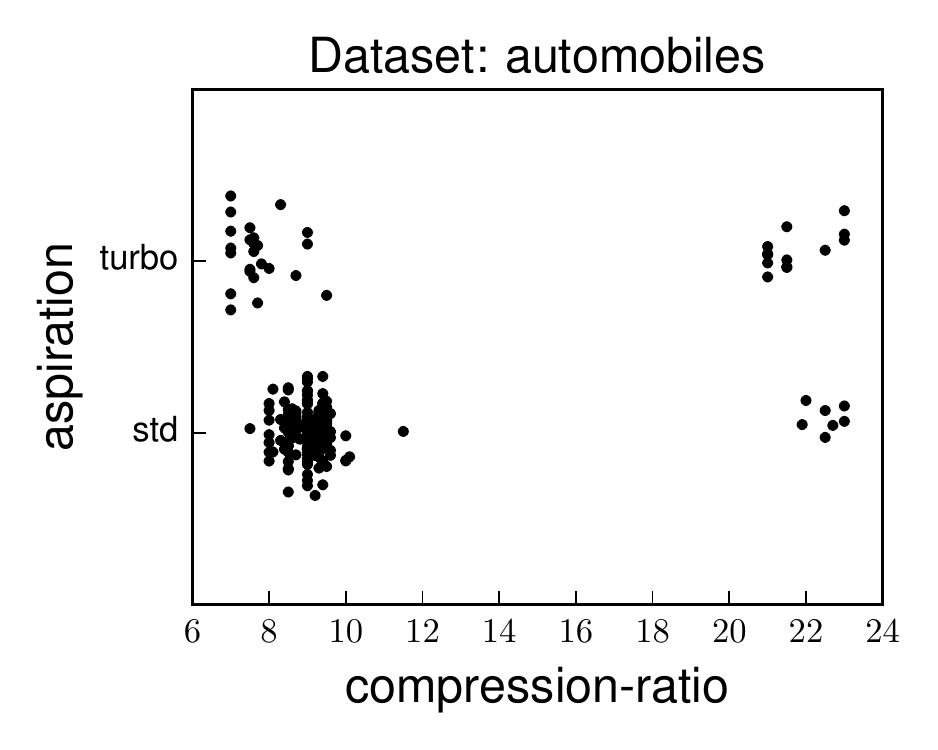}
\captionsetup{skip=-2pt}
\caption{\scriptsize Mean Shift}
\label{subfig:grid-1-0}
\end{subfigure}%
\begin{subfigure}{.25\linewidth}
\includegraphics[width=\linewidth]{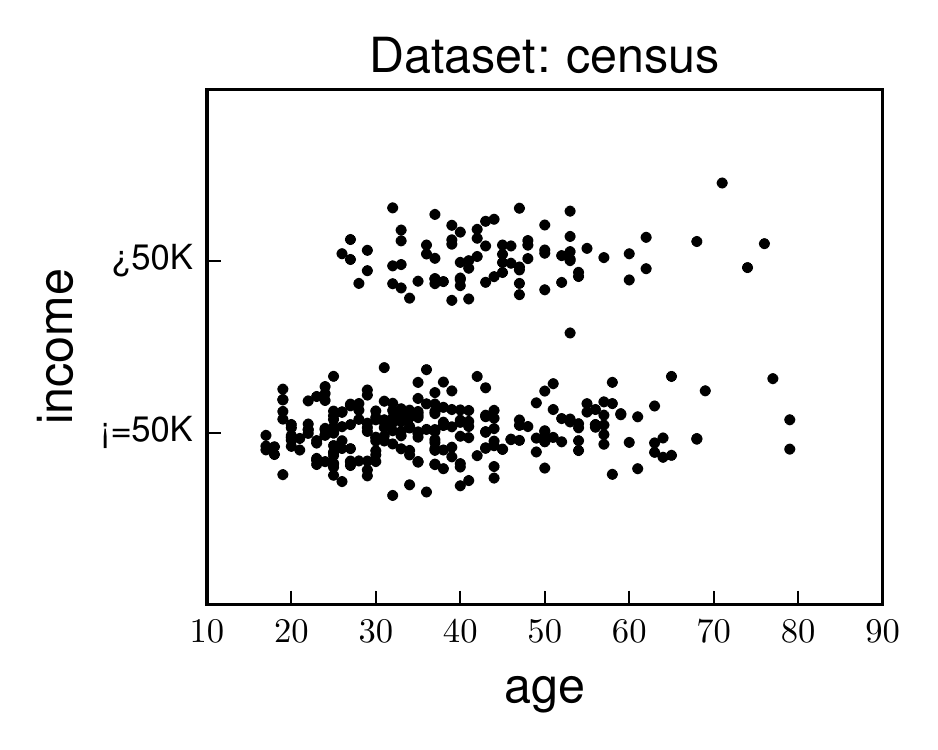}
\captionsetup{skip=-2pt}
\caption{\scriptsize Scale Shift}
\label{subfig:grid-1-1}
\end{subfigure}%
\begin{subfigure}{.25\linewidth}
\includegraphics[width=\linewidth]{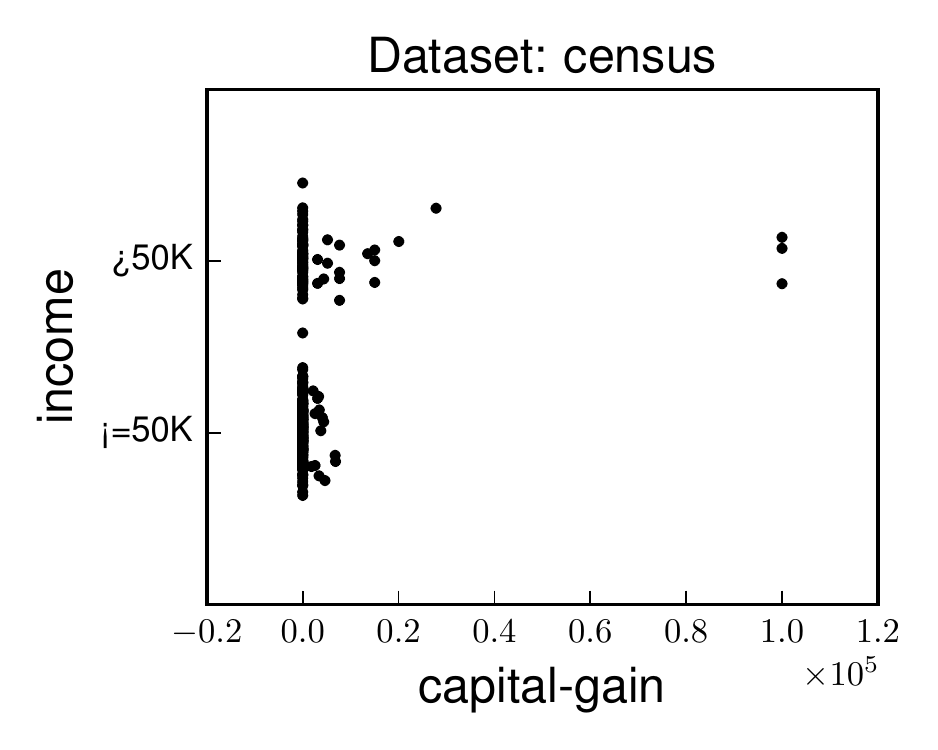}
\captionsetup{skip=-2pt}
\caption{\scriptsize Different Tails}
\label{subfig:grid-1-2}
\end{subfigure}%
\begin{subfigure}{.25\linewidth}
\includegraphics[width=\linewidth]{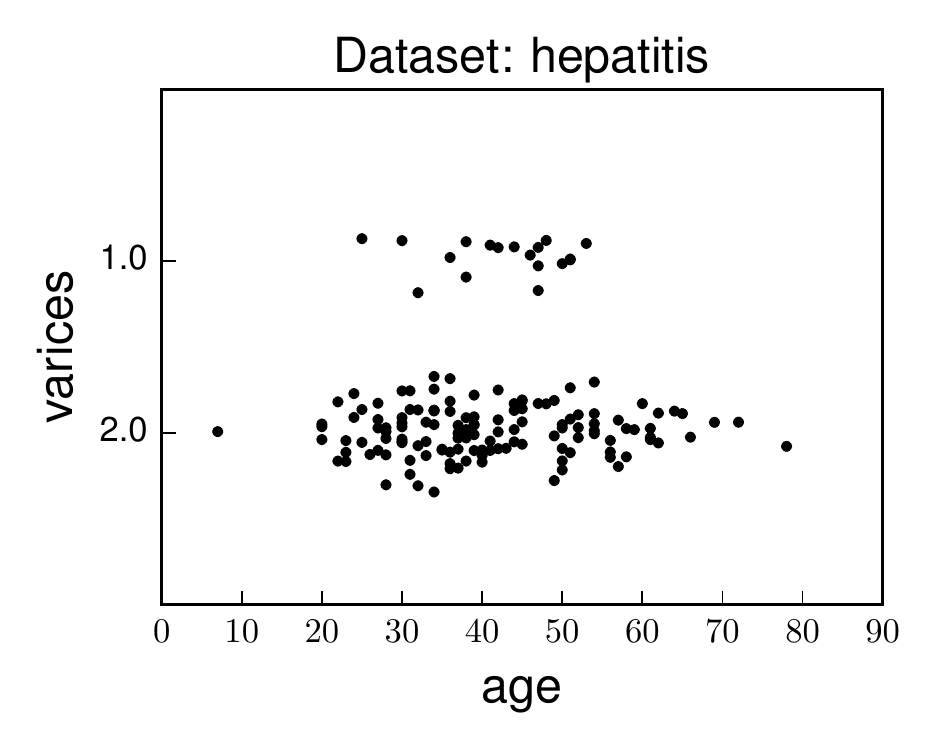}
\captionsetup{skip=-2pt}
\caption{\scriptsize Different Tails}
\label{subfig:grid-1-3}
\end{subfigure}%

\begin{subfigure}{.25\linewidth}
\includegraphics[width=\linewidth]{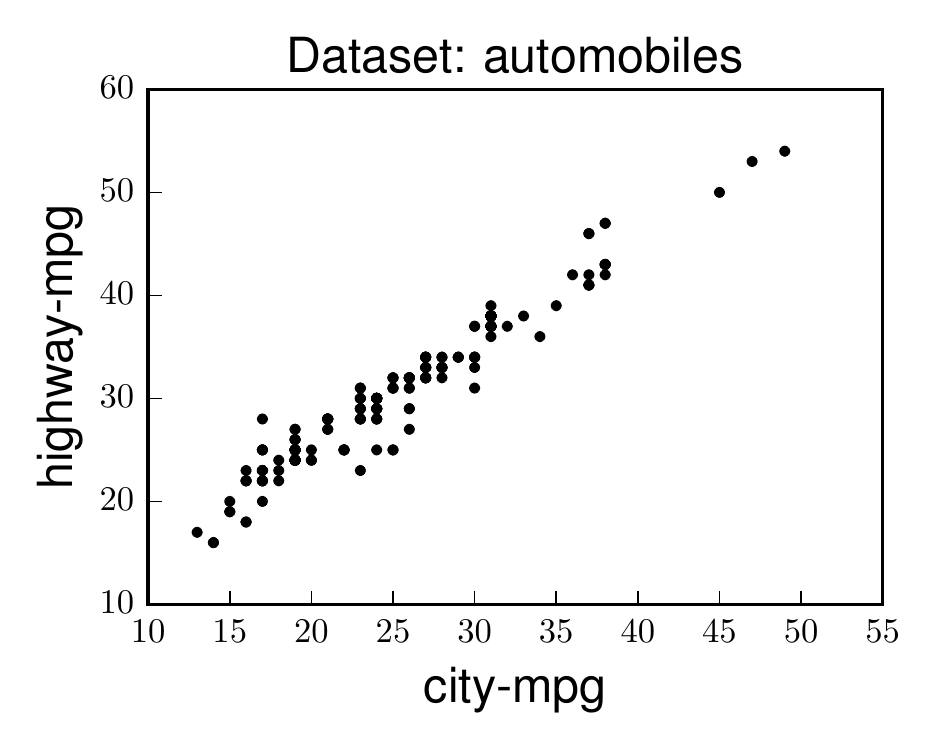}
\captionsetup{skip=-2pt}
\caption{\scriptsize Linearly Increasing}
\label{subfig:grid-2-0}
\end{subfigure}%
\begin{subfigure}{.25\linewidth}
\includegraphics[width=\linewidth]{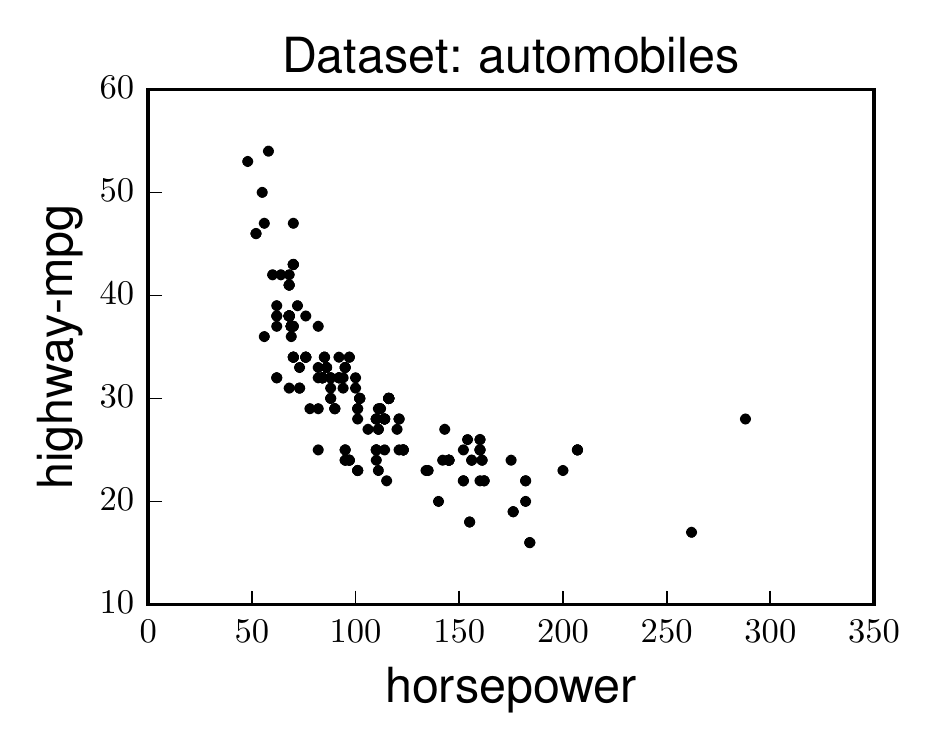}
\captionsetup{skip=-2pt}
\caption{\scriptsize Linearly Decreasing}
\label{subfig:grid-2-1}
\end{subfigure}%
\begin{subfigure}{.25\linewidth}
\includegraphics[width=\linewidth]{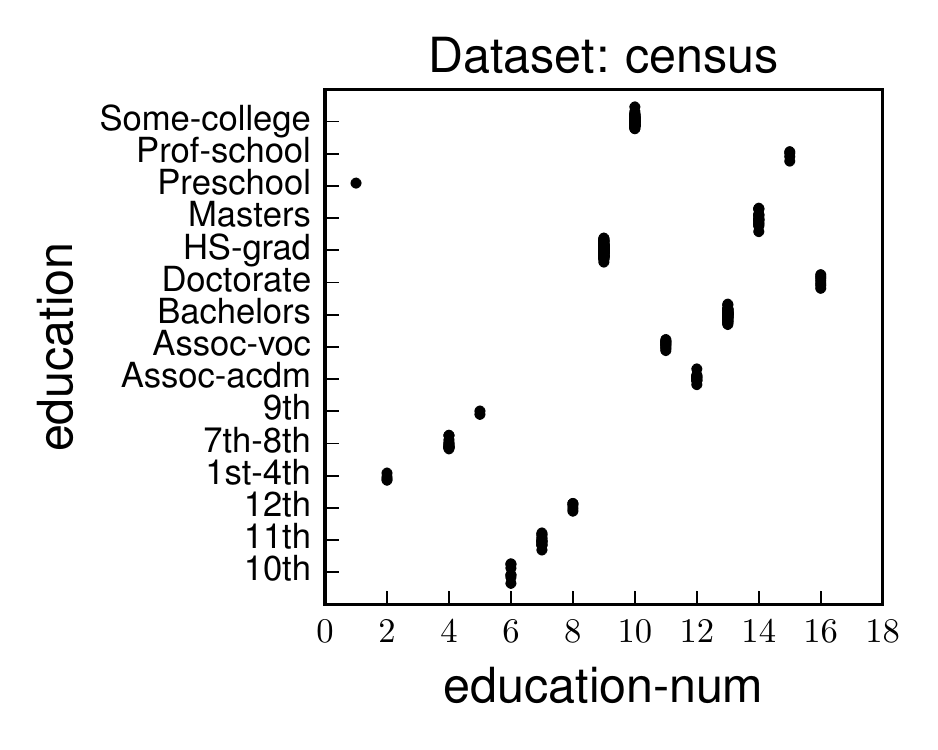}
\captionsetup{skip=-2pt}
\caption{\scriptsize Different Group Means}
\label{subfig:grid-2-2}
\end{subfigure}%
\begin{subfigure}{.25\linewidth}
\includegraphics[width=\linewidth]{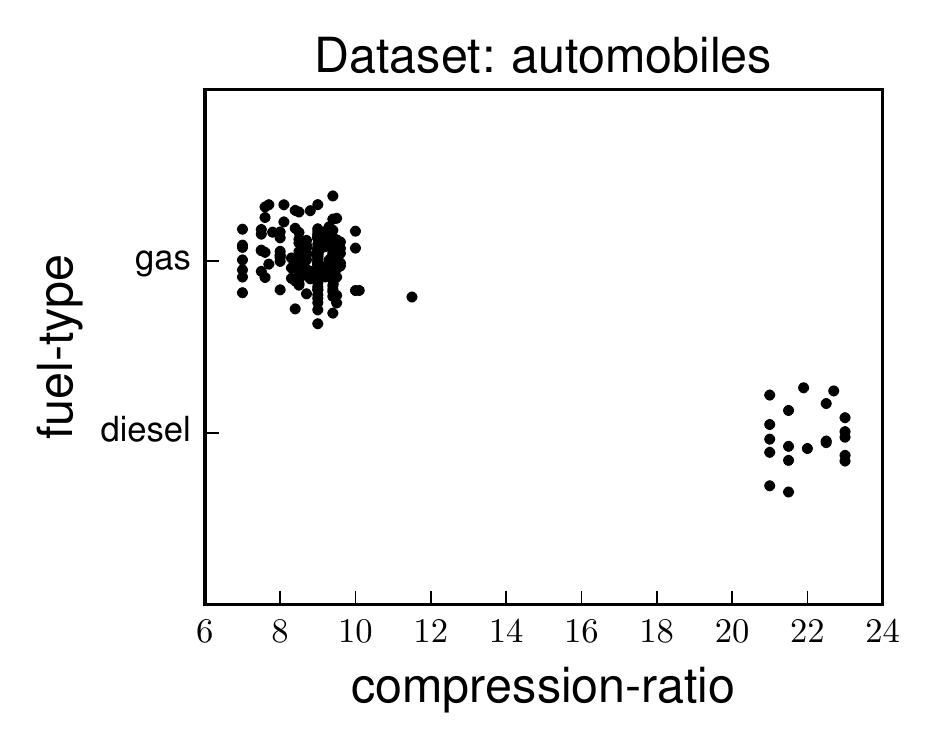}
\captionsetup{skip=-2pt}
\caption{\scriptsize Different Group Means}
\label{subfig:grid-2-3}
\end{subfigure}%

\begin{subfigure}{.25\linewidth}
\includegraphics[width=\linewidth]{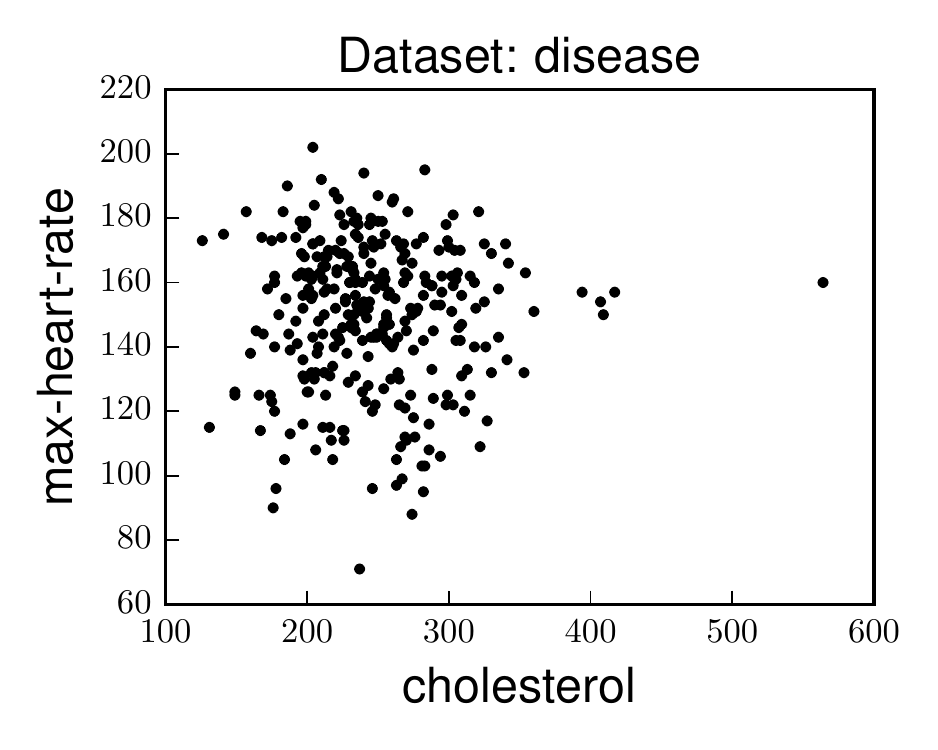}
\captionsetup{skip=-2pt}
\caption{\scriptsize No Dependence + Outliers}
\label{subfig:grid-3-0}
\end{subfigure}%
\begin{subfigure}{.25\linewidth}
\includegraphics[width=\linewidth]{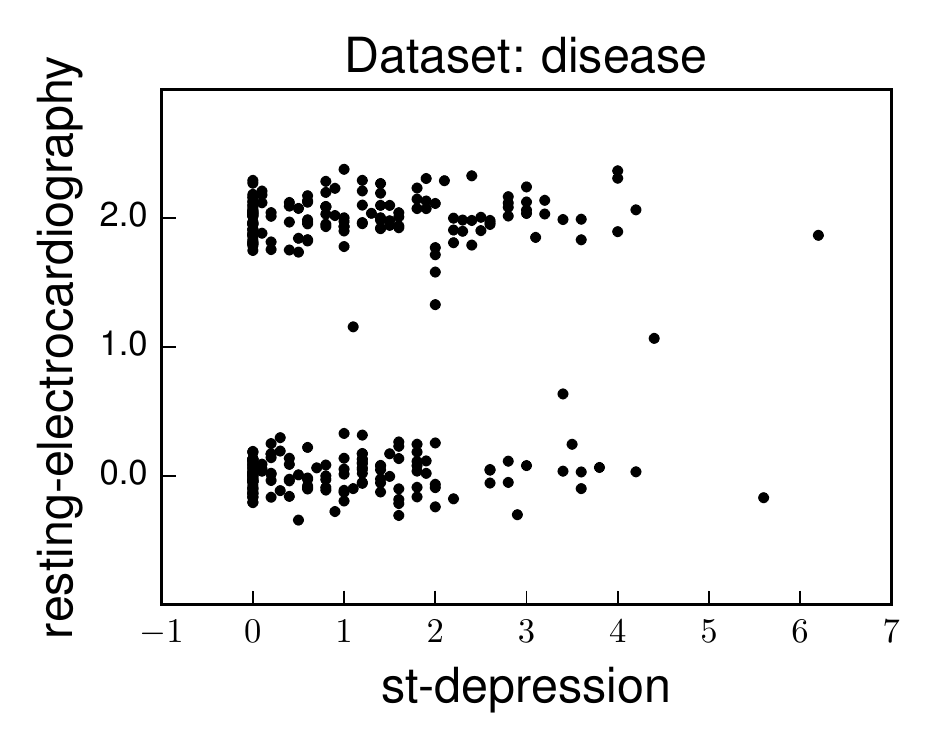}
\captionsetup{skip=-2pt}
\caption{\scriptsize No Dependence + Outliers}
\label{subfig:grid-3-1}
\end{subfigure}%
\begin{subfigure}{.25\linewidth}
\includegraphics[width=\linewidth]{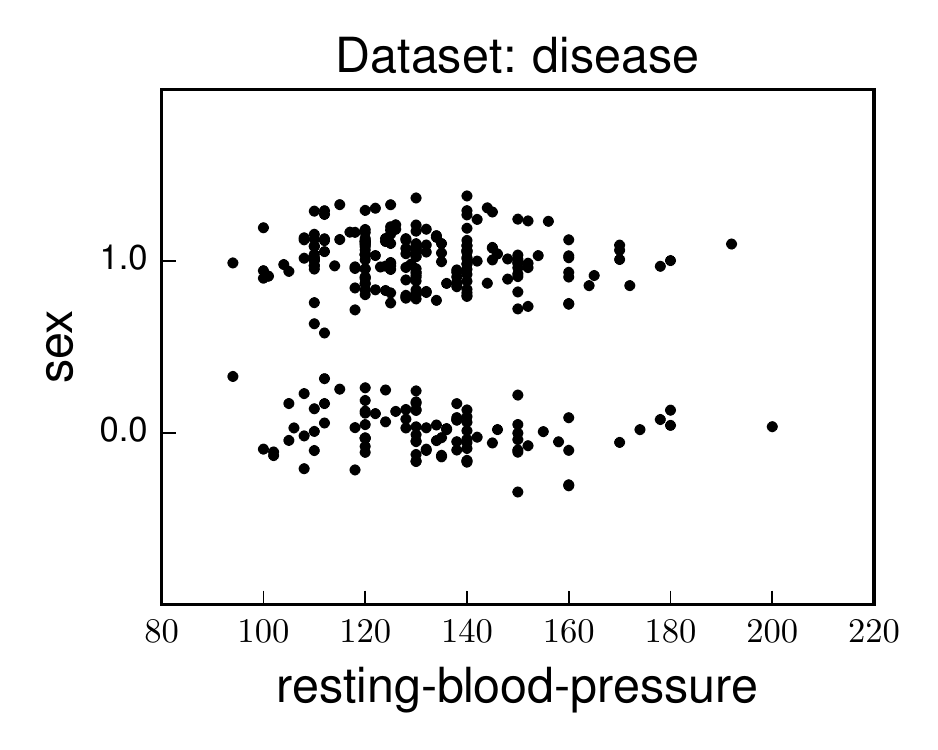}
\captionsetup{skip=-2pt}
\caption{\scriptsize No Dependence}
\label{subfig:grid-3-2}
\end{subfigure}%
\begin{subfigure}{.25\linewidth}
\includegraphics[width=\linewidth]{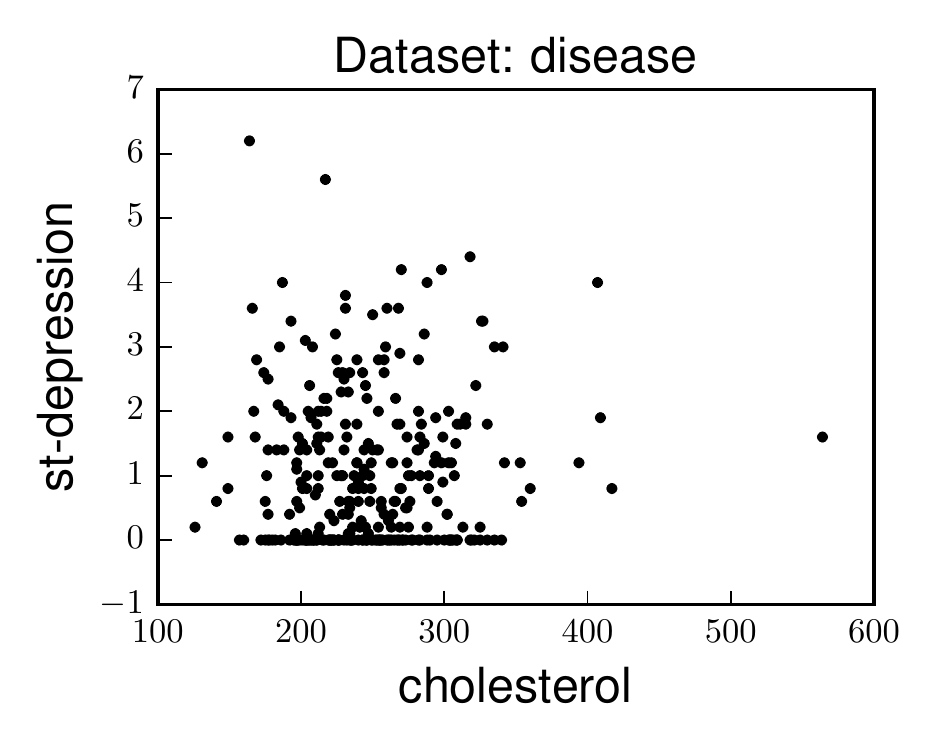}
\captionsetup{skip=-2pt}
\caption{\scriptsize No Dependence}
\label{subfig:grid-3-3}
\end{subfigure}%

\caption{Detecting probable predictive relationships between pairs of variables
for widely varying dependence structures including nonlinearity,
heteroskedasticity, mean and scale shifts, and missing regimes. The table in the
top panel shows 16 pairs of variables, the true predictive structure between
them, and indicators as to whether a predictive relationship, if any, was
detected by Pearson correlation and Bayesian synthesis.}
\label{fig:crosscat-structures}
\end{figure}

Figure~\ref{fig:gp-synthesis-runtime} shows a profile of the predictive likelihood on
held-out data versus the synthesis runtime (in seconds) for each of the seven
econometric time series.
For each dataset, we ran 2000 steps of synthesis, taking runtime and predictive
likelihood measurements at every 5 steps.
The plotted runtime and predictive likelihood values represent the median values
taken over 100 independent runs of Bayesian synthesis for each dataset.
The predictive likelihood measurements on the y-axis are scaled between 0 and 1
so that these profiles can be compared across the seven datasets.
We see that there is significant variance in the synthesis runtime versus
prediction quality. For certain time series such as \texttt{airline} and
\texttt{call} datasets (150 and 180 observations), prediction quality stabilizes
after around 10 seconds; for \texttt{radio} and \texttt{solar} datasets (240 and
400 observations) the prediction quality takes around 1000 seconds to stabilize;
and for the \texttt{temperature} dataset (1000 observations) the predictions
continue to improve even after exceeding the maximum timeout.

Recall from Section~\ref{subsec:dsl-gp-complexity} that each step of
synthesis has a time complexity of $O(l_G + |E|n^2 + n^3)$.
Scaling the synthesis to handle time series with more than few thousands data
points will require the sparse Gaussian process techniques discussed in
Section~\ref{subsec:dsl-gp-complexity}.
It is also important to emphasize that the time cost of each iteration of
synthesis is not only a function of the number of observations $n$ but also the
size $\abs{E}$ of the synthesized programs. Thus, while more observations
typically require more time, time series with more complex temporal patterns
typically result in longer synthesized programs. This trade-off can be seen by
comparing the profiles of the mauna data (545 observations) with the radio data
(240 observations). While mauna has more observations $n$, the radio data
contains much more complex periodic patterns and which requires comparatively
longer programs and thus longer synthesis times.

\subsection{Inferring Qualitative Structure from Multivariate Tabular Data}

A central goal of data analysis for multivariate tabular data is to identify
predictive relationships between pairs of variables
\citep{ezekiel1941,draper1966}.
Recall from Section~\ref{sec:dsl-crosscat} that synthesized programs from the
multivariate tabular data DSL place related variables in the same variable block
expression. Using Eq~\eqref{eq:prob-has-property} from
Section~\ref{subsec:bayesian-syntheis-qualitative}, we report a relationship
between a pair of variables if 80\% of the synthesized programs place these
variables in the same block.
We compare our method to the widely-used Pearson correlation baseline
\citep{abott2016} by checking whether the correlation
value exceeds 0.20 (at the $5\%$ significance level).

We evaluate the ability of our method to
correctly detect related variables in six real-world datasets from the UCI
machine learning repository~\citep{dua2017}: automobile data (26 variables),
wine data (14 variables), hepatitis data (20 variables), heart disease data (15
variables), and census data (15 variables).
Figure~\ref{fig:crosscat-structures} shows results for sixteen selected pairs of
variables in these datasets (scatter plots shown in
\ref{subfig:grid-0-0}--\ref{subfig:grid-3-3}). Together, these pairs exhibit a
broad class of relationships, including linear, nonlinear, heteroskedastic, and
multi-modal patterns. Four of the benchmarks, shown in the final row, have no
predictive relationship.


\begin{figure}[!t]

\begin{subfigure}{.5\linewidth}
\includegraphics[width=\linewidth]{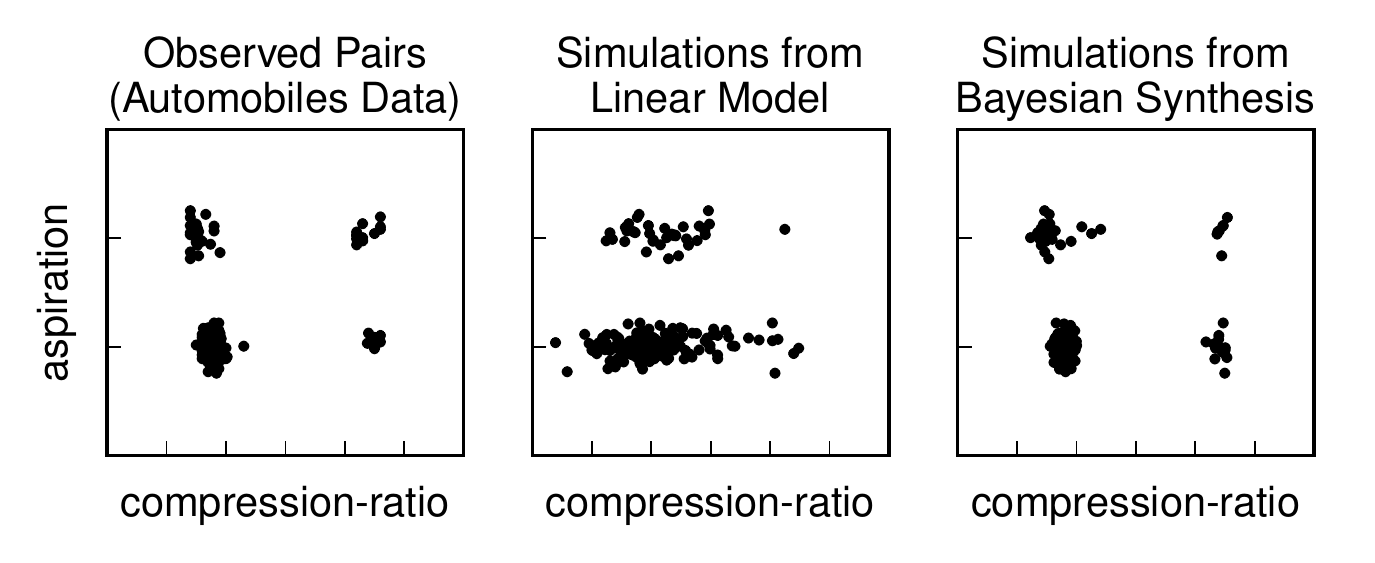}
\captionsetup{skip=0pt}
\caption{\footnotesize Bimodal Dependence}
\label{subfig:crosscat-linear-bimodal-nominal}
\end{subfigure}%
\begin{subfigure}{.5\linewidth}
\includegraphics[width=\linewidth]{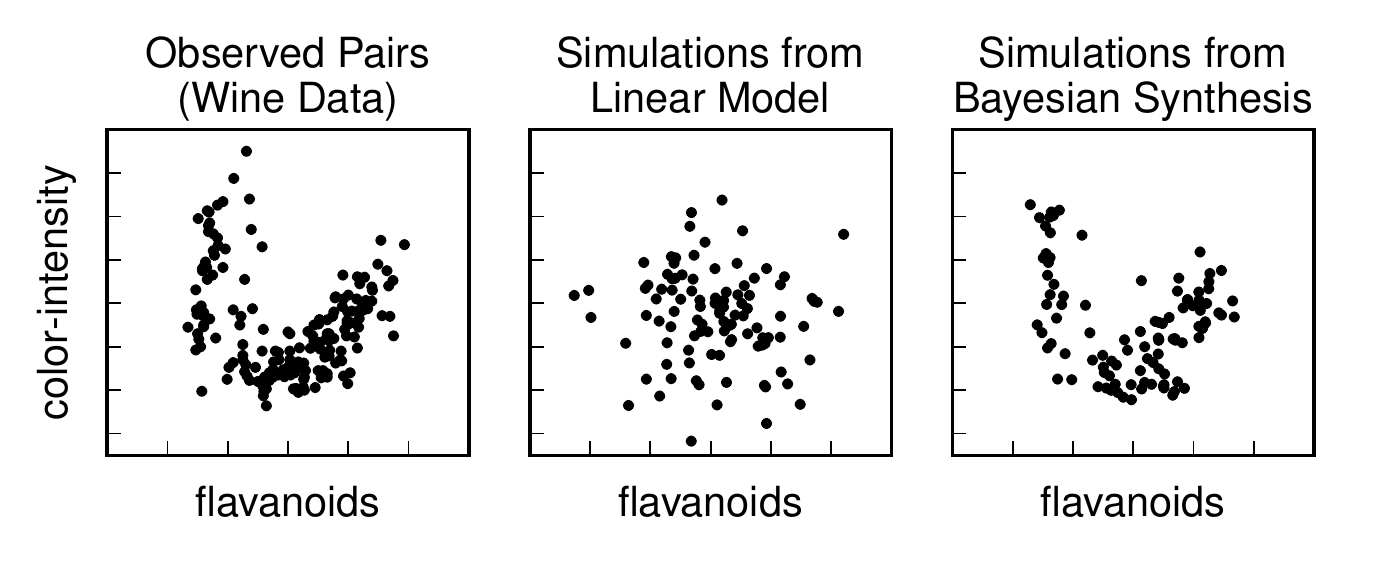}
\captionsetup{skip=0pt}
\caption{\footnotesize Bimodal Increasing Linear Dependence}
\label{subfig:crosscat-linear-bimodal-linear}
\end{subfigure}
\hrule

\begin{subfigure}{.5\linewidth}
\includegraphics[width=\linewidth]{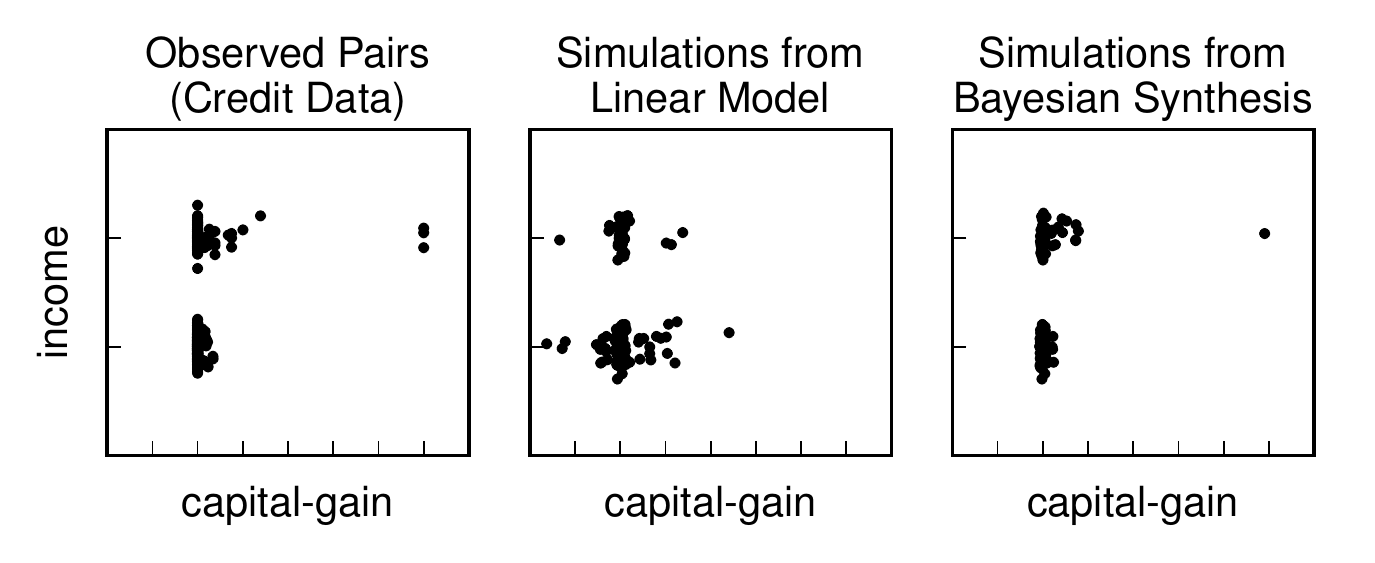}
\captionsetup{skip=0pt}
\caption{\footnotesize Heavy-Tailed Nonlinear Dependence}
\label{subfig:crosscat-linear-heavy-tailed-nominal}
\end{subfigure}%
\begin{subfigure}{.5\linewidth}
\includegraphics[width=\linewidth]{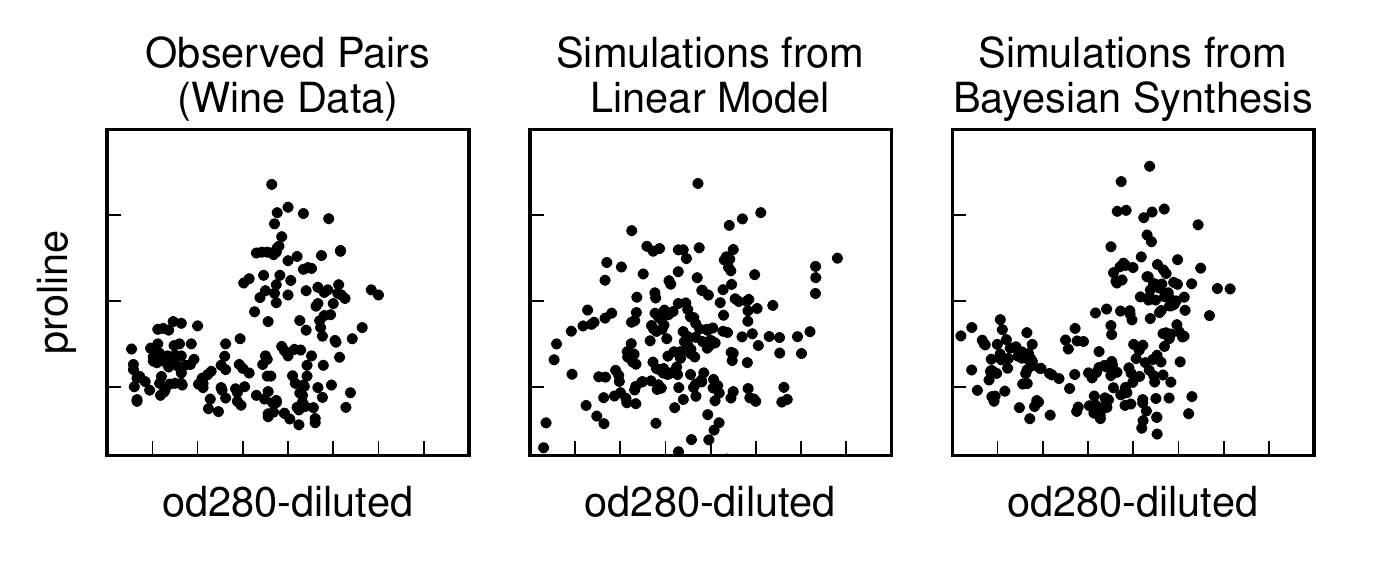}
\captionsetup{skip=0pt}
\caption{\footnotesize Missing Regime Nonlinear Dependence}
\label{subfig:crosscat-linear-missing-regime}
\end{subfigure}%

\captionsetup{skip=5pt}
\caption{Comparing the quality of predictive models discovered by Bayesian
synthesis to commonly-used generalized linear models, for several
real-world datasets and dependence patterns. In each of the panels
\subref{subfig:crosscat-linear-bimodal-nominal}--%
\subref{subfig:crosscat-linear-missing-regime}, the first column shows a scatter
plot of two variables in the observed dataset; the second column shows data
simulated from a linear model trained on the observed data; and the third column
shows data simulated from probabilistic programs obtained using Bayesian
synthesis given the observed data. The synthesized programs are able to detect
underlying patterns and emulate the true distribution, including nonlinear,
multi-modal, and heteroskedastic relationships. Simulations from linear models
have a poor fit.}
\label{fig:crosscat-linear}
\vspace{-.5cm}
\end{figure}

The table in Figure~\ref{fig:crosscat-structures} shows that our method
correctly detects the presence or absence of a predictive relationship in 14 out
of 16 benchmarks, where the final column shows the posterior probability of a
relationship in each case.
In contrast, Pearson correlation, shown in the second-to-last column, only
obtains the correct answer in 8 of 16 benchmarks. This baseline yields incorrect
answers for all pairs of variables with nonlinear, bimodal, and/or
heteroskedastic relationships.
These results show that Bayesian synthesis provides a practical method for
detecting complex predictive relationships from real-world data that are missed
by standard baselines.

Moreover, Figure~\ref{fig:crosscat-linear} shows that probabilistic programs
from Bayesian synthesis are able to generate entirely new datasets that
reflect the pattern of the predictive relationship in the underlying data
generating process more accurately than generalized linear statistical models.



\begin{figure}[!b]
\vspace{-.6cm}
\begin{subtable}{.6\linewidth}
\footnotesize
\subcaption*{Median Error in Predictive Log-Likelihood of Held-Out Data}
\begin{tabular*}{\linewidth}{l@{\extracolsep{\fill}}cc}
\toprule
Benchmark                   & Kernel Density Estimation & Bayesian Synthesis \\
\midrule
\texttt{biasedtugwar}       & \num{-1.58e-01} & \num{-3.13e-02}  \\
\texttt{burglary}           & \num{-4.25e-01} & \num{-1.45e-03}  \\
\texttt{csi}                & \num{-1.00e-01} & \num{-4.35e-04}  \\
\texttt{easytugwar}         & \num{-2.96e-01} & \num{-9.96e-02}  \\
\texttt{eyecolor}           & \num{-4.69e-02} & \num{-5.31e-03}  \\
\texttt{grass}              & \num{-3.91e-01} & \num{-4.49e-02}  \\
\texttt{healthiness}        & \num{-1.35e-01} & \num{-3.00e-03}  \\
\texttt{hurricane}          & \num{-1.30e-01} & \num{-2.11e-04}  \\
\texttt{icecream}           & \num{-1.51e-01} & \num{-7.07e-02}  \\
\texttt{mixedCondition}     & \num{-1.06e-01} & \num{-1.43e-02}  \\
\texttt{multipleBranches}   & \num{-4.12e-02} & \num{-1.22e-02}  \\
\texttt{students}           & \num{-1.74e-01} & \num{-5.47e-02}  \\
\texttt{tugwarAddition}     & \num{-2.60e-01} & \num{-1.38e-01}  \\
\texttt{uniform}            & \num{-2.72e-01} & \num{-1.26e-01} \\ \bottomrule
\end{tabular*}
\end{subtable}\hfill%
\begin{subfigure}{.375\linewidth}
\includegraphics[width=\linewidth]{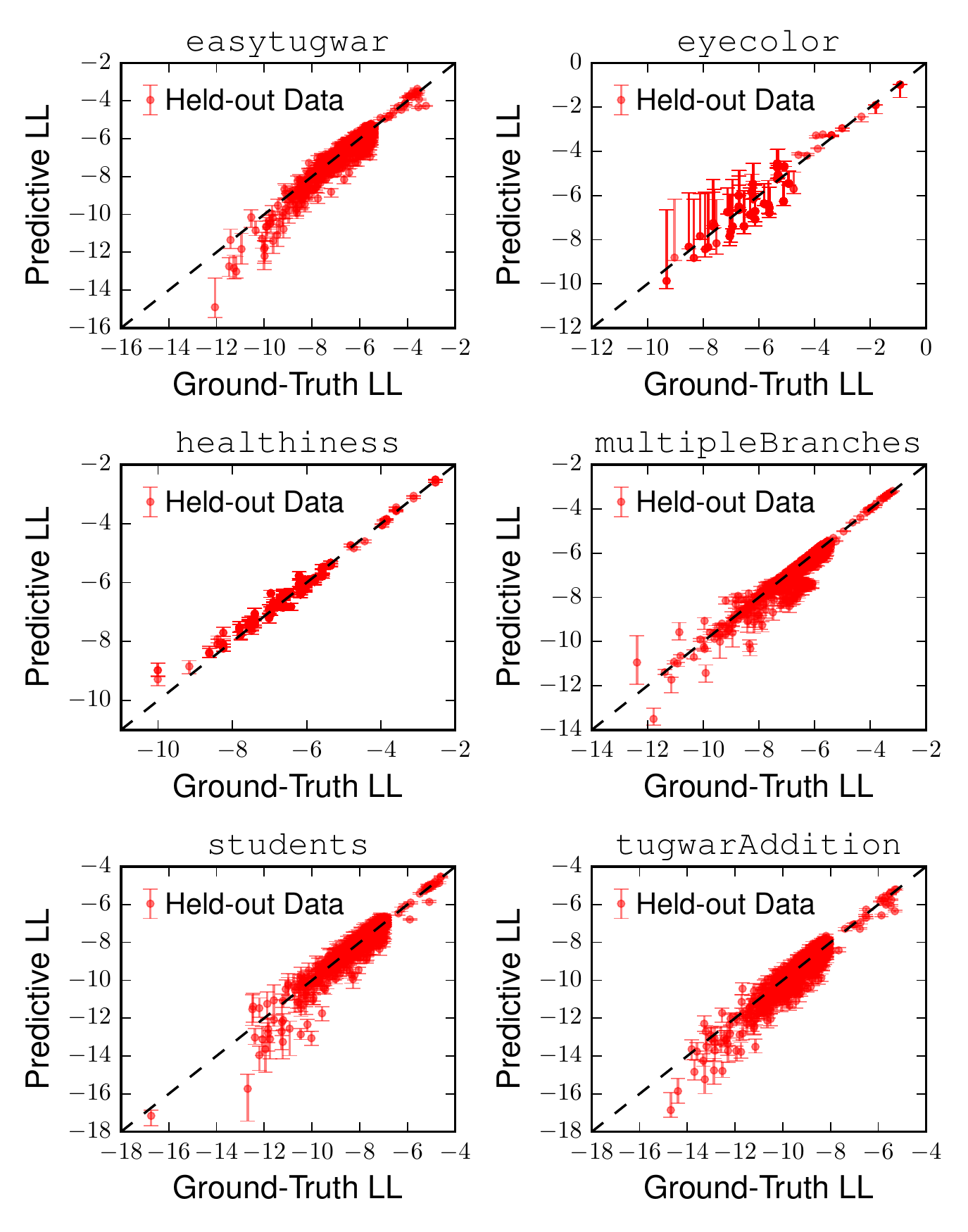}
\end{subfigure}%

\captionsetup{skip=5pt}
\caption{Comparing the error in the predictive probabilities of held-out
data according to Bayesian synthesis and KDE. For each of the 13 benchmark
problems, a training set of 10,000 data records was used to synthesize
probabilistic programs in our tabular data DSL and to fit KDE models. 10,000 new
data records were then used to assess held-out predictive probabilities, which
measures how well the synthesized probabilistic programs are able to model the
true distribution. The scatter plots on the right show the full distribution of
the true (log) probabilities (according to the ground-truth program) and the
predictive probabilities (according to the synthesized programs) for six of the
benchmarks (each red dot is a held-out data record). Estimates are most accurate
and certain in the bulk of the distribution and are less accurate and less
certain in the tails.}
\label{fig:crosscat-blog}
\end{figure}

\subsection{Quantitative Prediction Accuracy for Multivariate Tabular Data}

Another central goal of data analysis for multivariate tabular data is to learn
probabilistic models that can accurately predict the probability of new data
records, a task known as density estimation \citep{silverman1986,scott2009}.
Accurately predicting the probability of new data enables several important
tasks such as data cleaning, anomaly detection, and imputing missing data.
We obtain the probability of new data records according to the synthesized
probabilistic programs by first translating the programs into Venture (right
column of Figure~\ref{fig:dsl-crosscat}) and then executing the translated
programs in the Venture inference environment.
Given a new data record, these programs immediately return the probability of
the new data record according to the probabilistic model specified by the
Venture probabilistic program.
We compare our method to the widely-used multivariate kernel density estimation
(KDE) baseline with mixed data types from \citet{racine2004}.

We evaluate our method's ability to accurately perform density estimation
using 13 benchmark datasets adapted from~\citet{chasins2017}.
Each dataset was generated by a ``ground-truth'' probabilistic program written
in BLOG \cite{milch2007}. Because the ground-truth program is available, we can
compare the actual probability of each new data record (obtained from the
ground-truth BLOG program) with its predicted probability (obtained from the
synthesized Venture programs and from KDE).
The results in the table in Figure~\ref{fig:crosscat-blog} show that our method
is more accurate than KDE, often by several orders of magnitude.
A key advantage of our approach is that we synthesize an ensemble of
probabilistic programs for each dataset. This ensemble enables us to approximate
the full posterior distribution and then use this distribution to compute error
bars for predicted probability values. The scatter plots in
Figure~\ref{fig:crosscat-blog} show that the error bars are well-calibrated,
i.e.~error bars are wider for data records where our method gives less
accurate estimates.
In contrast, KDE only gives point estimates with no error bars and no
measure of uncertainty.


\section{Related Work}
\label{sec:related}

We discuss related work in five related fields: Bayesian synthesis of
probabilistic programs (where this work is the first), non-Bayesian synthesis of
probabilistic programs, probabilistic synthesis of non-probabilistic programs,
non-probabilistic synthesis of non-probabilistic programs, and model discovery
in probabilistic machine learning.

{\bf Bayesian synthesis of probabilistic programs.} This paper presents
the first Bayesian synthesis of probabilistic programs. It presents the first
formalization of Bayesian synthesis for probabilistic programs and the first
soundness proofs of Bayesian synthesis for probabilistic programs generated by
probabilistic context-free grammars. It also presents the first empirical
demonstration of accurate modeling of multiple real-world domains and tasks via
this approach.

This paper is also the first to: (i) identify sufficient conditions to obtain a
well-defined posterior distribution; (ii) identify sufficient conditions to
obtain a  sound Bayesian synthesis algorithm; (iii) define a general family of
domain-specific languages with associated semantics that ensure that the
required prior and posterior distributions are well-defined; (iv) present a
sound synthesis algorithm that applies to any language in this class of
domain-specific languages; and (v) present a specific domain-specific language
(the Gaussian process language for modeling time series data) that satisfies
these sufficient conditions.

\citet{nori15} introduce a system (PSKETCH) designed to complete partial
sketches of probabilistic programs for modeling tabular data. The technique is
based on applying sequences of program mutations to a base program written in a
probabilistic sketching language.  As we detail further below, the paper uses
the vocabulary of Bayesian synthesis to describe the technique but contains
multiple fundamental technical errors that effectively nullify its key claims.
Specifically, \citet{nori15} proposes to use Metropolis-Hastings sampling to
approximate the \mbox{\textit{maximum a posteriori}} solution to the sketching
problem. However, the paper does not establish that the prior or posterior
distributions on programs are well-defined. The paper attempts to use a uniform
prior distribution over an unbounded space of programs. However, there is no
valid uniform probability measure over this space. Because the posterior is
defined using the prior and the marginal likelihood of all datasets is not shown
to be finite, the posterior is also not well-defined. The paper presents no
evidence or argument that the proposed prior or posterior distribution is
well-defined.
The paper also asserts that the synthesis algorithm converges because the MH
algorithm always converges, but it does not establish that the presented
framework satisfies the properties required for the MH algorithm to converge.
And because the posterior is not well-defined, there is no probability
distribution to which the MH algorithm can converge. We further note that while
the paper claims to use the MH algorithm to determine whether a proposed program
mutation is accepted or rejected, there is in fact no tractable algorithm that
can be used to compute the reversal probability of going back from a proposed to
an existing program, which means the MH algorithm cannot possibly be used with
the program mutation proposals described in the paper.

The presented system, PSKETCH, is based on applying sequences of program
mutations to a base program written in a probabilistic sketching language with
constructs (such as if-then-else, for loops, variable assignment, and arbitrary
arithmetic operations) drawn from general-purpose programming languages. We
believe that, consistent with the experimental results presented in the paper,
working only with these constructs is counterproductive in that it produces a
search space that is far too unconstrained to yield practical solutions to
real-world data modeling problems in the absence of other sources of information
that can more effectively narrow the search.
We therefore predict that the field will turn to focus on probabilistic DSLs
(such as the ones presented in this paper) as a way to more effectively target
the problems that arise in automatic data modeling. As an example, while the
sketching language in \citet{nori15} contains general-purpose programming
constructs, it does not have domain-specific constructs that concisely represent
Gaussian processes, covariance functions, or rich nonparametric mixture models
used in this paper.

\citet{hwang2011} use beam search over arbitrary program text in a subset of the
Church~\citep{goodman08} language to obtain a generative model over tree-like
expressions. The resulting search space is so unconstrained that, as the authors
note in the conclusion, this technique does not apply to any real-world problem
whatsoever. This drawback highlights the benefit of controlling the search space
via an appropriate domain-specific language. In addition we note that although
\citet{hwang2011} also use the vocabulary of Bayesian synthesis, the
proposed program-length prior over an unbounded program space is not shown to be
probabilistically well-formed, nor is it shown to lead to a valid Bayesian
posterior distribution over programs, nor is the stochastic approximation of
the program likelihood shown to result in a sound synthesis algorithm.

{\bf Non-Bayesian synthesis of probabilistic programs.}
\citet{ellis2015} introduce a method for synthesizing probabilistic programs by
using SMT solvers to optimize the likelihood of the observed data with a
regularization term that penalizes the program length. The research works with a
domain-specific language for morphological rule learning.
\citet{perov2014} propose to synthesize code for simple random number
generators using approximate Bayesian computation.
These two techniques are fundamentally different from ours. They focus on
different problems, specifically morphological rule learning, visual concept
learning, and random number generators, as opposed to data modeling. Neither is
based on Bayesian learning and neither presents soundness proofs (or even states
a soundness property). Moreover, both attempt to find a single highest-scoring
program as opposed to synthesizing ensembles of programs that approximate a
Bayesian posterior distribution over the sampled probabilistic programs. As this
previous research demonstrates, obtaining soundness proofs or characterizing the
uncertainty of the synthesized programs against the data generating process is
particularly challenging for non-Bayesian approaches because of the ad-hoc
nature of the synthesis formulation.

\citet{tong2016} describe a method which uses an off-the-shelf model discovery
system \citep[ABCD]{gpss} to learn Gaussian process models and the code-generate
the models to Stan \citep{carpenter2015} programs. However, \citet{tong2016}
does not formalize the program synthesis problem, nor does it present any formal
semantics, nor does it show how to extract the probability that qualitative
properties hold in the data, nor does it apply to multiple model families in a
single problem domain let alone multiple problem domains.
\citet{chasins2017} present a technique for synthesizing probabilistic programs
in tabular datasets using if-else statements. Unlike our method, their approach
requires the user to manually specify a causal ordering between the variables.
This information may be difficult or impossible to obtain. Second, the proposed
technique is based on using linear correlation, which we have shown in our
experiments (Figures~\ref{fig:crosscat-linear} and \ref{fig:crosscat-structures})
fails to adequately capture complex probabilistic relationships. Finally, the
paper is based on simulated annealing, not Bayesian synthesis, has no
probabilistic interpretation, and does not claim to provide a notion of
soundness.

{\bf Probabilistic synthesis of non-probabilistic programs.}
\citet{schkufza2013} describe a technique for synthesizing high-performance X86
binaries by using MCMC to stochastically search through a space of programs and
to deliver a single X86 program which satisfies the hard constraint of
correctness and the soft constraint of performance improvement.
While both the Bayesian synthesis framework in this paper and the
superoptimization technique from \citet{schkufza2013} use MCMC algorithms to
implement the synthesis, they use MCMC in a fundamentally different way.
In \citet{schkufza2013}, MCMC is used as a strategy to search through a space of
deterministic programs that seek to minimize a cost function that itself has no
probabilistic semantics. In contrast, Bayesian synthesis uses MCMC as a strategy
to approximately sample from a space of probabilistic programs whose semantics
specify a well-defined posterior distribution programs.

Bayesian priors over structured program representations which seek to sample
from a posterior distribution over programs have also been investigated.
\citet{liang2010} use adapter grammars \citep{johnson2007} to build a
hierarchical nonparametric Bayesian prior over programs specified in combinatory
logic \citep{schonfinkel1924}.
\citet{ellis2016} describe a method to sample from a bounded program space with
a uniform prior where the posterior probability of a program is equal to
(i) zero if it does not satisfy an observed input-output constraint, or
(ii) geometrically decreasing in its length otherwise. These methods are used to
synthesize arithmetic operations, text editing, and list manipulation programs.
Both \citet{liang2010} and \citet{ellis2015} specify Bayesian priors over
programs similar to the prior in this paper. However,
these two techniques are founded on the assumption that the
synthesized programs have deterministic input-output behavior and cannot be
easily extended to synthesize programs that have probabilistic
input-output behavior. In contrast, the Bayesian synthesis framework presented in this
paper can synthesize programs with deterministic input-output behavior by adding
hard constraints to the $\Likelihood$ semantic function, although alternative
synthesis techniques may be required to make the synthesis more effective.

\citet{lee2018} present a technique for speeding up program synthesis of
non-probabilistic programs by using A* search to enumerate
programs that satisfy a set of input-output constraints in order of
decreasing prior probability. This prior distribution over programs is itself
learned using a probabilistic higher-order grammar with
transfer learning over a large corpus of existing synthesis problems and
solutions. The technique is used to synthesize programs in domain-specific
languages for bit-vector, circuit, and string manipulation tasks.
Similar to the Bayesian synthesis framework in this paper, \citet{lee2018} use
PCFG priors for specifying domain-specific languages.
However the fundamental differences are that the synthesized programs in
\citet{lee2018} are non-probabilistic and the objective is to enumerate valid
programs sorted by their prior probability, while in this paper the synthesized
programs are probabilistic so enumeration is impossible and the objective is
instead to sample programs according to their posterior probabilities.

{\bf Non-probabilistic synthesis of non-probabilistic programs.}
Over the last decade program synthesis has become a highly active area of
research in programming languages. Key techniques include
deductive logic with program transformations \citep{burstall1977,manna1979,manna1980},
genetic programming \citep{koza1992,koza1997},
solver and constraint-based methods \citep{lezama2006,jha2010,gulwani2011a,gulwani2011b,feser2015},
syntax-guided synthesis \citep{alur2013},
and neural networks \citep{graves2014,defreitas2016,balog2017,nampi2018}.
In general these approaches have tended to focus on areas where uncertainty is
not essential or even relevant to the problem being solved. In particular,
synthesis tasks in this field apply to programs that exhibit deterministic
input-output behavior in discrete problem domains with combinatorial solutions.
There typically exists an ``optimal'' solution and the synthesis goal is
to obtain a good approximation to this solution.

In contrast, in Bayesian synthesis the problem domain is fundamentally about
automatically learning models of non-deterministic data generating processes
given a set of noisy observations. Given this key characteristic of the problem
domain, we deliver solutions that capture uncertainty at two levels. First, our
synthesis produces probabilistic programs that exhibit noisy, non-deterministic
input-output behavior. Second, our technique captures inferential uncertainty
over the structure and parameters of the synthesized programs themselves by
producing an ensemble of probabilistic programs whose varying
properties reflect this uncertainty.

{\bf Model discovery in probabilistic machine learning.} Researchers
have developed several probabilistic techniques for discovering
statistical model structures from observed data. Prominent examples include
Bayesian network structures \citep{mansinghka2006},
matrix-composition models \citep{grosse2012},
univariate time series \citep{duvenaud2013},
multivariate time series \citep{saad-aistats-2018}, and
multivariate data tables \citep{mansinghka2016}.

Our paper introduces a general formalism that explains and extends these
modeling techniques.
First, it establishes general conditions for Bayesian synthesis to be
well-defined and introduces sound Bayesian synthesis algorithms that apply to a
broad class of domain-specific languages.

Second, we show how to estimate the probability that qualitative structures are
present or absent in the data. This capability rests on a distinctive aspect of
our formalism, namely that soundness is defined in terms of sampling programs
from a distribution that converges to the Bayesian posterior, not just finding a
single ``highest scoring'' program. As a result, performing Bayesian synthesis
multiple times yields a collection of programs that can be collectively
analyzed to form Monte Carlo estimates of posterior and predictive
probabilities.

Third, our framework shows one way to leverage probabilistic programming
languages to simplify the implementation of model discovery techniques.
Specifically, it shows how to translate programs in a domain-specific language
into a general-purpose language such as Venture, so that the built-in language
constructs for predictive inference can be applied (rather than requiring new
custom prediction code for each model family). A key advantage of this approach
is that representing probabilistic models as synthesized probabilistic programs
enable a wide set of capabilities for data analysis tasks \citep{saad2016,saad2017}.

Fourth, we implemented two examples of domain-specific languages and empirically
demonstrate accuracy improvements over multiple baselines.


\section{Conclusion}

We present a technique for Bayesian synthesis of probabilistic programs for
automatic data modeling. This technique enables, for the first time, users to
solve important data analysis problems without manually writing statistical
programs or probabilistic programs. Our technique (i) produces an ensemble of
probabilistic programs that soundly approximate the Bayesian posterior, allowing
an accurate reflection of uncertainty in predictions made by the programs; (ii)
is based on domain-specific modeling languages that are empirically shown to
capture the qualitative structure of a broad class of data generating processes;
(iii) processes synthesized programs to extract qualitative structures from the
surface syntax of the learned programs; (iv) translates synthesized programs
into probabilistic programs that provide the inference machinery needed to
obtain accurate predictions; and (v) is empirically shown to outperform baseline
statistical techniques for multiple qualitative and quantitative data analysis
and prediction problems.

Designing good probabilistic domain-specific languages and priors is a key
characteristic of our approach to automatic data modeling. In this paper we
focus on broad non-parametric Bayesian priors that are flexible enough to
capture a wide range of patterns and are also tractable enough for Bayesian
synthesis. In effect, the choice of prior is embedded in the language design
pattern, is done once when the language is designed, and can be reused across
applications.
The high-level nature of the DSL and carefully designed priors allow us to
restrict the space of probabilistic programs to those that are well-suited
for a particular data modeling task (such as time series and multivariate data
tables). Once an appropriate DSL is developed, the synthesis algorithms can then
be optimized on a per-DSL basis.
Future directions of research include developing optimized synthesis algorithms
to improve scalability; adding language support for user-specified qualitative
constraints that must hold in the synthesized programs; and information-flow
analyses on the synthesized programs to quantify relationships between variables
in the underlying data generating process.

\begin{acks}
This research was supported by the DARPA SD2 program (contract
\mbox{FA8750-17-C-0239}); grants from the MIT Media Lab, the Harvard Berkman
Klein Center Ethics and Governance of AI Fund, and the MIT CSAIL Systems That
Learn Initiative; as well as an anonymous philanthropic gift.
\end{acks}

\bibliography{paper}
\clearpage

\appendix
\newcommand{\CKx}{C(K,\mathbf{x})}

\section{Proofs for Gaussian Process Domain-Specific Language}
\label{appendix:gp-dsl-proofs}

Section~\ref{sec:dsl-time-series} of the main text states that the prior and
likelihood semantics of the Gaussian process domain-specific language satisfy
the theoretical preconditions required for Bayesian synthesis. In particular, we
prove the following lemma:

\begin{lemma}
The $\Prior$ and $\Likelihood$ semantic functions in Figure~\ref{fig:dsl-gp}
satisfy Conditions~\ref{cond:prior-normalized},
\ref{cond:likelihood-normalized}, and \ref{cond:likelihood-bounded}.
\end{lemma}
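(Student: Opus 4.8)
The plan is to verify the three conditions in turn, exploiting the fact that this DSL is a tagged PCFG (so Condition~\ref{cond:prior-normalized} reduces to a consistency check) and that the likelihood is a multivariate Gaussian density (so Conditions~\ref{cond:likelihood-normalized} and \ref{cond:likelihood-bounded} reduce to spectral properties of its covariance matrix).

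For Condition~\ref{cond:prior-normalized} (normalized prior), I would appeal to the sufficient condition for consistency (Theorem~\ref{thm:pcfg-consistency}): since the grammar is proper, it suffices to show that the spectral radius of its expectation matrix is below one. The only recursive non-terminal is $K \in \DomKernel$; the parameter non-terminal $H \in \DomParameters$ expands directly to a terminal and so contributes nothing to the recursion (and its $Q$-probabilities are immaterial). The expected number of child $K$-symbols produced by one expansion of $K$ is $2\,(P(\texttt{+}) + P(\texttt{*}) + P(\texttt{cp})) = 2(0.135 + 0.135 + 0.03) = 0.6$, since only the three composite rules are recursive and each has exactly two kernel children. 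The expectation matrix is therefore block upper-triangular with the $K\!\to\!K$ block equal to $0.6$ and the $H$ block nilpotent, so its dominant eigenvalue is $0.6 < 1$. Consistency, and hence Condition~\ref{cond:prior-normalized}, follows.

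For Condition~\ref{cond:likelihood-normalized}, I would observe that $\Denotation[\Likelihood]{K}((\mathbf{x},\mathbf{y}))$ is exactly the density of a zero-mean multivariate normal $\mathcal{N}(\mathbf{0}, \Sigma_K)$ in $\mathbf{y} \in \mathbb{R}^n$, where $\Sigma_K \Coloneqq [\Denotation[\Covar]{K}(x_i)(x_j) + 0.01\,\delta(x_i,x_j)]_{i,j}$. This integrates to one over $\mathbb{R}^n$ as soon as $\Sigma_K$ is symmetric positive definite. I would establish positive definiteness by structural induction over $\DomKernel$: the base kernels \texttt{const}, \texttt{wn}, \texttt{lin}, \texttt{se}, and \texttt{per} each induce a positive semidefinite Gram matrix (as rank-one outer products, a nonnegative diagonal, or classical stationary kernels, using that the gamma-distributed parameters are positive), and the operators \texttt{+}, \texttt{*}, and \texttt{cp} preserve positive semidefiniteness (sums trivially, products by the Schur product theorem, and the change point as a sum of Schur products with the rank-one gating matrices $[\Delta(x_i,v)\Delta(x_j,v)]$ and $[(1{-}\Delta(x_i,v))(1{-}\Delta(x_j,v))]$). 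Assuming distinct inputs $x_i$, the added diagonal jitter $0.01\,\delta(x_i,x_j) = 0.01\,I$ raises every eigenvalue by $0.01$, rendering $\Sigma_K$ strictly positive definite.

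Condition~\ref{cond:likelihood-bounded} (bounded likelihood) then follows from the same spectral bound. The Gaussian density never exceeds $(2\pi)^{-n/2}\lvert\Sigma_K\rvert^{-1/2}$ because its quadratic-form exponential factor is at most one; and since every eigenvalue of $\Sigma_K$ is at least $0.01$, we have $\lvert\Sigma_K\rvert \ge 0.01^{\,n}$, giving the uniform bound $\lik{K} \le (0.02\pi)^{-n/2} < \infty$ over all $K \in \Lang$. Positivity of $c^{\max}_X$ is immediate since each density value is strictly positive. I expect the main obstacle to be the inductive positive-semidefiniteness argument for Condition~\ref{cond:likelihood-normalized} --- in particular handling the change point operator and the degenerate case of repeated inputs cleanly; everything else is a routine computation once $\Sigma_K$ is shown to be well-conditioned by the white-noise jitter.
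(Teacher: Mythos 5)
Your proposal is correct, but it departs from the paper's proof in two of the three parts, so a comparison is worth recording. For Condition~\ref{cond:prior-normalized}, the paper first converts the grammar to Chomsky normal form, writes out the full $12\times 12$ expectation matrix in the sense of \citet{gecse2010}, and numerically verifies that its eigenvalues ($\approx 0.785$, $-0.671$, $-0.114$) all have modulus below one before invoking Theorem~\ref{thm:pcfg-consistency}; you instead compute the branching factor directly on the original grammar, observing that $K$ is the only recursive non-terminal and that each expansion produces $2\bigl(P(\texttt{+})+P(\texttt{*})+P(\texttt{cp})\bigr)=0.6<1$ expected $K$-children. Both arguments certify subcriticality of the same branching process; yours is shorter and avoids the CNF detour, while the paper's is the mechanical instantiation of the cited theorem. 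For Condition~\ref{cond:likelihood-normalized} the paper simply asserts that the expression is a multivariate normal density; you additionally sketch the structural induction showing each $\Denotation[\Covar]{K}$ is positive semidefinite (closure of PSD kernels under sums, Schur products, and the gated change-point combination) so that the jitter makes the covariance strictly positive definite --- a step the paper leaves implicit even though both its normalization and its boundedness arguments rely on it, and you correctly flag repeated inputs $x_i=x_j$ as the one genuinely delicate degenerate case. For Condition~\ref{cond:likelihood-bounded}, the paper proves and applies a corollary of Minkowski's determinant inequality, $\abs{A+B}\ge\abs{A}+\abs{B}$, to get $\abs{\CKx+\sigma\mathbf{I}}\ge\abs{\sigma\mathbf{I}}$, whereas you lower-bound every eigenvalue of the regularized covariance by $\sigma=0.01$ directly; the two bounds are essentially equivalent given positive semidefiniteness of $\CKx$, though your constant $(0.02\pi)^{-n/2}$ is the sharper (and dimensionally correct) one --- the paper's final line $\sigma^{-1/2}$ should read $\sigma^{-n/2}$. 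You also explicitly note positivity of $c^{\max}_X$, which the paper omits. No gaps; if anything your version fills in the PSD induction that the paper takes for granted.
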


\begin{proof}[Proof for Condition~\ref{cond:prior-normalized}]
Letting upper-case symbols denote non-terminals $N$ and lower-case words in
teletype denote terminals, the equivalent context-free grammar in Chomsky
normal form is given by:
\begin{align*}
K &\to CB_{[0.14]}
  \mid EB_{[0.14]}
  \mid GB_{[0.14]}
  \mid IB_{[0.14]}
  \mid PB_{[0.14]}
  \mid TM_{[0.135]}
  \mid VM_{[0.135]}
  \mid WX_{[0.03]}\\
C &\to \texttt{const} \\
E &\to \texttt{wn} \\
G &\to \texttt{lin} \\
I &\to \texttt{se} \\
P &\to \texttt{per} \\
B &\to \texttt{gamma}\\
T &\to \texttt{+}\\
V &\to \texttt{*}\\
W &\to \texttt{cp}\\
M &\to KK\\
X &\to BM,
\end{align*}
where subscripts indicate the production probabilities for non-terminals with
more than one production rule . Following the notation from \citet[Equation
2]{gecse2010}, define:
\begin{align*}
m_{ij}  & = \textstyle\sum_{k=1}^{r_i} p_{ik}n_{ikj} && (i,j=1,\dots,n) \\
\mbox{where } \\
r_i     & : \mbox{ number of productions with } N_i \in N \mbox{ premise},\\
p_{ik}  & : \mbox{ probability assigned to production } k \mbox { of } N_i, \\
n_{ikj} & : \mbox{ number of occurrence of } N_j \in N \mbox{ in production }
  k \mbox{ of } N_i.
\end{align*}
The expectation matrix $M \Coloneqq [m_{ij}]$ is given by:
\begin{align*}\setcounter{MaxMatrixCols}{20}
\kbordermatrix{%
 & K & C & E & G & I & P & B & T & V & W & M & X \\
K & 0 & 0.14 & 0.14 & 0.14 & 0.14 & 0.14 & 0.70 & 0.135 & 0.135 & 0.03 & 0.27 & 0.03 \\
C & 0 & 0 & 0 & 0 & 0 & 0 & 0 & 0 & 0 & 0 & 0 & 0 \\
E & 0 & 0 & 0 & 0 & 0 & 0 & 0 & 0 & 0 & 0 & 0 & 0 \\
G & 0 & 0 & 0 & 0 & 0 & 0 & 0 & 0 & 0 & 0 & 0 & 0 \\
I & 0 & 0 & 0 & 0 & 0 & 0 & 0 & 0 & 0 & 0 & 0 & 0 \\
P & 0 & 0 & 0 & 0 & 0 & 0 & 0 & 0 & 0 & 0 & 0 & 0 \\
B & 0 & 0 & 0 & 0 & 0 & 0 & 0 & 0 & 0 & 0 & 0 & 0 \\
T & 0 & 0 & 0 & 0 & 0 & 0 & 0 & 0 & 0 & 0 & 0 & 0 \\
V & 0 & 0 & 0 & 0 & 0 & 0 & 0 & 0 & 0 & 0 & 0 & 0 \\
W & 0 & 0 & 0 & 0 & 0 & 0 & 0 & 0 & 0 & 0 & 0 & 0 \\
M & 2 & 0 & 0 & 0 & 0 & 0 & 0 & 0 & 0 & 0 & 0 & 0 \\
X & 0 & 0 & 0 & 0 & 0 & 0 & 1 & 0 & 0 & 0 & 1 & 0
}.
\end{align*}
The eigenvalues of $M$ are $0.78512481$, $-0.67128139$, and $-0.11384342$ and
they are all less than 1 in absolute value. The conclusion follows from
Theorem~\ref{thm:pcfg-consistency} in the main text.
\end{proof}

\begin{proof}[Proof for Condition~\ref{cond:likelihood-normalized}]
The semantic function $\Denotation[\Likelihood]{K}((\mathbf{x},\mathbf{y}))$ is
normalized since the expression represents a standard multivariate normal
distribution in $\mathbf{x}$, with covariance matrix $\CKx \Coloneqq
[c_{ij}]_{i,j=1}^{n}$, where $c_{ij} \Coloneqq \Denotation[\Covar]{K}(x_i)(x_j)$.
\end{proof}

\begin{proof}[Proof for Condition~\ref{cond:likelihood-bounded}]

We first establish the following Lemma:

\begin{lemma}\label{lem:minkowski-determinant}
If $A$ and $B$ be two $n\times{n}$ symmetric non-negative, positive
semi-definite matrices, then
\begin{align}
\abs{A + B} \ge \abs{A} + \abs{B},
\end{align}
where $\abs{\cdot}$ denotes the matrix determinant.
\end{lemma}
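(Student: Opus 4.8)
The plan is to prove this as the determinantal analogue of concavity, reducing everything to the case where $A$ is nonsingular; it is essentially the classical Minkowski determinant inequality with the exponent $1/n$ dropped. First I would dispose of the possibility that $A$ is singular by a perturbation argument: it suffices to establish the bound when $A$ is positive definite, because for a general positive semi-definite $A$ one can apply the nonsingular case to $A + \epsilon I \succ 0$ to obtain $\abs{(A+\epsilon I) + B} \ge \abs{A + \epsilon I} + \abs{B}$ for every $\epsilon > 0$, and then let $\epsilon \to 0^+$, since $\abs{\cdot}$ is a polynomial (hence continuous) function of the matrix entries. The same device covers singular $B$ by symmetry.

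Assuming now that $A \succ 0$, I would write $A = A^{1/2} A^{1/2}$ using its symmetric positive-definite square root and factor it out of the determinant:
\begin{align*}
\abs{A + B}
  = \abs{A^{1/2}\bigl(I + A^{-1/2} B A^{-1/2}\bigr) A^{1/2}}
  = \abs{A}\cdot\abs{I + C},
\end{align*}
where $C \Coloneqq A^{-1/2} B A^{-1/2}$ is again symmetric positive semi-definite, with eigenvalues $\lambda_1, \dots, \lambda_n \ge 0$. The crux is the elementary inequality
\begin{align*}
\abs{I + C} = \prod_{i=1}^{n}(1 + \lambda_i) \ge 1 + \prod_{i=1}^{n}\lambda_i = 1 + \abs{C},
\end{align*}
which holds because expanding $\prod_i(1 + \lambda_i)$ yields the constant term $1$, the top-degree term $\prod_i \lambda_i$, and additional cross terms that are all nonnegative and hence can be discarded.

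To finish, I would substitute back and use multiplicativity of the determinant together with $\abs{C} = \abs{A^{-1/2}}^2\,\abs{B} = \abs{B}/\abs{A}$, so that
\begin{align*}
\abs{A + B} = \abs{A}\,\abs{I + C} \ge \abs{A}\bigl(1 + \abs{C}\bigr) = \abs{A} + \abs{A}\,\abs{C} = \abs{A} + \abs{B}.
\end{align*}
The core computation is routine. The step I expect to need the most care is the reduction to the nonsingular case, since the factorization $A = A^{1/2}A^{1/2}$ and the inverse $A^{-1/2}$ are simply unavailable when $A$ is singular; the perturbation-and-limit argument is what must be stated precisely, along with the observation that $C$ inherits positive semi-definiteness (so that its eigenvalues are nonnegative and the product-expansion inequality applies).
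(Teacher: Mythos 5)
Your proof is correct, but it takes a genuinely different route from the paper. The paper invokes the classical Minkowski determinant inequality $\abs{A+B}^{1/n} \ge \abs{A}^{1/n} + \abs{B}^{1/n}$ as a cited black box and then obtains the additive form by raising both sides to the $n$-th power, expanding via the binomial theorem, and discarding the nonnegative cross terms. You instead give a self-contained elementary argument: reduce to $A \succ 0$ by perturbing to $A + \epsilon I$ and passing to the limit (legitimate, since the determinant is a polynomial in the entries), factor $\abs{A+B} = \abs{A}\,\abs{I + C}$ with $C = A^{-1/2} B A^{-1/2}$ symmetric positive semi-definite, and use $\prod_i(1+\lambda_i) \ge 1 + \prod_i \lambda_i$ for $\lambda_i \ge 0$, together with $\abs{C} = \abs{B}/\abs{A}$. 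All steps check out, including the inheritance of symmetry and positive semi-definiteness by $C$ and the continuity argument for the singular case. What your approach buys is independence from the full strength of Minkowski's inequality (whose standard proofs are themselves nontrivial); what the paper's approach buys is brevity, at the cost of an external citation. Incidentally, your derivation is cleaner than the paper's in one cosmetic respect: the paper writes an equality $\abs{A+B} = \sum_k \binom{n}{k}(\abs{A}^{1/n})^{n-k}(\abs{B}^{1/n})^k$ where the first relation should already be an inequality $\ge$ coming from Minkowski; your chain of inequalities has no such slip.
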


\begin{proof}
Minkowski's determinant inequality \citep[Theorem~4.1.8, p.115]{marcus1992}
states that for any non-negative $n\times{n}$ Hermitian matrices, we have:
\begin{align*}
(\abs{A + B})^{1/n} \ge \abs{A}^{1/n} + \abs{B}^{1/n}.
\end{align*}
Applying the binomial theorem gives
\begin{align*}
\abs{A + B}
  &= \sum_{k=0}^{n}\binom{n}{k} (\abs{A}^{1/n})^{n-k}(\abs{B}^{1/n})^{k} \\
  &= (\abs{A}^{1/n})^n + \dots + (\abs{B}^{1/n})^{n} \\
  &\ge (\abs{A}^{1/n})^n + (\abs{B}^{1/n})^{n} \\
  &= \abs{A} + \abs{B},
\end{align*}
where the inequality follows from the fact that $A$ and $B$ are positive
semi-definite, so that $\abs{A}$ and $\abs{B}$ are non-negative.
\end{proof}

Let $K \in \DomKernel$ be an expression in the Gaussian process domain-specific
modeling language, and define
\begin{align}
\begin{aligned}
\Denotation[\Likelihood]{K}((\mathbf{x}, \mathbf{y})) &=
  \exp\Big(-1/2
    \textstyle \sum_{i=1}^{n}y_i\left(
    \textstyle\sum_{j=1}^{n}(\lbrace
      [\Denotation[\Covar]{K}(x_i)(x_j) +
        0.01\delta(x_i,x_j)]_{i,j=1}^{n}
      \rbrace^{-1}_{ij})y_j
    \right) \\
  &\qquad
  -1/2\log\left\lvert{[\Denotation[\Covar]{K}(x_i)(x_j)
    + 0.01\delta(x_i,x_j)]_{i,j=1}^{n}}\right\rvert
  - (n/2)\log{2\pi}
  \Big)
\label{eq:lik-denotation}
\end{aligned}
\end{align}
to be the likelihood semantic function. We will show that there exists a finite
real number $c > 0$ such that
\begin{align*}
\forall\, \mathbf{x}, \mathbf{y} \in \mathbb{R}^{n\times 2}.\;
  0 < \sup
    \set{\Denotation[\Likelihood]{K}((\mathbf{x}, \mathbf{y}))
      \mid K \in \DomKernel}
    < c.
\end{align*}

Let $\CKx \Coloneqq [\Denotation[\Covar]{K}(x_i)(x_j)]_{i,j=1}^{n}$ be the
$n\times{n}$ covariance matrix defined by the DSL expression $K$ and input data
$\mathbf{x}$, $\mathbf{I}$ be the $n\times{n}$ identity matrix, $\sigma
\Coloneqq 0.01$ the minimum variance level, and $\abs{M}$ the determinant
of matrix $M$.

Then Eq~\eqref{eq:lik-denotation} can be written in matrix notation as:
\begin{align*}
\Denotation[\Likelihood]{K}((\mathbf{x}, \mathbf{y}))
&= \exp\left(-\frac{1}{2}
  \left(\mathbf{y}^{\intercal}
    (\CKx + \sigma\mathbf{I})^{-1} \mathbf{y}\right)
    -\frac{1}{2}\log\left(\lvert \CKx + \sigma\mathbf{I}\rvert\right)
    - (n/2)\log{2\pi}
\right)\\
&= ((2\pi)^n \lvert \CKx + \sigma\mathbf{I} \rvert)^{-1/2}
\exp\left(-\frac{1}{2}
  \left(\mathbf{y}^{\intercal}
    (\CKx + \sigma\mathbf{I})^{-1} \mathbf{y}\right) \right) \\
&\le ((2\pi)^n \abs{\CKx + \sigma\mathbf{I}}^{-1/2} \\
&\le \abs{\CKx + \sigma\mathbf{I}}^{-1/2} \\
&\le (\abs{\CKx} + \abs{\sigma\mathbf{I}})^{-1/2} \\
&\le (\abs{\sigma\mathbf{I}})^{-1/2} \\
&= (\sigma)^{-1/2}
\end{align*}
where the first inequality follows from the positive semi-definiteness of the
covariance matrix $\CKx + \sigma \mathbf{I}$ and the identity
$e^{-z} < 1$ $(z > 0)$; the second inequality from $(2\pi)^{-n/2} < 1$; the
third inequality from Lemma~\ref{lem:minkowski-determinant};
and the final inequality from the positive semi-definiteness of $\CKx$.

Since $K \in \DomKernel$ was arbitrary and the upper bound $\sigma^{-1/2}$ is
independent of $K$, the conclusion follows.
\end{proof}


\end{document}